\colorlet{mix}{red!50!black}
\newtheorem{observation}{Observation}
\colorlet{bscolor}{blue}
\newcommand{\mespshort}{{\sc ESP}\xspace}
\newcommand{\lesp}{\textsc{Eccentricity Shortest Path}} 
\newcommand{\esp}{\textsc{ESP}}
\newcommand{\mespfvs}{{\sc ESP}/${\sf fvs}+{\sf ecc}$\xspace}
\newcommand{\mespfappr}{{\sc ESP}/${\sf fvs}$\xspace}
\newcommand{\mespsvd}{{\sc ESP}/${\sf svd}$\xspace}
\newcommand{\mespdpd}{{\sc ESP}/${\sf dpd}$\xspace}
\newcommand{\mespdpdell}{{\sc ESP}/${\sf dpd}+{\sf ecc}$\xspace}
\newcommand{\mespcvd}{{\sc ESP}/${\sf cvd}$\xspace}
\newcommand{\dsfull}{{\sc Dominating Set}\xspace}
\newcommand{\ds}{{DS}\xspace}
\newcommand{\ty}{{\tilde{W}}\xspace}
\newcommand{\mespecc}{{\sc ESP}/$ecc$\xspace}
\newcommand{\annotatedmesp}{{\sc Annotated-ESP}\xspace}
\newcommand{\amesp}{{\sc Colorful Path-Cover}\xspace}
\newcommand{\disjointmesp}{{\sc Skeleton Testing}\xspace}
\newcommand{\probenskel}{{\sc  Ext-Skeleton Testing}\xspace}
\newcommand{\dmesp}{{\sc Skeleton Testing}\xspace}
\newcommand{\cP}{{\mathcal{P}}\xspace}
\newcommand{\OO}{{\mathcal{O}}\xspace}
\newcommand{\cF}{{\mathcal{F}}\xspace}
\newcommand{\cC}{{\mathcal{C}}\xspace}
\newcommand{\sch}{{\sf ch}\xspace}
\newcommand{\srem}{{\sf \eta}\xspace}
\newcommand{\ssign}{{\sf sg}\xspace}
\newcommand{\stype}{{\sf tp}\xspace}
\newcommand{\strue}{{\sf true}\xspace}
\newcommand{\sfalse}{{\sf false}\xspace}
\newcommand{\sP}{{(+)}\xspace}
\newcommand{\sM}{{(-)}\xspace}
\newcommand{\wh}{\textsf{W[2]}-hard}
\newcommand{\skel}{{\mathbb{S}}\xspace}
\newcommand{\enskel}{{\mathbb{ES}}\xspace}
\newcommand{\defparprob}[4]{
	\vspace{1mm}
	\noindent\fbox{
		\begin{minipage}{0.96\textwidth}
			\begin{tabular*}{\textwidth}{@{\extracolsep{\fill}}lr} #1  & {\bf{Parameter:}} #3 \\ \end{tabular*}
			{\bf{Input:}} #2  \\
			{\bf{Question:}} #4
		\end{minipage}
	}
	\vspace{1mm}
}
\newcommand{\defproblem}[3]{
		\vspace{2mm}
		\noindent\fbox{
			\begin{minipage}{0.96\textwidth}
				\begin{tabular*}{\textwidth}{@{\extracolsep{\fill}}lr} #1 \\ \end{tabular*}
				{\bf{Input:}} #2  \\
				{\bf{Question:}} #3
			\end{minipage}
		}
		\vspace{1mm}
	}
	\newcommand{\bscomment}[1]{\textcolor{bscolor}{BS:#1}}
	\newcommand{\Omit}[1]{}
	\newtheorem{reduction rule}{Reduction Rule}
	\newtcolorbox{mybox}[2][]{colbacktitle=white,colback=white,coltitle=black,title={#2},fonttitle=\bfseries,#1, left = 2mm, right = 2mm, breakable}
\begin{document}
%
%

	\title{Parameterized algorithms for Eccentricity Shortest Path Problem\thanks{A subset of the results of this paper were accepted to be presented at the 34th International Workshop on Combinatorial Algorithms (IWOCA 2023)} }

\titlerunning{Parameterized algorithms for Eccentricity Shortest Path Problem} 

\author{Sriram Bhyravarapu\inst{1}\and 
Satyabrata Jana\inst{1}\and
Lawqueen Kanesh\inst{2}\and 
Saket Saurabh\inst{1,3}\and
Shaily Verma\inst{1}}

\institute{The Institute of Mathematical Sciences, HBNI, Chennai, India\\
\email{\{sriram, satyabrataj,saket,shailyverma\}@imsc.res.in}\\
\and Indian Institute of Technology Jodhpur, India\\
\email{lawqueen@iitj.ac.in}\\
\and University of Bergen, Norway}

\authorrunning{Bhyravarapu et al.} 






\colorlet{bscolor}{blue}


	\maketitle
	\begin{abstract}
		Given an undirected graph $ G=(V,E) $ and an integer $ \ell $, the  \lesp~(\esp) problem asks to 
check if 
there exists a shortest path $P$ such that for every vertex $v\in V(G)$, there is a vertex $w\in P$ such that $d_G(v,w)\leq \ell$, where $d_G(v,w)$ represents the distance between $v$ and $w$ in $G$. 
Dragan and Leitert [Theor. Comput. Sci. 2017]  studied the optimization version of this problem 
which asks to find the minimum $\ell$ for ESP and showed that it 
is {\sf NP}-hard even on planar bipartite graphs with maximum degree 3. 
They also showed that \esp~is {\sf W}[2]-hard when parameterized by $ \ell $. 
On the positive side, 
Ku\v cera and Such\'y [IWOCA 2021] showed that ESP is fixed-parameter tractable (FPT) when parameterized by 
modular width, cluster vertex deletion set, 
maximum leaf number, 
or the combined parameters 
disjoint paths deletion set and $ \ell $. 
It was asked as an 
open question in the same paper, 
if \esp~is FPT parameterized by disjoint paths deletion set 
or feedback vertex set.  
We answer these questions and obtain the 
following results: 
\begin{enumerate}
	\item \esp{} is FPT 
	when 
	parameterized by disjoint paths deletion set, split vertex deletion set, 
	or the combined parameters feedback vertex set and $\ell$. 
	

    \item 
    A ($1+\epsilon$)-factor FPT approximation algorithm when parameterized by the feedback vertex set number.

	\item \esp{} is {\sf W}[2]-hard  parameterized by the chordal vertex deletion set.

\end{enumerate}
	\end{abstract}
 \keywords{Shortest path, Eccentricity, Chordal, Split, Feedback vertex set, FPT, W[2]-hardness } 
  
	\section{Introduction}

Given a graph $G=(V,E)$ and a path $P$, the \emph{distance} from a vertex $v\in V(G)$ to $P$ is $\min\{d_G(v, w)\mid w\in V(P)\}$, where 
$d_G(v,w)$ is the distance between $v$ and $w$ in $G$. 
Given a graph $G$ and a path $P$,  the \emph{eccentricity} of $P$, denoted by ${\sf ecc}_G(P)$,  with respect to $G$  is defined as the maximum over all of the shortest distances between each vertex of  $G$ and $P$. 
Formally, ${\sf ecc}_G(P)=\max \{d_G(u, P) | u\in V(G)\}$. 
 Dragan and Leitert \cite{dragan2017minimum} introduced the problem of finding a shortest path with minimum eccentricity, called the \textsc{Minimum Eccentricity Shortest Path} problem (for short MESP) in a given undirected graph. They found interesting connections between MESP and the \textsc{Minimum Distortion Embedding} problem and 
obtained 
a better approximation algorithm for \textsc{Minimum Distortion Embedding}. 
MESP may be seen as a generalization of 
the \textsc{Dominating Path} Problem \cite{faudree2017minimum} that asks to find a path such that every vertex in the graph either belongs to the path or has a neighbor in the path. 
In MESP, the objective is to find a shortest path $P$ in $G$ such that the eccentricity of $P$ is minimum. 
Throughout the paper, 
we denote the minimum value over the eccentricities of all the shortest paths in $G$ as the \emph{eccentricity of the graph} $G$, denoted by ${\sf ecc}(G)$. 
MESP has applications in transportation planning, fluid transportation, water resource management, and communication networks. 
\begin{figure}[ht!]
	\centering
	\includegraphics[width=.8\textwidth]{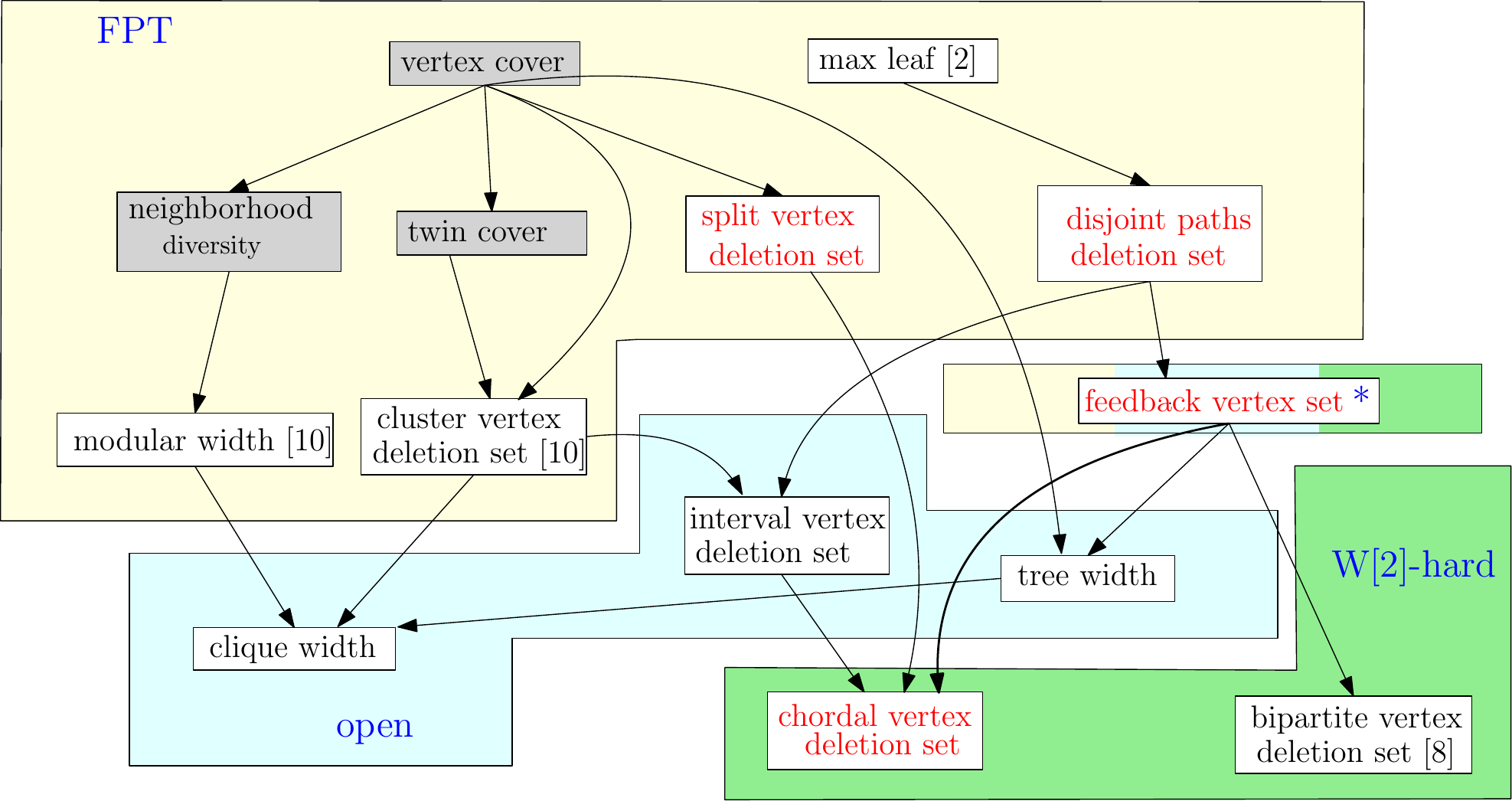}
	\caption{The hierarchy of parameters explored in this work. Arrow points from parameter $a$ to parameter $b$ indicates $b\leq f(a)$, for some computable function $f$. Parameters in red are studied in this paper. 
	The symbol ``$*$'' attached to the feedback vertex set means it is FPT in combination with the desired eccentricity. The grey box represents the result implied by those obtained. }
	\label{fig:parameter}
\end{figure}

\vspace{-3mm}

Dragan and Leitert \cite{dragan2015minimum}  demonstrated that fast algorithms for MESP imply fast approximation algorithms for \textsc{Minimum Line Distortion}, 
and the existence of low eccentricity shortest paths in special graph classes will imply low approximation bounds for those classes. 
They 
also showed that MESP is {\sf NP}-hard on planar bipartite graphs with maximum degree 3. 
In parameterized settings, they showed that MESP is {\sf W}[2]-hard for general graphs and gave an {\sf XP} algorithm for the problem when parameterized by eccentricity. 
Furthermore, they designed  2-approximation, 3-approximation, and 8-approximation algorithms for MESP running in time $O(n^3)$, $O(nm)$, and $O(m)$ respectively, where $n$ and $m$ represents the number of vertices and edges of the graph. 
The latter 8-approximation algorithm uses the double-BFS technique. 
In 2016, Birmel\'e et al.~\cite{birmele2016minimum} showed that the algorithm is, in fact, a 5-approximation algorithm by a deeper analysis of the double-BFS procedure and further extended the idea to get a 3-approximation algorithm, which still runs in linear time. 
Furthermore, they study the link between MESP and the laminarity of graphs introduced by Volk\'e et al.~\cite{volkel2016read} in which the covering path is required to be a diameter and established some tight bounds between MESP and the laminarity parameters. Dragan and Leitert \cite{dragan2015minimum} 
showed that MESP can be solved in linear time on distance-hereditary graphs
and in polynomial time on chordal and
dually chordal graphs. 
Recently, Ku\v cera and Such\'y \cite{kuvcera2021minimum} studied MESP with respect to some structural parameters and provided FPT algorithms for the problem
with respect to modular width, cluster vertex deletion (${\sf clvd}$), maximum leaf number, or the combined parameters 
disjoint
paths deletion (${\sf dpd}$) and eccentricity (${\sf ecc}$). 
We call the decision version of MESP, which is to check if there exists a shortest path $P$ 
such that for each $v\in V(G)$, the distance between $v$ and $P$ is at most $\ell$, as the \emph{Eccentricity Shortest Path} Problem (for short ESP). 
In this paper, we further extend the study of MESP in the parameterized setting. 
\vspace{-3mm}

\subsection{Our Results and Discussion}
\vspace{-1mm}
In this paper, we study the parameterized complexity of ESP with respect to the 
structural parameters: feedback vertex set (${\sf fvs}$), disjoint paths deletion set (${\sf dpd}$), split vertex deletion set (${\sf svd}$),  and chordal vertex deletion set (${\sf cvd}$). 
We call this version as \esp/$ \rho $, where $ \rho $ is the parameter. We now formally define 
\mespfvs 								(other problems can be defined similarly).

\defparprob{\mespfvs}{ An undirected graph $G$, a set $S\subseteq V(G)$ of size $k$ such that $G-S$ is a forest, and an integer $\ell$. }{$k+\ell$}{Does there exist a shortest path $P$ in $G$ such that for each $v\in V(G)$, $dist_G(v, P)\leq \ell$ ?}

First, we show an algorithm for \mespfvs, in Section \ref{sec:fvs}, that runs in $2^{\OO(k\log k)}\ell^kn^{\OO(1)}$ time where $\ell$ is the eccentricity of the graph and $k$ is the size of a feedback vertex set. In  Section \ref{sec-ptas}, 
   we design a ($1+\epsilon$)-factor FPT algorithm for \mespfappr.  Then, in Section \ref{sec:paths} we design an algorithm for 
   \mespdpd running in time $2^{\OO(k\log k)}\cdot n^{\OO(1)}$. 

In addition, we show that ESP/${\sf svd}$ admits an FPT algorithm. We then explore the problem on 
${\sf cvd}$ which is a generalization of 
${\sf fvs}$, ${\sf svd}$, ${\sf clvd}$ 
and show that ESP/${\sf cvd}$ is $W[2]$-hard. These results are 
presented in Sections  \ref{sec:split} and \ref{sec:hardness} respectively.

\section{Preliminaries}
 
 All the graphs considered in this paper are finite, unweighted, undirected, and connected. For standard graph notations, we refer to the graph theory book by R.~Diestel \cite{diestel2005graph}. For parameterized complexity terminology, we refer to the parameterized algorithms book by Cygan et al.~\cite{cygan2015parameterized}. 
 For $n \in \mathbb{N}$,  we denote the sets $\{1,2,\cdots, n\}$ and $\{0,1,2,\cdots, n\}$ by $[n]$ and $[0,n]$ respectively.
For a graph $G=(V,E)$, we use $n$ and $m$ to denote the number of vertices and edges of $G$. 
Given an integer $\ell$, we say that a path $P$ {\em covers} a vertex $v$ if there exists a vertex $u\in V(P)$ such that the distance between the vertices $u$ and $v$, denoted by, $d_{G}(v,u)$, is at most $\ell$. 
 A \emph{feedback vertex set} of a graph $G$ is a set $S\subseteq V(G)$ such that $G-S$	is acyclic.

In addition to feedback vertex set, we have considered the following structural parameters:

\begin{enumerate}

	\item A \emph{disjoint paths deletion set} ($dpd$) is a set $ S\subseteq V(G)$ such that $ G-S$ is a set of disjoint paths. 
	
	\item A \emph{split vertex deletion set} is a set $ S\subseteq V(G)$ such that $ G-S$ is a split graph, where a \emph{split graph} is a graph such that vertices of $G$ can be partitioned into two sets: one induces an independent set and other induce a clique. 
	
	\item A \emph{chordal vertex deletion set} is a set $ S\subseteq V(G)$ such that $ G-S$ is a chordal graph, where a chordal graph is a graph with no induced cycle of length at least 4. 
	
\end{enumerate}

Given a graph $G$, a feedback vertex et~\cite{cygan2015parameterized}, a disjoint path deletion set~\cite{kuvcera2021minimum}, a split vertex deletion set~\cite{cygan2013split}, and a chordal vertex deletion set~\cite{cao2016chordal},  of size $k$ can be found in FPT time parameterized by $k$.


Next, we state a few known results.

\begin{lemma} [Dragan and Leitert  \cite{dragan2017minimum}]\label{lem-fixed}
If a given graph $ G $ contains a shortest path with eccentricity $ \ell $, the MESP problem can be solved for $G$ in $\OO (n^{ 2\ell+2}m)$ time, where $ m $ is the number of edges in $ G $.
\end{lemma}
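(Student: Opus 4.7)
The plan is to solve \esp{} (and thus MESP by trying $\ell = 0, 1, \ldots, n$) by enumerating candidate shortest paths with a dynamic program on \emph{sliding windows} of $2\ell+1$ consecutive path vertices. For each candidate starting endpoint $u \in V(G)$ ($n$ choices), run a BFS from $u$ to obtain the distance layers $L_i = \{w : d_G(u,w) = i\}$; any shortest path starting at $u$ is a sequence $(u = x_0, x_1, x_2, \ldots)$ with $x_i \in L_i$ and $x_i x_{i+1} \in E(G)$.

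The key observation is that a vertex $w$ can be covered by $x_i$ only if $|i - d_G(u,w)| \leq \ell$, since $|d_G(u,x_i) - d_G(u,w)| \leq d_G(x_i,w) \leq \ell$. Thus the coverage status of $w$ depends only on the path positions in the window $[d_G(u,w) - \ell,\, d_G(u,w) + \ell]$, and is finalized once the path reaches position $d_G(u,w) + \ell$. This motivates a DP whose state is a window $(x_j, x_{j+1}, \ldots, x_{j+2\ell})$ with $x_{j+r} \in L_{j+r}$; the state is \emph{reachable} when there is a shortest-path prefix $(x_0, x_1, \ldots, x_{j+2\ell})$ consistent with the BFS-layers and adjacencies such that every vertex $w$ with $d_G(u,w) \leq j+\ell$ is already covered by some $x_i$ on the prefix.

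A transition extends the window by one vertex $x_{j+2\ell+1} \in L_{j+2\ell+1}$ adjacent to $x_{j+2\ell}$, and is legal precisely when every vertex at BFS-distance $j+\ell+1$ from $u$ is within distance $\ell$ of some vertex in the new window; by the observation, vertices at smaller BFS-distance remain covered inductively. Boundary cases ($j < \ell$ at the start, and the final window when the path ends) are handled by the analogous sub-window condition: the terminal window must additionally cover all vertices in the last $\ell$ layers from $u$. A shortest path of eccentricity at most $\ell$ then exists iff some accepting terminal window is reachable, and one can be recovered by standard DP backtracking.

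For the running time, there are at most $n^{2\ell+1}$ windows per choice of $u$, and each transition can be verified in $O(m)$ time using the precomputed BFS distances (to identify the newly finalized vertices and check coverage against the $2\ell+1$ window vertices). Multiplied by the $n$ choices of $u$, this yields the claimed $\OO(n^{2\ell+2}m)$ bound. The main obstacle is the correctness of the window-based coverage check---that restricting attention to a $(2\ell+1)$-vertex window faithfully captures the eccentricity constraint---which is precisely what the key observation above provides.
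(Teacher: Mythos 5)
The paper does not actually prove this lemma: it is imported verbatim from Dragan and Leitert~\cite{dragan2017minimum}, so there is no in-paper proof to compare against. Your sliding-window reconstruction is essentially the argument behind their XP algorithm, and your key observation --- that $|d_G(u,x_i)-d_G(u,w)|\leq d_G(x_i,w)$ forces any position covering $w$ to lie in a window of width $2\ell+1$ centred on the BFS layer of $w$ --- is exactly what makes the bounded state space and the layer-by-layer finalization work; the DP invariant (all layers up to $j+\ell$ covered once the prefix reaches position $j+2\ell$) is both necessary and sufficient, so merging prefixes by their last $2\ell+1$ vertices loses nothing. Two small points should be tightened. First, the time accounting: a transition is determined by $2\ell+2$ vertices (the old window plus the appended vertex), so ``$n^{2\ell+1}$ windows times $\OO(m)$ per transition'' undercounts by a factor of $n$; the fix is to note that the $\OO(m)$ coverage test depends only on the \emph{target} window (a multi-source BFS to depth $\ell$ from its $2\ell+1$ vertices against one BFS layer), so it is performed once per window, while the reachability propagation over the at most $n$ predecessors of each window costs $\OO(1)$ per pair, giving $n\cdot\bigl(n^{2\ell+1}m+n^{2\ell+2}\bigr)=\OO(n^{2\ell+2}m)$ since $m\geq n-1$ for connected $G$. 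Second, the terminal condition must also \emph{reject} whenever some vertex lies in a BFS layer more than $\ell$ beyond the last path position $p$ (no position can cover it, by the same key observation), and the final window has to certify layers $p-\ell+1$ through $p+\ell$, i.e.\ the last $2\ell$ layers rather than $\ell$. Neither issue affects the soundness of the approach.
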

\begin{lemma}[ 
Ku{\v{c}}era and Such{\'y} \cite{kuvcera2021minimum} ]
\label{lem:permutation} 
For any graph $G=(V,E)$, any set $M\subseteq V$, and any vertex $s\in V$, at most one permutation $\pi=(m_1,\dots, m_{|M|})$ of the vertices in $M$ exists, such that, there is a shortest path $P$ with the following properties: The first vertex on $P$ is $s$,  $P$ contains all vertices of $M$, and the vertices from $M$ appear on $P$ in exactly the order given by $\pi$. Moreover, given a precomputed distance matrix for $G$, the permutation $\pi$ can be found in $\OO(|M|\log |M|)$ time. 
\end{lemma}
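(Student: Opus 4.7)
The plan rests on a simple monotonicity property of shortest paths: if $P$ is a shortest path in $G$ starting at $s$, then for any two distinct vertices $u, v$ appearing on $P$ with $u$ preceding $v$, we have $d_G(s,u) < d_G(s,v)$. Indeed, the prefix of $P$ ending at $v$ has length $d_G(s,v)$ because $P$ is a shortest path, and this prefix passes through $u$ after exactly $d_G(s,u)$ edges, so the strict inequality follows because $u \ne v$ implies a nonzero number of edges remain on the sub-path from $u$ to $v$.

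Applying this observation to the vertices of $M$, any permutation $\pi = (m_1, \dots, m_{|M|})$ witnessing the statement must satisfy
\[
d_G(s, m_1) < d_G(s, m_2) < \cdots < d_G(s, m_{|M|}).
\]
Hence $\pi$ is forced to be the permutation that sorts $M$ in strictly increasing order of distance from $s$. If two vertices of $M$ share the same distance from $s$, no such $\pi$ can exist, which is consistent with the ``at most one'' conclusion. This establishes uniqueness.

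For the algorithmic claim, we simply sort the $|M|$ vertices by their distance to $s$. Each comparison requires a single lookup $d_G(s, m_i)$ in the precomputed distance matrix and therefore costs $\OO(1)$ time. Using any comparison-based sorting algorithm yields the permutation (or detects a tie in distances, in which case the algorithm reports that no valid $\pi$ exists) in $\OO(|M| \log |M|)$ time.

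No real obstacle appears in the proof; the entire argument is driven by the strict monotonicity of distances along a shortest path, which in turn gives both uniqueness and a direct sorting-based procedure. The only minor subtlety worth stating explicitly is the corner case $s \in M$, where the unique candidate places $s$ in the first position (because $d_G(s,s)=0$), and this is handled uniformly by the sorting step.
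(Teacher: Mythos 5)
Your proof is correct: the strict monotonicity of $d_G(s,\cdot)$ along a shortest path starting at $s$ (every prefix of a shortest path is itself a shortest path, so each vertex's position on $P$ equals its distance from $s$) forces the unique candidate permutation to be the one sorting $M$ by distance from $s$, and the sorting step gives the claimed $\OO(|M|\log|M|)$ bound. This is essentially the same argument as in the cited source of Ku\v{c}era and Such\'y; the present paper only states the lemma with a citation and gives no proof of its own.
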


\section{Parameterized by Feedback Vertex Set and Eccentricity} \label{sec:fvs}

In this section, we design an FPT algorithm for \mespfvs. 
The main theorem of this section is formally stated as follows. 

 \begin{sloppypar}
\begin{theorem}\label{thm:fvs}
	There is an algorithm for \mespfvs running in time $\OO(2^{\OO(k\log k)}\ell^kn^{\OO(1)})$. 
\end{theorem}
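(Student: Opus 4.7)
The plan is to brute-force the interaction of the sought path $P$ with the feedback vertex set $S$ and then to reduce the remaining task, on each tree of the feedback forest $F = G - S$, to a covering problem that can be solved by dynamic programming.

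In the guessing phase I would enumerate: the endpoints $s,t \in V(G)$ of $P$ (in $n^2$ ways), the subset $M = V(P) \cap S$ (in $2^k$ ways), the order $\pi = (m_1, \ldots, m_r)$ in which the vertices of $M$ appear along $P$ (in at most $k!$ ways), and, for each $v \in S \setminus M$, the distance $d_v = d_G(v,P) \in \{1, \ldots, \ell\}$ (in at most $\ell^{|S \setminus M|} \le \ell^k$ ways). This yields $n^2 \cdot 2^{\OO(k\log k)} \cdot \ell^k$ branches. For each branch, Lemma~\ref{lem:permutation} either rejects the branch or pins down a unique candidate skeleton $s = m_0, m_1, \ldots, m_r, m_{r+1} = t$; I then check $\sum_{i} d_G(m_{i-1}, m_i) = d_G(s,t)$ so that the skeleton extends to a shortest $s$--$t$ path in $G$.

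Since $G - S$ is a forest, every non-trivial arc $(m_{i-1}, m_i)$ of $P$ must traverse exactly one tree $T$ of $F$: it consists of an edge $m_{i-1}u$ with $u \in T$, the unique $T$-path from $u$ to some $w \in T$, and the edge $wm_i$, where $d_T(u,w) = d_G(m_{i-1}, m_i) - 2$. The task thus reduces to choosing, per arc, a \emph{realization} (a pair of attach points $u, w$ in a common tree) such that (i) realizations of arcs residing in the same tree are vertex-disjoint, (ii) every $v \in S \setminus M$ has $d_G(v, P) = d_v$ (matching the guess), and (iii) every vertex $x \in V(G)$ satisfies $d_G(x, P) \le \ell$. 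The critical observation is that for $x \in V(T)$, any shortest $G$-witness of $d_G(x, P) \le \ell$ either stays inside $T$ (so coverage reduces to $T$-distance from $x$ to a realization in $T$) or leaves $T$ through some $s' \in S$, in which case either $s' \in M \subseteq V(P)$ and $x$ is already covered, or $s' \in S \setminus M$ and the guessed $d_{s'}$ together with the realization closest to $s'$ determines coverage. Using the guessed data to precompute, for each tree $T$, the target set $B_T \subseteq V(T)$ of vertices that must be covered in-tree, the per-tree sub-problem becomes: select vertex-disjoint realizations for the at most $k+1$ arc-slots traversing $T$ such that every vertex of $B_T$ is within $T$-distance $\ell$ of the chosen realizations. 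This admits a rooted tree dynamic programming whose state records, per slot, the status of the slot in the current subtree and runs in $2^{\OO(k)} \cdot n^{\OO(1)}$ time per tree.

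The main technical obstacle I anticipate is the careful coverage bookkeeping that separates in-tree coverage from coverage via $M$ or via $S \setminus M$, so that every vertex of $G$ is attributed consistently to exactly one mode; the guessed $d_v$ values (the source of the $\ell^k$ factor) supply precisely the information required for this separation, and also enforce the verification constraint (ii). Combining the branches with the per-tree DPs yields the claimed running time of $\OO\!\left(2^{\OO(k\log k)}\, \ell^k\, n^{\OO(1)}\right)$.
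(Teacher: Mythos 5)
Your overall strategy mirrors the paper's: guess the endpoints, the set $M=V(P)\cap S$, the order $\pi$ in which $M$ appears on $P$, and a distance value $d_v\in[\ell]$ for each $v\in S\setminus M$ (this is precisely the paper's ``skeleton'' and the source of the $\ell^k$ factor), then reduce the residual problem to a vertex-disjoint realization/covering problem on the trees of $G-S$ solved by dynamic programming. There is, however, a concrete gap in how you enforce your constraint (ii), namely $d_G(v,P)\le d_v$ for every $v\in S\setminus M$. This inequality is needed for soundness, not just bookkeeping: you declare a vertex $x$ covered whenever $d_G(x,v)\le \ell-d_v$ for some $v\in S\setminus M$, so if the path you assemble never actually comes within distance $d_v$ of $v$, those vertices are left uncovered and the output path is invalid. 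The quantity $d_G(v,P)$ is a minimum over all segments of $P$, which live in \emph{different} trees of the forest, so no single per-tree DP can certify it, and your independently run per-tree DPs have no mechanism for deciding which tree is responsible for getting close to $v$. The paper resolves exactly this by including in the skeleton an additional function $g$ that records, for each $v\in S\setminus M$, the location (which $m_i$, or which segment $(m_i,m_{i+1})$) of the vertex of $P$ nearest to $v$; this costs only an extra $(2k+2)^k=2^{\OO(k\log k)}$ factor and allows the family of feasible subpaths for each segment to be pre-filtered to those achieving the required distances. Your proposal can be repaired the same way (or by having each tree's DP additionally carry the subset of $S\setminus M$ it takes responsibility for), but as written the verification of (ii) does not go through.

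A secondary, more minor issue: your per-tree DP state, ``per slot, the status of the slot in the current subtree,'' does not record the distance information needed to decide coverage of $B_T$. To verify that every vertex of $B_T$ is within tree-distance $\ell$ of a chosen realization, the DP must also track, at each node, the distance to the nearest already-selected path vertex below it and the distance to the farthest not-yet-covered vertex of $B_T$ below it (the paper's table carries both, contributing an $\ell^{\OO(1)}$ factor to the DP). This is standard for distance-$\ell$ domination on trees, but it should be part of the state, and it changes the claimed $2^{\OO(k)}n^{\OO(1)}$ per-tree bound to $2^{\OO(k)}\ell^{\OO(1)}n^{\OO(1)}$, which is still within the budget of the theorem.
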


 \end{sloppypar}


\noindent \textbf{Outline of the Algorithm.} Given a graph $G$ and a feedback vertex set $S$ of size $k$, 
we reduce \mespfvs to a ``path problem'' (which we call \amesp) on an auxiliary graph $G'$ (a forest) 
which is a subgraph of $G[V\setminus S]$, 
 using some reduction rules and two intermediate problems called \disjointmesp and \probenskel. In Section \ref{sec:outline2}, we show that \mespfvs and \disjointmesp are FPT-equivalent. Next, in Section \ref{sec:covering}, we reduce  \disjointmesp to  \probenskel. Then in Section \ref{sec:pathcover}, we reduce  \probenskel to  \amesp. Finally, in Section \ref{sec-path}, we design a dynamic programming based algorithm for \amesp that runs in 
$\OO(\ell^2 2^{\OO(k\log k)}n^{\OO(1)})$ time. Together with the time taken for the reductions to the intermediate problems, 
we get our desired  FPT algorithm. A flow chart for the steps of the algorithm is shown in Figure \ref{fig:outline}.

\begin{figure}[ht!]
	\centering
	\includegraphics[width=1\textwidth]{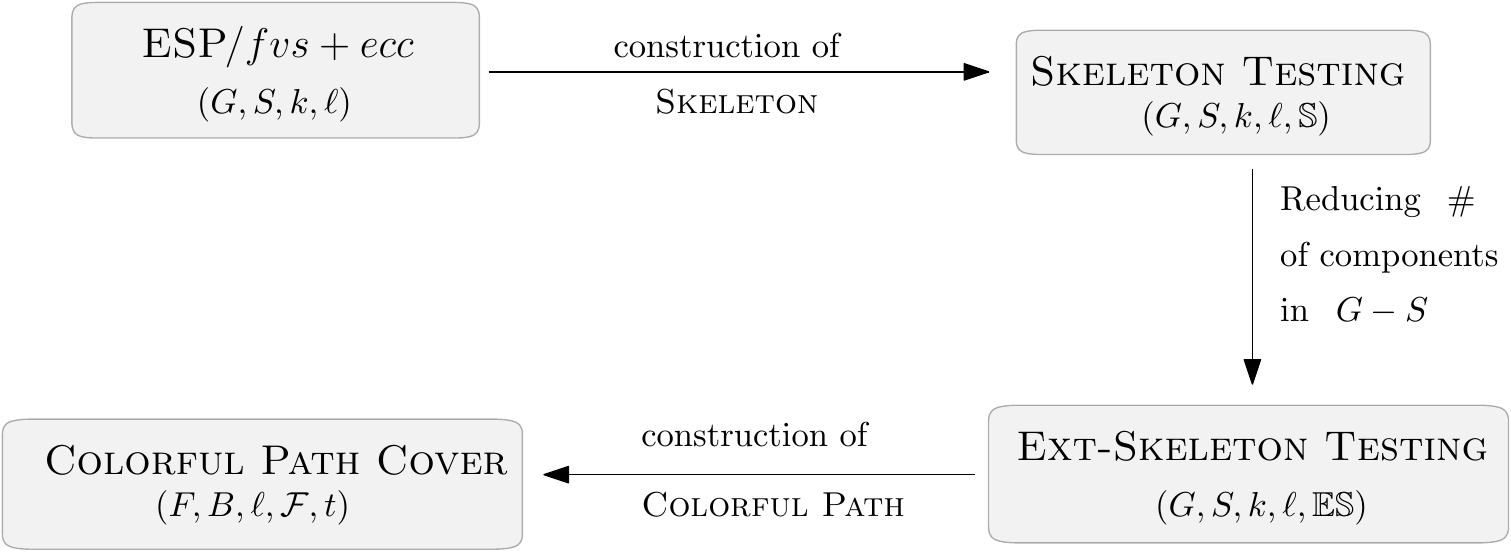}
	\caption{Flow chart of the Algorithm for \mespfvs.}
	\label{fig:outline}
\end{figure}

\vspace{-.75cm}

\subsection{Reducing to \disjointmesp}\label{sec:outline2}
The input to the problem is an instance $(G, S, k, \ell)$ where $S \subseteq V (G)$ is a
feedback vertex set of size $k$ in $G$. 
Let $(G,S, k,\ell)$ be a yes instance, and $P$ be a solution path 
which is a shortest path such that for each $v\in V(G)$, there exists $u\in V(P)$ such that $d_{G}(u,v)\leq \ell$.  
Our ultimate goal is to construct such a path $P$. 
Towards this, we try to get as much information as possible about $P$ in time $f(k,\ell) n^{\OO(1)}$. Observe that if $S$ is an empty set, then we can obtain $P$ by just knowing its end-points as there is a unique path in a tree between any two vertices. Generalizing this idea, given the set $S$, we define the notion of {\em skeleton} of $P$. 

 \begin{sloppypar}

\begin{definition}[Skeleton]\label{def:skeleton}
{\em A skeleton of $P$, denoted by $\skel$, is the following set of information. 
\begin{itemize}
\item End-vertices of $P$, say $u, v\in V(G)$. 
\item A subset of $S\setminus \{u,v\}$, say $M$, of vertices that appear on $P$. That is, 
$V(P)\cap (S\setminus \{u, v\})=M$. 
\item The order in which the vertices of $M$ appear on $P$, is given by an ordering $\pi=m_1,m_2, \dots, m_{|M|}$. For notational convenience, we denote $u$ by $m_0$ and $v$ by $m_{|M|+1}$. 
\item A distance profile $(f,g)$ for the set $X= S\setminus M$, 
 is defined as follows: The function 
 $f:X\rightarrow [\ell]$ such that $f(x)$ denotes the shortest distance of the vertex $x$ from $P$, and the function $g:X\rightarrow \{0,1,\cdots, |M|+1,(0,1),(1,2),\cdots, (|M|,|M|+1)\}$ such that $g(x)$ stores the information about the location 
 of the vertex on $P$, that is closest to $x$. That is, if the vertex closest to $P$ belongs to $\{m_0,m_1, \dots, m_{|M|},  m_{|M|+1}\}$ 
 then $g(x)$ stores this by assigning the corresponding index. 
 Else, the closest vertex belongs to the path segment between $m_{i},m_{i+1}$, for some $0\leq i\leq |M|$, which $g(x)$ stores by assigning $(i,i+1)$.

\end{itemize}
}
\end{definition}

 \end{sloppypar}

\begin{figure}[t!]
	\centering
	\includegraphics[width=.8\textwidth]{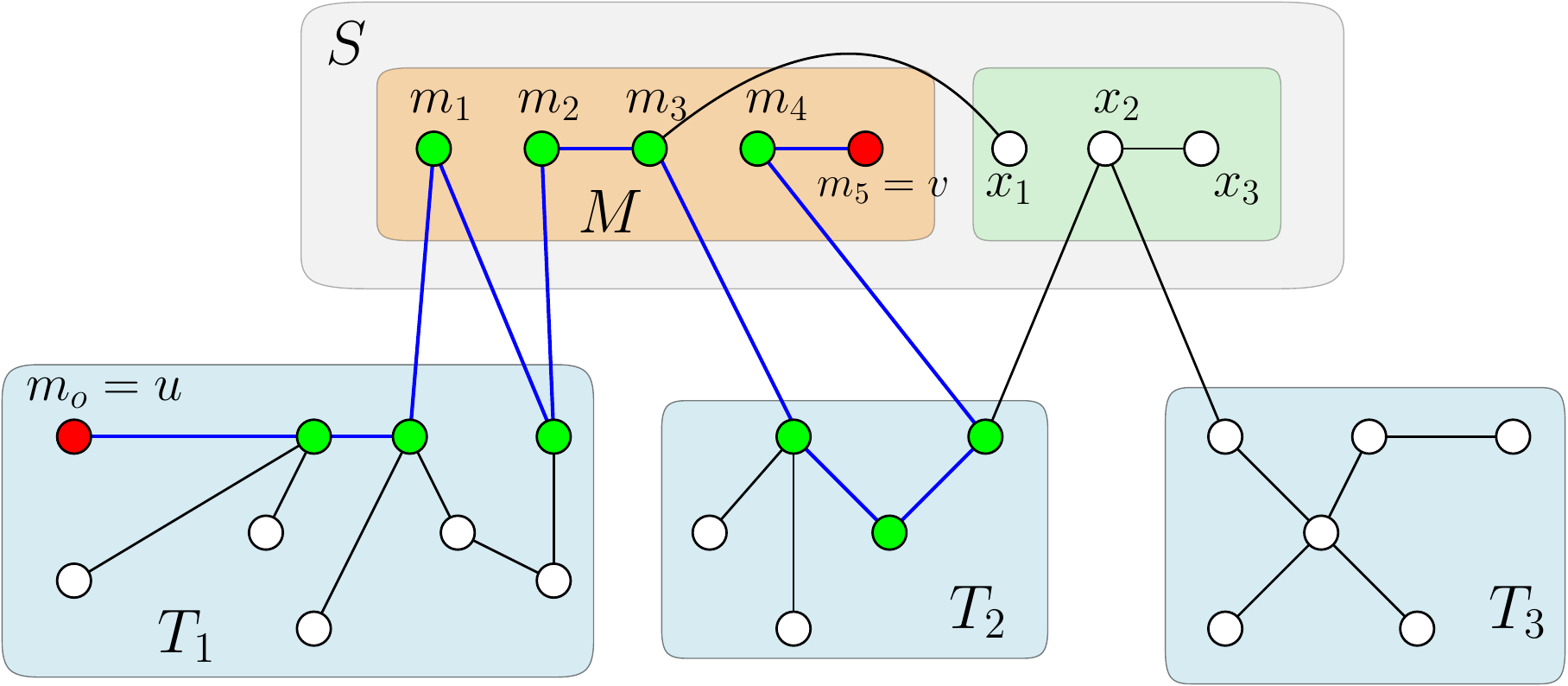}
	\caption{Example of a skeleton of $P$. Here $ P $ is a shortest path (blue edges) between two red colored vertices $ u $ and $ v $  through green colored internal vertices. For the vertices $x_1, x_2$ and $x_3$, 
 $f(x_1)= f(x_2)=1$,   
 $f(x_3)=2$,  
  $g(x_1)=3$,  $g(x_2)= g(x_3)=(3,4). $ }
	\label{fig:skeleton}
 \vspace{-.5cm}
\end{figure}

An illustration of a skeleton is given in Fig. \ref{fig:skeleton}. 
By following the definition of skeletons, we get an upper bound on them. 

\begin{observation}
\label{lem:skelnum}
The number of skeletons is upper bounded by $n^2 2^k  k!  \ell^k  (2k+2)^k$. 
\end{observation}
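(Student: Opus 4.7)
The plan is to prove the bound by a direct counting argument, multiplying together the number of admissible choices for each of the four components listed in Definition~\ref{def:skeleton}. Since the components are specified independently, their counts multiply, and each factor admits a clean combinatorial bound in terms of $n$, $k$, and $\ell$.

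First I would handle the end-vertices $u,v\in V(G)$, which contribute at most $n^2$ ordered pairs. Next, since $M\subseteq S\setminus\{u,v\}$ with $|S|\le k$, the number of choices for $M$ is at most $2^k$, and for each such $M$ the ordering $\pi$ is a permutation of a set of size at most $k$, contributing a factor of at most $k!$. Together these give the $2^k\cdot k!$ factor.

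It remains to bound the number of distance profiles $(f,g)$ on the set $X=S\setminus M$, where $|X|\le k$. For $f\colon X\to[\ell]$, the count is at most $\ell^{|X|}\le \ell^k$. For $g$, the codomain consists of the indices $\{0,1,\dots,|M|+1\}$ together with the segment labels $\{(0,1),(1,2),\dots,(|M|,|M|+1)\}$; this gives at most $(|M|+2)+(|M|+1)=2|M|+3$ possibilities per vertex of $X$. Since $|M|\le k$ (and in fact $|M|\le k-|\{u,v\}\cap S|$ refines the bound to $2k+2$ in the relevant worst case, since $|M|+|X|\le k$ forces a trade-off), we get at most $(2k+2)^{|X|}\le (2k+2)^k$ choices for $g$. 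Multiplying all factors yields the claimed bound $n^2\cdot 2^k\cdot k!\cdot \ell^k\cdot (2k+2)^k$.

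The argument has no real obstacle; the only care needed is in the bookkeeping for the codomain of $g$, where one must remember that both the $|M|+2$ vertex-indices and the $|M|+1$ open segments between consecutive skeleton vertices contribute, and that $|M|$ and $|X|$ satisfy $|M|+|X|\le k$ so their individual upper bounds cannot both be saturated simultaneously. This inequality is what tightens the naive $(2k+3)^k$ to the stated $(2k+2)^k$; even the looser bound would suffice for all subsequent FPT running-time claims, so no delicate analysis is required.
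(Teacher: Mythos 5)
Your proof is correct and matches the paper's approach: the paper offers no explicit argument beyond ``by following the definition,'' and the intended justification is exactly this product of independent counts ($n^2$ for endpoints, $2^k\cdot k!$ for $M$ and $\pi$, $\ell^k$ for $f$, and the codomain count for $g$). Your observation that $|X|=k-|M|$ forces $2|M|+3\le 2k+1$ whenever $X\neq\emptyset$ is the right way to land on $(2k+2)^k$ rather than the naive $(2k+3)^k$, and as you note either bound suffices for the running-time claims.
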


\noindent 
We say that a path $P$ {\em realizes}  a skeleton $\skel$ if the following holds. 
\vspace{-0.2cm}
\begin{enumerate}
     \item $M=S\cap  V(P)$,  $X\cap V(P)=\emptyset$, 
      the ordering of vertices in $M$ in $P$ is equal to $\pi$, 
      endpoints of $P$ are $m_0$ and $m_{|M|+1}$, 
      
      \item For each $v\in V(G)$, there exists a vertex $u\in V(P)$ such that $d_{G}(u,v)\leq \ell$, 
      
      \item For each $v\in X$, 
      $d_G(v,w)\geq f(v)$  for all $w\in V(P)$ (where $f(v)$ is the shortest distance from $v$ to any vertex on $P$ in $G$), and
      
      \item For each $v\in X$, if $g(v)=i$, where $i\in [0,|M|+1]$, then $d_G(v,m_i)=f(v)$ 
      and if $g(v)=(i,i+1)$ where 
      $i\in [0,|M|]$, then there exists a vertex $u$ on a subpath $m_i$ to $m_{i+1}$ in $P$ such that  $u\notin \{m_i,m_{i+1}\}$ and $d_G(u,v)=f(v)$. 
\end{enumerate}

\noindent Now, given an input $(G,S, k,\ell)$ and a skeleton $\skel$, our goal is 
to test whether the skeleton can be {\em realized} into a desired path $P$. This leads to the following problem.

\defparprob{\disjointmesp }
{A graph $G$, a set $S\subseteq V(G)$ of size $k$ such that $G-S$ is a forest, an integer $\ell$, and a skeleton $\skel$. } 
{$k+\ell$}
{Does there exist a shortest path $P$ in $G$ that realizes $\skel$? 
}


Our next lemma shows a reduction from \mespfvs to \disjointmesp problem. 

\begin{lemma}\label{lem:mespfvscorr}
	$(G,S,k,\ell)$ is a yes instance of \mespfvs, if and only if there exists a skeleton $\skel$ such that 
	$(G,S,k,\ell,\skel)$ is a yes instance of \dmesp. 
\end{lemma}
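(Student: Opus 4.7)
The plan is to prove both implications by direct construction, treating the lemma as essentially a bookkeeping equivalence between ``$P$ is a solution path'' and ``$P$ realizes some skeleton.''

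For the forward direction, I would assume $(G,S,k,\ell)$ is a yes instance with some witnessing shortest path $P$ and read off a skeleton $\skel_P$ directly from $P$: set $u, v$ to be the endpoints of $P$; let $M := V(P)\cap (S\setminus\{u,v\})$; let $\pi$ be the order in which the vertices of $M$ appear along $P$; and for each $x\in X := S\setminus M$, set $f(x)$ to be the shortest distance in $G$ from $x$ to $V(P)$, and $g(x)$ to record the position on $P$ of a chosen closest vertex, using the index $i$ when that vertex is $m_i$ and the pair $(i,i+1)$ when it lies strictly between $m_i$ and $m_{i+1}$ on $P$. Verifying the four realization conditions is then routine: item (1) and item (4) hold by construction, item (2) is exactly the hypothesis that $P$ covers every vertex of $G$ within distance $\ell$, and item (3) follows because $f(x)$ was taken to be the actual shortest distance from $x$ to $V(P)$. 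Hence $P$ realizes $\skel_P$, so $(G,S,k,\ell,\skel_P)$ is a yes instance of \dmesp.

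For the reverse direction, I would simply unpack the definitions. If $(G,S,k,\ell,\skel)$ is a yes instance witnessed by a realizing shortest path $P$, then condition (2) of realization states that every $v\in V(G)$ has some $u\in V(P)$ with $d_G(u,v)\le \ell$, and $P$ is a shortest path by assumption. Therefore $P$ itself certifies that $(G,S,k,\ell)$ is a yes instance of \mespfvs.

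I do not anticipate a real obstacle; the argument is just a matter of carefully matching the definitions. The only notational subtlety in the forward direction is that several vertices of $V(P)$ may tie for closest to a given $x\in X$, so the choice of $g(x)$ is not unique. Any fixed tie-breaking rule produces a valid skeleton, since the realization conditions only require that \emph{some} vertex on $P$ at the location declared by $g(x)$ achieve distance $f(x)$. Edge cases such as $u,v \in S$ (so that $u$ or $v$ could fall in $X$) can be absorbed by letting $f$ take value $0$ on those vertices and setting $g(u):=0$, $g(v):=|M|+1$; the verification is otherwise identical.
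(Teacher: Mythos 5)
Your proof follows essentially the same route as the paper's: in the forward direction you read the skeleton off the witnessing path, and in the backward direction you simply unpack realization condition (2), which is exactly what the paper does. The only divergence is your handling of the endpoint-in-$S$ edge case: the paper absorbs such endpoints into $M$ (it sets $M=V(P)\cap S$), keeping $X\cap V(P)=\emptyset$ and $f$ within its codomain $[\ell]=\{1,\dots,\ell\}$, whereas assigning $f$-value $0$ as you propose technically conflicts with both; this is a cosmetic patch to the same definitional wrinkle and does not change the substance or correctness of the argument.
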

\begin{proof}
	For the forward direction, suppose that $(G,S,k,\ell)$ is a yes instance of \mespfvs~and let $P$ be its solution path. Let $M= V(P)\cap S$, $|M|=t$ and $X=S\setminus M$. Let $\pi$ be the ordering of vertices in $M$ on $P$. Let $m_0,m_{|M|+1}$ be the first and last endpoints of $P$, respectively. We define $f:X\rightarrow [\ell]$ such that for a vertex $v\in X$,  $f(v)$ is the minimum distance from $v$ to path $P$. Let $u$ be the vertex in path $P$ such that $d(v,u)=f(v)$. Note that $f$ only assigns values from the set $[\ell]$. We define a function $g$, $g:X\rightarrow \{0,1,\cdots, t+1,(0,1),(1,2),\cdots, (t,t+1)\}$ as follows: $g(v)=i$ if $u=m_i$ for some $i\in [0,t+1]$,
	and $g(v)=(i,i+1)$ if $u$ is contained in $m_i$ to $m_{i+1}$ subpath for some $i\in [0,t]$. 
	Note that $g$ only assigns values from the set $\{0,1,\cdots, t+1,(0,1),(1,2),\cdots, (t,t+1)\}$. Observe that the tuple $(M,X,\pi,m_0,m_{t+1},f,g)$ is a skeleton of $P$. Hence, $(G,S,k,\ell,\skel)$  is a yes instance of \dmesp, where $\skel=(M,X,\pi,m_0,m_{t+1},f,g)$.
	
	In the backward direction, suppose that $\skel=(M,X,\pi,m_0,m_{t+1},f,g)$ is a skeleton such that $(G,S,k,\ell,\skel)$ is a yes instance of \dmesp and let $P$ be its solution. Observe that $P$ is also a solution to  $(G,S,k,\ell)$ of \mespfvs, as for every  $v\in V(G)$, there is a vertex $u\in V(P)$ such that $d_G(u,v)\leq \ell$.
\qed

\end{proof}

\noindent Observation~\ref{lem:skelnum} upper bounds the number of skeletons by $2^{\OO(k(\log k + \log \ell))}n^2$. This together with Lemma~\ref{lem:mespfvscorr}, implies that \mespfvs and \disjointmesp are FPT-equivalent. Thus, from now onwards, we focus on \disjointmesp. 

\subsection{Algorithm for \disjointmesp}\label{sec:disjoint}

\sloppy
Let $(G,S,k,\ell,\skel)$ be an instance of  \dmesp, where $\skel=(M,X,\pi,m_0,m_{|M|+1},f,g)$.  Our algorithm works as follows. First, the algorithm performs a simple sanity check by reduction rule. In essence, it checks whether the different components of the skeleton $\skel$ are valid. 

\begin{reduction rule}[Sanity Test $1$] \label{rr:dmesp1}
 Return that $(G,S,k,\ell,\skel)$ is a no instance of \dmesp, if one of the following holds:
\begin{enumerate}
    \item 
    For  $i\in [0,|M|]$, $m_im_{i+1}$ is an edge in $G$ and $g^{-1}((i,i+1))\neq \emptyset$. ({\sf $g$ is not valid.})
    \item 
    For a vertex $v\in X$, there exists a vertex $u\in M\cup \{m_0,m_{|M|+1}\}$ such that  $d_G(u,v)<f(v)$.   ({\sf $f$ is not valid.})
    \item 
    For a vertex $v\in X$, $g(v)=i$ and $d_G(v,m_i)>f(v)$.  ({\sf $f$ is not valid.})

    \item 
    For an $i\in [0,|M|]$, $m_im_{i+1}$ is not an edge in $G$, and there is either no $m_i$ to $m_{i+1}$ path in $G- (S\setminus\{ m_i, m_{i+1}\})$ or the length of the path is larger than the shortest path length of  $m_i$ to $m_{i+1}$ path in $G$. ({\sf $\pi$ is not valid.})
    \item 
    For $i,j\in [0,|M|]$, $i<j$, there exists $m_i$ to $m_{i+1}$ shortest path $P_i$ in $G-(S\setminus\{m_i,m_{i+1}\})$ and a $m_j$ to $m_{j+1}$ shortest path $P_j$ in $G-(S\setminus\{m_j,m_{j+1}\})$ such that if $j=i+1$, then  $(V(P_i)\setminus \{m_{i+1}\})\cap (V(P_j)\setminus \{m_{j}\})\neq \emptyset$, otherwise $V(P_i)\cap V(P_j)\neq \emptyset$.  ({\sf $\pi$ is not valid -- shortest path claim will be violated.})

    \item 
     For  $i\in [0,|M|]$ such that $m_im_{i+1}\notin E(G)$, $g^{-1}((i,i+1))\neq \emptyset$, and for every connected component $C$ in $G-S$, and for every $m_i$ to $m_{i+1}$ path $P$ in $G[V(C)\cup \{m_i,m_{i+1}\}]$ there exists a vertex $u\in g^{-1}((i,i+1))$ such that 
     there is no vertex $v\in V(P)\setminus \{m_i,m_{i+1}\}$ for which $d_G(u,v)=f(u)$. ({\sf $g$ is not valid.})
     
\end{enumerate}

 \end{reduction rule}
 
  \begin{lemma}\label{lem:rr1safe}
	Reduction rule \ref{rr:dmesp1}~is safe.
\end{lemma}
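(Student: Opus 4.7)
The plan is to prove safety by contrapositive: assume that $(G,S,k,\ell,\skel)$ is a yes-instance, so there is a shortest path $P$ in $G$ that realizes $\skel$ in the sense of Definition~\ref{def:skeleton} and the four realization conditions. I then show, condition by condition, that none of the six cases listed in Reduction Rule~\ref{rr:dmesp1} can hold for $P$, so whenever the rule fires the instance must indeed be a no-instance. Throughout, I use repeatedly that (i)~$V(P)\cap S=M$, so all internal vertices of any subpath of $P$ between consecutive $m_i,m_{i+1}$ lie in $V(G)\setminus S$, and (ii)~$f$ and $g$ are, by the realization definition, forced to encode exact distances and the closest point on $P$ for every $x\in X$.

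The first three cases are immediate consequences of the skeleton definition. For case~$1$, if $m_im_{i+1}\in E(G)$ and $P$ is a shortest path, then since $\pi$ fixes the order, the $m_i$--$m_{i+1}$ segment of $P$ is just the edge $m_im_{i+1}$, and in particular has no internal vertex; hence $g^{-1}((i,i+1))$ must be empty. Case~$2$ contradicts the realization condition $d_G(v,w)\ge f(v)$ for all $w\in V(P)$, since the offending $u\in M\cup\{m_0,m_{|M|+1}\}$ lies on $P$. Case~$3$ contradicts the realization condition that $g(v)=i$ implies $d_G(v,m_i)=f(v)$.

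For cases~$4$ and $5$, the key observation is that the portion of $P$ between consecutive selected vertices $m_i$ and $m_{i+1}$ is itself a shortest $m_i$--$m_{i+1}$ path in $G$ whose internal vertices avoid $S$, and is therefore a shortest $m_i$--$m_{i+1}$ path in $G-(S\setminus\{m_i,m_{i+1}\})$. This directly rules out case~$4$: such a subpath exists and has the correct shortest-path length. For case~$5$, I read the stated condition in its only sensible form, namely that \emph{every} choice of a shortest $m_i$--$m_{i+1}$ path $P_i$ in $G-(S\setminus\{m_i,m_{i+1}\})$ and every shortest $m_j$--$m_{j+1}$ path $P_j$ in $G-(S\setminus\{m_j,m_{j+1}\})$ intersects in the forbidden way. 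Applying this to the two subpaths of the solution $P$ produces a common internal vertex of two disjoint segments of $P$, contradicting the fact that $P$ is a simple path (the special case $j=i+1$ is handled by allowing the shared endpoint $m_{i+1}=m_j$).

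For case~$6$, consider the subpath of $P$ from $m_i$ to $m_{i+1}$ where $m_im_{i+1}\notin E(G)$. Its internal vertices lie entirely inside some connected component $C$ of $G-S$, so the subpath is an $m_i$--$m_{i+1}$ path in $G[V(C)\cup\{m_i,m_{i+1}\}]$. The hypothesis of case~$6$ then supplies a vertex $u\in g^{-1}((i,i+1))$ whose minimum distance $f(u)$ is not attained by any internal vertex of this subpath; but the realization of $\skel$ demands the existence of exactly such a witness on the $m_i$--$m_{i+1}$ segment, a contradiction. The main thing to watch out for in writing this up cleanly is case~$5$: one must state the intersection condition in a ``for every pair of shortest paths'' form so that the argument on the two concrete subpaths of $P$ actually applies, since otherwise the rule would be unsafe.
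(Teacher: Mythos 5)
Your handling of cases 1--4 and 6 matches the paper's (the paper dismisses most of these in a line each, and your write-up is if anything more careful). The problem is case 5. The rule as stated is existential: it fires as soon as \emph{some} shortest $m_i$--$m_{i+1}$ path $P_i$ in $G-(S\setminus\{m_i,m_{i+1}\})$ and \emph{some} shortest $m_j$--$m_{j+1}$ path $P_j$ intersect in the forbidden way. You declare this reading unworkable, rewrite the condition in a ``for every pair'' form, and then apply it to the two concrete segments of the solution $P$. That proves safety of a strictly weaker rule, not of Reduction Rule~\ref{rr:dmesp1} as written, and your parenthetical claim that the existential version ``would be unsafe'' is false. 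The paper proves exactly the existential version safe by an exchange argument you are missing: take \emph{any} such intersecting pair $P_i,P_j$ (they need not be subpaths of $P$), let $v^*$ be the last common vertex in the order along $P_i$; since $v^*\neq m_{i+1}$ and $P_i$ has the same length as the $m_i$--$m_{i+1}$ segment of $P$, one gets $d_{P_i}(m_i,v^*)<d_P(m_i,m_{i+1})$ and $d_{P_j}(v^*,m_{j+1})\leq d_P(m_j,m_{j+1})$, so splicing the $m_i$--$v^*$ prefix of $P_i$ with the $v^*$--$m_{j+1}$ suffix of $P_j$ yields an $m_i$--$m_{j+1}$ walk strictly shorter than the corresponding segment of $P$, contradicting that $P$ is a shortest path.

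This is not a cosmetic issue, because the later steps of the algorithm rely on the existential form. When constructing the families $\cF_i$ of feasible paths (Section~\ref{sec:pathcover}), the paper argues that $V(\cF_i)\cap V(\cF_{i'})=\emptyset$ precisely because item~5 of the rule is \emph{not applicable}, i.e.\ because \emph{no} pair of candidate shortest segment paths intersects. Under your universal reformulation, non-applicability of the rule no longer guarantees this disjointness, so the reduction to \amesp{} would break. You should restore the existential statement and supply the shortcut/exchange argument above for case~5.
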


\begin{proof}
	Proof of (1) follows from the fact that, to maintain the shortest path property of the solution, the edge $m_im_{i+1}$ must be in solution. It contradicts that for every $v\in g^{-1}((i,i+1))$, there exists a vertex $u$ in $m_i$ to $m_{i+1}$ subpath of the solution path such that $u\notin \{m_i,m_{i+1}\}$ and $d_G(u,v)=f(v)$. Proof of (2) follows from the fact that otherwise, the property of the function $g$ will be violated in the solution path. Proof of (4) follows from the shortest path property of the solution. Proofs of (3) and (6) are trivial as it is impossible to find a solution path in these cases. 
	
	Proof of (5): Suppose that  $(G,S,k,\ell,\skel)$ is a yes instance of \dmesp and $P$ be its solution. For $i,j\in [0,|M|]$, $i<j$, there exists $m_i$ to $m_{i+1}$ shortest path $P_i$ in $G-(S\setminus\{m_i,m_{i+1}\})$ and a $m_j$ to $m_{j+1}$ shortest path $P_j$ in $G-(S\setminus\{m_j,m_{j+1}\})$. Suppose that $j=i+1$, then  $(V(P_i)\setminus \{m_{i+1}\})\cap (V(P_j)\setminus \{m_{j}\})\neq \emptyset$. Let $v^*\in (V(P_i)\setminus \{m_{i+1}\})\cap (V(P_j)\setminus \{m_{j}\})$ be the last intersecting vertex on paths $P_i, P_j$  by ordering on path  $P_i$. Observe that distance of $m_i$ to $v^*$ in $P_i$ is strictly less that $m_i$ to $m_{i+1}(m_j)$ subpath on $P$, that is $d_{P_i}(m_i,v^*)<d_P(m_i,m_{i+1})$, as $v^*\neq m_{i+1}=m_j$ and both $P_i$ and $m_i$ to $m_{i+1}$ subpath on $P$ are shortest $m_i$ to $m_{i+1}$ paths in $G$. Similarly, $d_{P_j}(v^*,m_{j+1})\leq d_P(m_j,m_{j+1})$, as  both $P_j$ and $m_j$ to $m_{j+1}$ subpath on $P$ are shortest $m_j$ to $m_{j+1}$ paths in $G$. This implies that $d_P(m_i,m_{j+1})>d_{P_i}(m_i,v^*)+d_{P_j}(v^*,m_{j+1})$. By replacing $m_i$ to $m_{j+1}$ subpath in $P$ by the subpaths $m_i$ to $v^*$ of $P_i$ and $v^*$ to $m_{j+1}$ of $P_j$ we obtain a shorter path in $G$, which contradicts the shortest path property of the solution $P$. The case when $j\neq i+1$ can be argued analogously.	\qed
 \end{proof}

\noindent
\textbf{Reducing the components of $G-S$:} 
Now, we describe our marking procedure and reduction rules that are applied 
on the connected components in $G-S$. Let $P_i$ be a path segment (subpath) of $P$,  between  $m_i$ and $m_{i+1}$, with at least two edges. Further, let $P_i^{int}$ be the subpath of $P_i$, obtained by deleting $m_i$ and $m_{i+1}$. Then, we have that $P_i^{int}$ is a path between two vertices in $G-S$ (that is, a path in the forest $G-S$). This implies that $P$ is made up of $S$ and at most $k+1$ paths of forest $G-S$. Let these paths be ${\mathbb P}=P_1^{int},\ldots,P_q^{int}$, where $q\leq k+1$. Next, we try to understand these $k+1$ paths of forest $G-S$. Indeed, if there exists a component $C$ in $G-S$ such that it has a vertex that is far away from every vertex in $S$, then $C$ must contain one of the paths in $\mathbb{P}$ ({\em essential components}). The number of such components can be at most $k+1$. The other reason that a component contains a path from $\mathbb{P}$ is  to select a path that helps us to satisfy constraints given by the  $g$ function 
({\em $g$-satisfying components}).  Next, we give a procedure that marks $\OO(k)$ components, and later, we show that all unmarked components can be safely deleted. 

\medskip


\noindent{\bf Marking Procedure:} Let ${\cal C}^*$ be the set of marked connected components of $G-S$. Initially, let $\cC^*=\emptyset$. 
\begin{itemize}
    \item \noindent{\bf Step 1.} If there exists a connected component $C$ in $G-(S\cup V(\cC^*))$, such that it contains a vertex $v$ with $d_G(v,m_i)>\ell$, for all $m_i\in M$, 
    and $d_G(v,u)>\ell-f(u)$, for all $u\in X$, then add  $C$ to $\cC^*$. ({\sf Marking essential components})
  \item \noindent{\bf Step 2.} For $i=0$ to $|M|$ proceed as follows: Let $C$ be some connected component in $G-(S\cup V(\cC^*))$ such that there exists a $m_i$ to $m_{i+1}$ path $P_i$ in $G[V(C)\cup \{m_i,m_{i+1}\}]$, which is a shortest  $m_i$ to $m_{i+1}$ path in $G$ and for every vertex $v\in g^{-1}((i,i+1))$, there exists a vertex $u\in V(P_i)\setminus \{m_i,m_{i+1}\}$ for which $d_G(u,v)=f(v)$. Then, add  $C$ to $\cC^*$ and increase the index $i$.  ({\sf Marking $g$-satisfying components})
\end{itemize}

Let $\cC_1$ be the set of connected components added to $\cC^*$ in Step 1. We now state a few reduction rules the algorithm applies exhaustively in the order in which they are stated.



\vspace{-0.2cm}
 \begin{sloppypar}
 \begin{reduction rule}\label{rr:dmesp2}
 If $|\cC_1|\geq k+2$, then return that $(G,S,k,\ell,\skel)$ 
 is a no instance of \dmesp.
 \end{reduction rule}
  \end{sloppypar}

\vspace{-0.5cm}
\begin{lemma}\label{lem:mark1}
	Reduction rule \ref{rr:dmesp2}~is safe. 
\end{lemma}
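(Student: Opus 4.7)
The plan is to show that any path $P$ realizing $\skel$ can have its vertex set meet at most $k+1$ connected components of $G-S$, and that every component placed in $\cC_1$ must be met by $V(P)$. Put together, these two facts force $|\cC_1|\leq k+1$ whenever the instance is a yes instance of \dmesp, so $|\cC_1|\geq k+2$ certifies a no instance.

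First, I would bound the number of components of $G-S$ that a realizing path $P$ can visit. The anchors $m_0,m_1,\dots,m_{|M|+1}$ partition $P$ into $|M|+1$ segments. By the definition of a skeleton, the non-anchor vertices of each such segment lie outside $S$, and since $G-S$ is a forest, those internal vertices lie inside one connected component of $G-S$. Consequently the number of components of $G-S$ that intersect $V(P)$ is at most $|M|+1\leq k+1$.

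Next, I would argue that every $C\in\cC_1$ must satisfy $V(C)\cap V(P)\neq\emptyset$. Let $v_C\in V(C)$ be the witness vertex used by Step~$1$ of the marking procedure to add $C$, and let $w\in V(P)$ be a vertex with $d_G(v_C,w)\leq \ell$, which exists because $P$ realizes $\skel$. If $w\in V(C)$ we are done; otherwise any shortest $v_C$-to-$w$ path $Q$ must leave $C$, so it contains a first vertex $s\in S$, and we obtain $\ell\geq d_G(v_C,w)\geq d_G(v_C,s)+d_G(s,w)$. If $s\in M$, then $d_G(v_C,s)\leq \ell$, directly violating the first marking condition. If instead $s\in X$, then $s\notin V(P)$, so $d_G(s,w)\geq d_G(s,P)=f(s)$, yielding $d_G(v_C,s)\leq \ell-f(s)$ in contradiction with the second marking condition. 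The endpoint vertices $m_0$ and $m_{|M|+1}$ are covered uniformly: when they are in $S$ they fall in $X$ with $f=0$, so the $s\in X$ analysis applies; when they are not in $S$ they already belong to some component of $G-S$, and any $v_C$-to-$w$ path that exits $C$ still encounters an $s\in S$ first.

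Combining, at most $k+1$ components of $G-S$ meet $V(P)$, while each marked essential component must meet it, forcing $|\cC_1|\leq k+1$. Therefore $|\cC_1|\geq k+2$ rules out the existence of a realizing path, which proves the rule safe. The step that needs the most care is the endpoint analysis, since the skeleton treats $m_0$ and $m_{|M|+1}$ asymmetrically from the elements of $M$, and I would want to verify carefully that the covering vertex $w$ landing on an endpoint, or the first exit vertex $s$ coinciding with an endpoint, does not escape the case analysis above.
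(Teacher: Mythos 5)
Your proof is correct and follows the same two-step argument as the paper: every component placed in $\cC_1$ must contribute a subpath of (equivalently, intersect) the realizing path $P$, and $P$ decomposes into at most $|M|+1\leq k+1$ such subpaths of $G-S$, forcing $|\cC_1|\leq k+1$. The only difference is that you make explicit, via the first-exit-vertex case analysis, the step the paper merely asserts (that the witness vertex cannot be covered from outside its component), which is a welcome but not conceptually different elaboration.
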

\begin{proof}
	For each component $C$ in $\cC_1$, $C$ contains a vertex $v$ such that $d_G(v,m_i)>\ell$, for all $m_i\in M$ and $d_G(v,u)>\ell-f(u)$, for all $u\in X$, which implies we must add a path from component $C$ in solution path as a subpath such that it contains a vertex that covers $v$. Observe that we can add at most $|M|+1$ subpaths in the solution path. Therefore, $|\cC_1|\leq |M|+1$ if 
	$(G,S,k,\ell,\skel)$  
	is a yes instance of \dmesp. We obtain the required bound as $|M|\leq k$. \qed \end{proof}

  \begin{sloppypar}
 \begin{reduction rule}\label{rr:dmesp3}
 If there exists a connected component $C$ in $G-S$ such that $C\notin \cC^*$, then delete $V(C)$ from $G$. The resultant instance is $(G-V(C),S,k,\ell,\skel)$.  
 \end{reduction rule}

\end{sloppypar}

\begin{lemma}\label{lem:rr3}
	Reduction rule \ref{rr:dmesp3}~is safe.
\end{lemma}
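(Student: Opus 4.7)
The plan is to prove equivalence of $(G,S,k,\ell,\skel)$ and $(G-V(C),S,k,\ell,\skel)$ as \disjointmesp instances. Since $C$ is a connected component of $G-S$, we have $V(C)\cap S=\emptyset$, so the skeleton $\skel=(M,X,\pi,m_0,m_{|M|+1},f,g)$ is meaningful on both graphs.

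For the backward direction, any realization $P'$ in $G-V(C)$ is already a path in $G$ with the correct endpoints, visits $M$ in order $\pi$, and avoids $X$. I only need to verify that (i) $P'$ is a shortest $m_0$-$m_{|M|+1}$ path in $G$ and (ii) every $v\in V(C)$ is $\ell$-covered by $V(P')$ in $G$. For (i), Reduction Rule~\ref{rr:dmesp1}(4)--(5) together let me build a shortest $m_0$-$m_{|M|+1}$ path in $G$ that avoids $V(C)$, so the length of $P'$ matches $d_G(m_0,m_{|M|+1})$. For (ii), since $C\notin\cC_1$ (it was not marked in Step~1), each $v\in V(C)$ satisfies either $d_G(v,m_i)\leq\ell$ for some $m_i\in M\subseteq V(P')$ or $d_G(v,u)\leq\ell-f(u)$ for some $u\in X$, and in the latter case the triangle inequality through a $P'$-vertex that witnesses $f(u)$ gives an $\ell$-covering.

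The forward direction is the substantive one. Given a realization $P$ of $\skel$ in $G$, if $V(P)\cap V(C)=\emptyset$ we are done. Otherwise, because $C$ is a connected component of the forest $G-S$, the intersection $V(P)\cap V(C)$ decomposes into maximal subpaths of $P$, each of the form of a shortest $m_i$-$m_{i+1}$ path lying entirely in $G[V(C)\cup\{m_i,m_{i+1}\}]$ for some $i\in[0,|M|]$. For each such $i$ I will substitute this $C$-subpath with a shortest $m_i$-$m_{i+1}$ path $Q_i$ drawn from a component in $\cC^*$, chosen as follows. If $g^{-1}((i,i+1))\neq\emptyset$, then the $P$-subpath through $C$ witnesses that $C$ is $g$-satisfying for index $i$, so Step~2 at iteration $i$ cannot have failed: some component $C_i^\star\in\cC^*$ was marked, and the path $Q_i$ guaranteed by Step~2 inside $C_i^\star$ serves. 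If $g^{-1}((i,i+1))=\emptyset$, then any shortest $m_i$-$m_{i+1}$ path outside $C$ suffices, for example the edge $m_im_{i+1}$ when present, or a path through a marked component. The stitched path $P'$ has the same length as $P$, visits $M$ in order $\pi$, is simple by Reduction Rule~\ref{rr:dmesp1}(5) which forbids distinct shortest segments from colliding, and is disjoint from $V(C)$.

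The main obstacle is verifying that $P'$ realizes the full skeleton in $G-V(C)$, in particular the distance-profile constraints for $X$. For every $v\in X$ and every internal vertex $w$ of a substituted segment $Q_i$, I must argue $d_G(w,v)\geq f(v)$: when $g(v)\neq(i,i+1)$ this is ensured by Reduction Rule~\ref{rr:dmesp1}(2)--(3) applied to the pre-sanitised skeleton, since a closer vertex would violate either the lower bound on $f(v)$ or the $g(v)$ locator; when $g(v)=(i,i+1)$, Step~2 explicitly guarantees an internal vertex $u\in V(Q_i)$ with $d_G(u,v)=f(v)$, and a strictly closer vertex would again contradict the sanity bounds on $f(v)$. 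The $\ell$-covering of $V(G)\setminus V(C)$ by $P'$ is handled exactly as in the backward direction via the non-essentiality of $C$. Combined, these checks give that $P'$ realizes $\skel$ in $G-V(C)$, completing the safety argument.
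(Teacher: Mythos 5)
Your proof follows essentially the same route as the paper's: the backward direction uses the fact that $C$ was not marked in Step~1 (so every vertex of $C$ is covered via some $m_i$ or via some $u\in X$ with $d_G(v,u)\le \ell-f(u)$), and the forward direction replaces each $m_i$-to-$m_{i+1}$ segment of $P$ lying in $C$ by a shortest segment from a component marked in Step~2 and verifies the stitched path still realizes $\skel$. If anything you are slightly more explicit than the paper on two points it asserts without detail (that the substituted path remains a shortest path in $G$, and that the $f$/$g$ profile survives the substitution), so the argument is correct and not a different approach.
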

\begin{proof}
    We show that $(G,S,k,\ell,\skel)$ is a yes instance of \dmesp if and only if $(G-V(C),S,k,\ell,\skel)$ is a yes instance of \dmesp. Recall that $\skel=(M,X,\pi,m_0,m_{|M|+1},f,g)$. In the forward direction, consider that 
	$(G,S,k,\ell,\skel)$ is a yes instance of \dmesp and $P$ be its solution. Suppose that $P$ contains a $m_i$ to $m_{i+1}$ subpath $P^*$ such that $P^*$ is a path in $G[V(C)\cup \{m_i,m_{i+1}\}]$. Since $C$ is not in $\cC^*$, there exists a connected component $C'\in \cC^*$ such that there exists a $m_i$ to $m_{i+1}$ path $P'$ in $G[V(C')\cup \{m_i,m_{i+1}\}]$, such that $P'$ is a shortest  $m_i$ to $m_{i+1}$ path in $G$ and for every vertex $v\in g^{-1}((i,i+1))$, $d_G(v,P'-\{m_i,m_{i+1}\})=f(v)$. We replace subpath  $P^*$ from $P$ by subpath $P'$. Let $P''$ is the resultant path. We claim that $P''$ is also a solution to 
	$(G,S,k,\ell,\skel)$. 
	Observe that $P''$ satisfies functions $f$ and $g$. Suppose that there is a vertex $v\in V(G)$ such that $d_G(v,P'')> \ell$. Then $v$ must be in $C$. Since $C$ is not added to $\cC^*$ in Step 1 of the marking procedure, every vertex in $C$ is either at distance at most $\ell$ from some $m_i,i\in [0,|M|+1]$ or $d_G(v,u)\leq \ell-f(u)$ for some $u\in X$. Hence $v$ is covered by some vertex on $P''$, a contradiction.
	So $ P'' $ is also a solution to $(G, S, k, \ell, \skel)$ and $P''$ does not contain any vertex from $C$. As every vertex of $C$ gets covered,   $P''$  is also a solution to $(G - V (C), S, k, \ell, \skel)$.	If $ P $ contains no vertex from $ C $ then by the similar procedure   we can obtain  a solution path $\hat{P}$ for 
	$(G,S,k,\ell,\skel)$ which doesn't contain any vertex of $C$. Observe that $\hat{P}$ is also a solution to $(G-V(C),S,k,\ell,\skel)$ 
	of \dmesp. 
	
	In the backward direction suppose that 
	$(G-V(C),S,k,\ell,\skel)$ 
	is a yes instance of \dmesp and let $P$ be its solution. Since $C$ is not added to $\cC^*$ in Step 1 of the marking procedure, every vertex in $C$ is either  distance at most $\ell$ from some $m_i,i\in [0,|M|+1]$ or $d_G(v,u)\leq \ell-f(u)$ for some $u\in X$, therefore every vertex in $C$ is covered by some vertex on $P$. This implies that $P$ is also a solution to 
	$(G,S,k,\ell,\skel)$. This completes the proof. 
\qed
\end{proof}

Observe that when Reduction rule \ref{rr:dmesp2}~and Reduction rule \ref{rr:dmesp3}~are no longer applicable, the number of connected components in $G-S$ is bounded by $2(k+1)$. This is because $|\cC_1|\leq k+1$ and 
there exists a path (that is part of the solution) from each component in $\cC^*-\cC_1$ and therefore $|\cC^*-\cC_1|\leq k+1$. 
Otherwise, the given instance is a no instance of \dmesp. 
Notice that all our reduction rules can be applied in $n^{\OO(1)}$ time.

 \begin{sloppypar}

  \vspace{-3mm}


\subsection{Reducing \dmesp to \probenskel:} \label{sec:covering}
 \vspace{-1mm}
 Let $(G,S,k,\ell,\skel)$ be a reduced instance of  \dmesp. That is, an instance on which Reduction Rules~ \ref{rr:dmesp1}, \ref{rr:dmesp2}~and \ref{rr:dmesp3}~are no longer applicable. This implies that the number of connected components in $G-S$ is at most $2k+2$. Next, we enrich our skeleton by adding a function $\gamma$, which records  an index of a component in $G-S$ that gives the $m_i$ to $m_{i+1}$ subpath in $P$ or records that $m_im_{i+1}$ is an edge in the desired path $P$, where $i\in [0,|M|]$.
 \vspace{-1mm}

\begin{definition}[Enriched Skeleton] \label{def:enriched}
 An \emph{enriched skeleton} of a path $P$, denoted by $\enskel$, contains $\skel$ and a
 segment profile of paths between $m_i$ and $m_{i+1}$, for $i\in[0,M]$. Let  $C_1,C_2,\dots, C_{q}$ be the connected components in $G-S$. Then, the segment profile is given by a function $\gamma:[0,|M|]\rightarrow [0,q]$. 
The function $\gamma$ represents the following: For each $i\in [0,|M|]$, if $\gamma(i)=0$, then the pair $m_i,m_{i+1}$ should be connected by an edge in the solution path $P$, 
otherwise if $\gamma(i)=j$, then in $P$, 
the $m_i$ to $m_{i+1}$ subpath is contained in $G[V(C_j)\cup \{m_i,m_{i+1}\}]$. 
Also, $\enskel$ is said to be enriching the skeleton $\skel$.
\end{definition}
\vspace{-1mm}

Let $\skel$ be a skeleton. The number of $\enskel$, that enrich  $\skel$ is upper bounded by $(q+1)^{k+1}$. Thus, this is not useful for us unless $q$ is bounded by a function of $k,\ell$. Fortunately, the number of connected components in $G-S$ is at most $2k+2$, and thus the number of $\enskel$ is upper bounded by $2^{\OO(k\log k)}$. 

We say that a path $P$ {\em realizes}  an enriched skeleton $\enskel$ enriching $\skel$, if $P$ realizes $\skel$ and satisfies $\gamma$. Similar to \dmesp, we can define \probenskel, where the aim is to test if a path exists that realizes an enriched skeleton $\enskel$. Further, it is easy to see that \dmesp and \probenskel are FPT-equivalent, and thus we can focus on \probenskel. Let $(G,S,k,\ell,\enskel)$ be an instance of  \probenskel, where $G-S$ has at most $2k+2$ components. Similarly, as  \dmesp, we first apply some sanity testing on an instance of  \probenskel. 

\begin{reduction rule}[Sanity Test $2$] \label{rr:sanitytest2}
 Return that $(G,S,k,\ell,\enskel)$ is a no instance of \probenskel, if one of the following holds:
\begin{enumerate}
    \item $m_im_{i+1}$ is an edge in $G$ and $\gamma(i)\neq 0$, (or) $m_im_{i+1}$ is not an edge in $G$ and $\gamma(i)= 0$. 
    \item For an $i\in [|M|]$, $\gamma(i)=j\neq 0$ and there is, 
    \begin{itemize}
        \item No $m_i$ to $m_{i+1}$ path in $G[V(C_j)\cup \{m_i,m_{i+1}\}]$, (or) 
        \item No $m_i$ to $m_{i+1}$ path in $G[V(C_j)\cup \{m_i,m_{i+1}\}]$ which is also a shortest $m_i$ to $m_{i+1}$ path in $G$, (or) 
        \item There does not exist a $m_i$ to $m_{i+1}$ path $P_{i}$ in $G[V(C_j)\cup \{m_i,m_{i+1}\}]$ which is also a shortest $m_i$ to $m_{i+1}$ path in $G$ and satisfies the property that for every vertex $v\in g^{-1}((i,i+1))$, there exists a vertex $u\in V(P_{i})\setminus \{m_i,m_{i+1}\}$ for which $d_G(u,v)=f(v)$.

    \end{itemize}
\end{enumerate}
\end{reduction rule}

The safeness of the above rule follows from Definition \ref{def:enriched}.


\vspace{-3mm}

\subsection{Reducing  \probenskel to \amesp}\label{sec:pathcover}
Let $(G,S,k,\ell,\enskel)$ be an instance of  \probenskel on which Reduction Rule~\ref{rr:sanitytest2} is no longer applicable. Further, let us assume that the number of components in $G-S$ is $k'\leq 2k+2$ and $\gamma:[0,|M|]\rightarrow [0,k']$ be the function in $\enskel$. Our objective is to find a path $P$ that realizes $\enskel$. Observe that for an $i\in [0,|M|]$, if $\gamma(i)=j\neq 0$, then the {\em interesting} paths to connect $m_i,m_{i+1}$ pair are contained in component $C_j$ in $G-S$. Moreover, among all the paths that connect $m_i$ to $m_{i+1}$ in $C_j$, only the shortest paths that satisfy the function $g$ are the interesting paths. Therefore, we enumerate all the feasible paths for each $m_i,m_{i+1}$ pair in a family $\cF_i$ and focus on finding a solution that contains subpaths from this enumerated set of paths only. Notice that now our problem is reduced to finding a 
 set of paths $\cP$ in $G-S$ which contains exactly one path from each family of feasible paths and covers all the vertices in $G-S$ which are far away from $S$. In what follows, we formalize the above discussion. First, we describe our enumeration procedure.

 For each $i\in [0,|M|]$ where $\gamma(i)=j\neq 0$, we construct a family $\cF_i$ of {\em feasible paths} as follows. Let $P_i$ be a path in $G[V(C_j)\cup \{m_i,m_{i+1}\}]$, such that
   (i) $P_i$ is a shortest  $m_i$ to $m_{i+1}$ path in $G$, (ii) for every vertex $v\in g^{-1}((i,i+1))$, $d_G(v,P_i-\{m_i,m_{i+1}\})=f(v)$. Let $m'_i, m'_{i+1}$ be the neighbours of $m_i,m_{i+1}$, respectively in $P_i$. Then we add $m'_i$ to $m'_{i+1}$ subpath to $\cF_i$.  Observe that a family $\cF_i$ of feasible paths satisfies the following properties:
   (1)  $V(\cF_i)\cap V(\cF_{i'})=\emptyset$, for all $i,{i'}\in \gamma^{-1}(j), i\neq i'$, as item 5 of  \cref{rr:dmesp1}~is not applicable, and we add only shortest paths in families. (2) $\cF_i$ contains paths from exactly one component in $G-S$ (by the construction).  Let $\cF$ be the collection of all the families of feasible paths. 
   
   The above discussion leads us to the following problem.

\defproblem{\amesp}
{A forest $F$, 
a set $B\subseteq V(F)$,  an integer $\ell$, and a family $\cF=\{\cF_1,\cF_2,\dots ,\cF_t\}$ of $t$ disjoint families of feasible paths.}
{Is there a set $\cP$ of $t$ paths such that for each $\cF_i$, $i\in [t]$, $|\cP \cap \cF_i|=1$ and for every vertex $v\in B$, there exists a path $P\in \cP$ and a vertex $u\in V(P)$, such that $d_F(u,v)\leq \ell$?}

 Let $F$ be the forest obtained from $G-S$ by removing all the components $C_j$ in $G-S$ such that $\gamma^{-1}(j)=\emptyset$, that is, components which do not contain any interesting paths. Notice that the number of components that contain interesting paths is at most $2k+2$. 
 We let $B\subseteq V(F)$ be the set of vertices which is not covered by vertices in $S$, that is, it contains all the vertices $v\in V(F)$ such that $d_G(v,m_i)>\ell$, for all $i\in [0,|M|+1]$ and $d_G(v,u)>\ell-f(u)$, for all $u\in X$. We claim that it is sufficient to solve \amesp on instance $(F,B,\ell,\cF)$ where $F$ consists of at most $2k+2$ trees. The following lemma shows a reduction formally and concludes that \probenskel parameterized by $k$ 
 and \amesp problem parameterized by $k$, are FPT-equivalent. 
\vspace{-.5mm}

\begin{lemma}\label{lem:disjointfvscorr}
$(G,S,k,\ell,\enskel)$ is a yes instance of \probenskel if and only if $(F,B,\ell, \cF)$ is a yes instance of \amesp. 
\end{lemma}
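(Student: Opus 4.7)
The plan is to prove both directions by explicit translation between solutions. For the forward direction, I would take a realizing path $P$ of $\enskel$ and decompose it along $m_0,m_1,\ldots,m_{|M|+1}$ into subpaths $P_i$ from $m_i$ to $m_{i+1}$. For each $i$ with $\gamma(i)\neq 0$, the subpath $P_i$ is a shortest $m_i$-to-$m_{i+1}$ path lying in $G[V(C_{\gamma(i)})\cup\{m_i,m_{i+1}\}]$ that meets every $g$-constraint (since $P$ realizes $\enskel$), so by the construction of feasible paths the interior obtained by deleting $m_i$ and $m_{i+1}$ is a path in $\cF_i$. Letting $\cP$ consist of these interiors yields $|\cP\cap\cF_i|=1$ for every family.

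The heart of the forward direction is to verify that each $v\in B$ is covered at $F$-distance at most $\ell$ by some path of $\cP$. Because $P$ realizes $\enskel$, there is $u\in V(P)$ with $d_G(u,v)\le\ell$. I would argue that any shortest $v$-to-$u$ path in $G$ avoids $S$: passing through some $m_i$ would give $d_G(v,m_i)\le\ell$, contradicting $v\in B$, and passing through $x\in X$ would force $d_G(v,u)\ge d_G(v,x)+d_G(x,u)\ge d_G(v,x)+f(x)>\ell$ (using $u\in V(P)$, hence $d_G(x,u)\ge f(x)$), again contradicting $v\in B$. Thus $u$ is not in $S$, so $u$ lies in the interior of some $P_i$ and $d_F(v,u)=d_G(v,u)\le\ell$.

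For the backward direction, given $\cP$ solving \amesp on $(F,B,\ell,\cF)$, I would reconstruct $P$ by taking for every $i$ with $\gamma(i)\neq 0$ the unique member of $\cP\cap\cF_i$ and reattaching $m_i,m_{i+1}$ to form an $m_i$-to-$m_{i+1}$ subpath $P_i$; for $\gamma(i)=0$, set $P_i$ to be the edge $m_im_{i+1}$ ensured by \cref{rr:sanitytest2}. The internal disjointness of the $P_i$'s (apart from shared endpoints) is exactly what item~5 of \cref{rr:dmesp1} together with the family-disjointness property of $\cF$ guarantees, and the remaining sanity checks ensure that $P=P_0P_1\cdots P_{|M|}$ is a shortest $m_0$-to-$m_{|M|+1}$ path visiting $M$ in the order $\pi$.

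It remains to verify coverage. Vertices in $S\setminus M$ are covered by the vertices witnessing $f$ and $g$, which is encoded directly in the definition of feasible path; vertices in $V(F)\setminus B$ are, by the definition of $B$, within distance $\ell$ of some $m_i$ or within distance $\ell-f(x)$ of some $x\in X$ (itself at distance $f(x)$ from $P$), so they are covered; vertices in $B$ are covered by $\cP$ in $F$, and a witness $F$-path doubles as a witness $G$-path. The main obstacle is precisely the $F$-to-$G$ distance translation for vertices in $B$: one must exploit the exact definition of $B$ to guarantee that the relevant shortest paths never use vertices of $S$. Once this is settled, the rest reduces to careful bookkeeping across the constraints $M$, $\pi$, $f$, $g$, and $\gamma$.
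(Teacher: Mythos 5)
Your proof follows essentially the same route as the paper's: for the forward direction you decompose the realizing path into the interiors of its $m_i$-to-$m_{i+1}$ segments to obtain one path per family $\cF_i$, and for the backward direction you concatenate the chosen feasible paths, invoking item~5 of Reduction Rule~\ref{rr:dmesp1} and Reduction Rule~\ref{rr:sanitytest2} for internal disjointness and the shortest-path property, exactly as the paper does. If anything, your handling of the $d_F$-versus-$d_G$ translation for vertices of $B$ — showing that a witnessing shortest path must avoid $M$ (else $d_G(v,m_i)\le\ell$) and avoid $X$ (else $d_G(v,x)\le\ell-f(x)$), so the witness lies in $F$ and the distances coincide — is spelled out more explicitly than in the paper, which asserts the coverage claim without isolating this step.
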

\begin{proof}
	Recall that $\skel=(M,X,\pi,m_0,m_{|M|+1},f,g)$ and $\enskel=(\skel,\gamma)$. In the forward direction, suppose that $(G,S,k,\ell,\enskel)$ is a yes instance of \dmesp and let $P$ be its solution. Consider a connected component $C_j$ in $G-S$. If $\gamma^{-1}(j)=\emptyset$, then by the properties of $P$, there does not exists $i\in [0,|M|]$ such that $m_i$ to $m_{i+1}$ path is contained in $G[V(C_j)\cup \{m_i,m_{i+1}\}]$. Otherwise for every $i\in \gamma^{-1}(j)$, let $P_i$ be the subpath in $P$ from $m_i$ to $m_{i+1}$. Let $m'_i, m'_{i+1}$ be the neighbours of $m_i,m_{i+1}$ in $P_i$ and Let $P'_i$ be  $m'_i$ to $m'_{i+1}$ subpath in $P_i$. We have that $P'_i$ is contained in $C_j$ and satisfies function $g$. Let $\cP=\{ P'_i|i\in [0,|M|], V(P'_i)\subseteq V(G-S)\}$. By the construction of $\cF_i$, we have that $\cP\cap \cF_i=P'_i$, that is  $|\cP\cap \cF_i|=1$ due to Item 5 of Reduction Rule \ref{rr:dmesp1}.   As $P$ realizes $\skel$, and for every vertex $v\in B$, $d_G(v,m_i)>\ell$, for all $i\in [0,|M|+1]$ and $d_G(v,u)>\ell-f(u)$, for all $u\in X$. Therefore, there exists a path in $\cP$ which contains a vertex $v'$ such that $d_G(v,v')\leq \ell$.  This implies that $\cP$ is a solution to $(F,B,\ell, \cF)$ of \amesp, and hence $(F,B,\ell, \cF)$ is a yes instance of \amesp. 
	
In the backward direction, suppose that $(F,B,\ell, \cF)$ is a yes instance of \amesp and let $\cP$ be its solution. Let $P'_i=\cF_i\cap \cP$ and $m'_i,m'_{i+1}$ be its end vertices such that $m_im'_i,m'_{i+1}m_{i+1} \in E(G)$. Let $P_i$ be the $m_i$ to $m_{i+1}$ path containing edges  $m_im'_i,{m'}_{i+1}m_{i+1}$ and path $P'_i$. We construct a path $P$ by concatenating paths $P_i$ in $\cP$ if $\gamma(i)\neq 0$ and edges $m_im_{i+1}$ when $\gamma(i)=0$. By the construction of $\cF_i$'s, $P$ satisfies functions $g$, $f$, $\gamma$ and ordering $\pi$ of $M$. Observe that for every vertex $v\in V(G)\setminus B$,  $d_G(v,m_i)\leq\ell$, for some $i\in [0,|M|+1]$ or $d_G(v,u)\leq \ell-f(u)$, for some $u\in X$. Therefore every vertex in $V(G)\setminus B$ is covered by $P$. Clearly, $P$ covers every vertex in $B$. Since we add only shortest paths in $\cF$ and as Reduction Rule \ref{rr:sanitytest2} is not applicable, $P$ is also a shortest path in $G$. This implies that $P$ is a solution  to $(G,S,k,\ell,\enskel)$ of \probenskel and hence $(G,S,k,\ell,\enskel)$ is a yes instance of \probenskel
\qed
\end{proof}

 \end{sloppypar}
 
 We design a dynamic programming-based algorithm for the \amesp problem parameterized by 
 $k$. Since the number of trees 
 is at most $2k+2$, and the number of families of feasible paths 
 is $|\cF|=t$, 
 we first guess the subset of families of feasible paths that comes from each tree in $\cF$ in $\OO(k^{t})$ time. 
  Now we are ready to work on a tree with its guessed family of feasible paths. 
We first present an overview of the algorithm and then present the algorithm in Section \ref{sec:annotated}. 

\begin{lemma}\label{lem:annotated}
	\amesp~can be solved  in time   $\OO(\ell^2 \cdot 2^{\OO(k\log k)} n^{\OO(1)})$ when $F$ is a forest with  $\OO(k)$ trees. 
\end{lemma}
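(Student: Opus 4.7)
The plan is to prove Lemma~\ref{lem:annotated} by a bottom-up dynamic programming algorithm that first decouples the families across the trees of $F$. Since the families in $\cF$ are pairwise vertex-disjoint and each path of each family lies in a single tree of $F$, there are at most $(2k+2)^t = 2^{\OO(k\log k)}$ ways to assign the $t$ families to the $\OO(k)$ trees of $F$ (in most cases the assignment is actually determined just by inspecting which tree contains each family's paths). After fixing an assignment, the problem decomposes: a global solution exists iff, for each tree $T$, we can pick one path from each of its assigned families so that every vertex of $B\cap V(T)$ is within distance $\ell$ of one of the chosen paths.

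For each tree $T$ with assigned families $\cF_T=\{\cF_{i_1},\dots,\cF_{i_s}\}$ (where $s\le t=\OO(k)$), I would root $T$ at an arbitrary vertex and perform a post-order DP. The state at a vertex $v$ is a tuple $(A,\tau,d_1,d_2)$, where $A\subseteq \cF_T$ is the set of families whose chosen path is already entirely contained in the subtree $T_v$ (its LCA lies in $T_v$); $\tau$ is a vector indicating, for each family in $\cF_T\setminus A$, whether a \emph{tail} of its chosen path currently passes through $v$, going from $v$ downward to a bottom endpoint in $T_v$; $d_1\in\{0,1,\dots,\ell+1\}$ is the distance from $v$ to the nearest selected-path vertex in $T_v$; and $d_2\in\{0,1,\dots,\ell+1\}$ is the distance from $v$ to the farthest still-uncovered vertex of $B$ in $T_v$ (the state is rejected if $d_2>\ell$, since no vertex outside $T_v$ could then cover it). Transitions merge children one at a time: for each family, $v$ may do nothing, start a tail (if $v$ is a permissible endpoint), continue a tail from at most one child (else its LCA would be at or below $v$), or close the path at $v$ by merging two incoming tails into a complete path of $\cF_i$, transitioning that family into $A$. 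In all cases $d_1,d_2$ are updated by simple minimum/maximum combinations of the child states and the newly added path vertices.

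The main obstacle is verifying ``closure'' steps (merging two tails at $v$) and ``tail validity'' without tracking per-family tail lengths or identities explicitly, since doing so would blow up the state space to $n^{\Omega(k)}$. I would resolve this via preprocessing: for every family $\cF_i$ and every vertex $c\in T$, precompute the set of paths in $\cF_i$ with LCA exactly $c$ together with their endpoint pairs, and symmetrically, for every vertex $v\in T$ and each family $\cF_i$, the set of ``down-endpoints'' reachable by a valid initial segment of some path in $\cF_i$. Using these tables, all DP transitions (starting, continuing, or closing a tail) reduce to polynomial-time lookups, while the explicit state only records which families have a tail through $v$, not the tail's identity or length. The state space per vertex is therefore $2^{\OO(k)}\cdot \OO(\ell^2)$, yielding per-tree running time $\OO(2^{\OO(k)}\ell^2 n^{\OO(1)})$; summing over the $\OO(k)$ trees and multiplying by the $2^{\OO(k\log k)}$ assignment guesses gives the claimed bound $\OO(\ell^2\cdot 2^{\OO(k\log k)} n^{\OO(1)})$.
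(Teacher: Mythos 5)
Your overall architecture matches the paper's: guess the assignment of the $t$ families to the $\OO(k)$ trees (both analyses give $2^{\OO(k\log k)}$ choices), then run a bottom-up DP on each tree whose state records the subset of families already completed plus $\OO(\ell)$-bounded distance bookkeeping (farthest still-uncovered vertex of $B$ versus nearest selected-path vertex). The paper packs your two distances $d_1,d_2$ into a single signed value and, instead of your tail-indicator vector $\tau$, exploits the hypothesis that the families are vertex-disjoint: at most one path of at most one family, namely $\cF_{F(v)}$, can pass through $v$, and an eleven-valued type variable records where that single path's endpoints lie relative to the children processed so far. Your $\tau$ ranging over all of $\cF_T\setminus A$ is therefore unnecessary generality, though it does not hurt the claimed bound.

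The genuine gap is in your closure step. You merge two tails of a family $\cF_i$ at their LCA $c$ while insisting that the state record neither the identity of the chosen path nor the endpoint to which each tail descends, on the grounds that this would blow up the state space to $n^{\Omega(k)}$. That fear is unfounded: by vertex-disjointness only one family's path can pass through a given vertex, so recording its identity (or its committed bottom endpoint) multiplies the state space by at most $|\cF_i|\le n^2$, a polynomial factor. Omitting it, however, breaks correctness. A table entry at a child $c_1$ only asserts that \emph{some} valid initial segment of \emph{some} path of $\cF_i$ descends from $c_1$ and achieves the stated $(d_1,d_2)$, and likewise at the sibling $c_2$. At $c$ you must certify that the two committed tails descend to an endpoint pair $(e_1,e_2)$ of a \emph{single} path of $\cF_i$ with LCA $c$; your precomputed tables of reachable down-endpoints cannot recover which endpoint each child state actually committed to, because the coverage values $(d_1,d_2)$ depend on that commitment and cannot be recomputed at merge time. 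Two individually valid tails whose endpoints belong to different paths of $\cF_i$ would be accepted, yielding false positives. Adding the committed path (or endpoint) to the state repairs this at polynomial cost; as written, the argument does not go through.
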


\subsection{Overview of the Algorithm for \amesp}\label{sec-path}


 Consider an instance $(T,B,\ell,\cF =\{\cF_1,\cF_2,\dots,\cF_t\})$ of \amesp problem where $T$ is a tree, $B\subseteq V(T)$, and $\ell\in \mathbb{N}$ 
and $\cF$ is a disjoint family of feasible paths. The aim is to find a set $\cal P$ of $t$ paths such that for each $\cF_i$, $i\in [t]$, $|\mathcal{P}\cap \cF_i|=1$ and for every vertex $v\in B$, there exists a path $P\in {\cal P}$ and a vertex $u\in V(P)$, such that $d_T(u,v)\leq \ell$. 

For a vertex $v\in V(T)$, the bottom-up dynamic programming algorithm considers subproblems for each child $w$ of $v$ which are processed 
from left to right. 
To compute a partial solution at the subtree rooted at a child of $v$, 
we distinguish whether 
there exists a path containing $v$ that belongs to ${\cal P}$ or not.
For this purpose, we define a variable that captures a path containing $v$ in ${\cal P}$. 
If there exists such a path, we guess the region where the endpoints of the path belong, which includes the cases that the 
path contains: (i) only the vertex $v$, (ii) the parent of $v$ and one of its endpoints belongs to the subtree rooted at $w$ or $v$'s child that is to the left of $w$ or 
$v$'s child that is to the right of $w$,   
(iii) both its endpoints belong to the subtrees of the children which are to the left or the right of $w$, and 
(iv) one of the endpoints belongs to the subtree rooted at $w$ while the other belongs to the subtree of the child to the left or the right of $w$. An illustration of these cases is given in Fig. \ref{fig:dp}. 

At each node $v$, we store the distance of the nearest vertex (say $w'$) in the subtree of $v$, that is, on a path in $\cal P$, from $v$. 
We store this with the hope that $w'$ can cover vertices of $B$ that come in the future. 
In addition, we also store the farthest vertex (say $w''$) in the subtree of $v$ that is not covered by any chosen paths of ${\cal P}$ in the subtree. Again, we store this with the hope that $w''\in B$ can be covered by a future vertex, and the current solution leads to a solution overall.

At each node $v$, we capture the existence of the following: 
there exists a set of $t'\leq t$ paths $Y$, one from each $\cF_i$, 
that either includes $v$ or not on a path from $Y$ 
in ${\cal P}$ satisfying the distances of the nearest vertex $w'$ and the farthest vertex $w''$ (from $v$) 
that are on $Y$ and already covered and not yet covered by $Y$, respectively. To conclude the existence of a colorful path cover at the root node, we check for the existence of an entry that consists of a set $Y$ of $t$ paths, one from each $\cF_i$, 
and all the farthest distance of an uncovered vertex is zero. 

\subsection{Algorithm for \amesp}\label{sec:annotated}


In this section, we design a dynamic programming-based FPT algorithm for the \amesp problem parameterized by 
$k$. 
Since the number of trees 
 is at most $2k+2$, and the number of families of feasible paths 
 is $|\cF|=t$, 
 we first guess the subset of families of feasible paths that comes from each tree in $\cF$ in $\OO(k^{t})$ time.  
 Now we are ready to work on a tree with its guessed family of feasible paths. 
Consider an instance $(T,B,\ell,\cF =\{\cF_1,\cF_2,\dots \cF_t\})$ of \amesp problem where $T$ is a tree, $B\subseteq V(T)$, and $\ell\in \mathbb{N}$ 
and $\cF$ is a disjoint family of feasible paths. 
The aim is to find a set $\cal P$ of $t$ paths such that for each $\cF_i$, $i\in [t]$, $|\mathcal{P}\cap \cF_i|=1$ and for every vertex $v\in B$, there exists a path $P\in {\cal P}$ and a vertex $u\in V(P)$, such that $d_T(u,v)\leq \ell$.

%

%
%


First, we give a description of our algorithm. The algorithm starts by arbitrarily rooting tree $T$ at a vertex $r\in V(T)$. 
In the following, we state some notations used in the algorithm. For a vertex $v\in V(T)$, we denote the number of children of $v$ in $T$ by $deg(v)$. By $T_v$, we denote the subtree of $T$ rooted at $v$. 
Also, the $i$th child of $v$ is denoted by $\sch^v_i$. 
For a vertex $v\in T$, and  $0 \leq i\leq deg(v)$, we define $T_{v,i}$ as the subtree of $T$ containing vertex $v$ and subtrees rooted at its first $i$ children (in order of index). Also,  $T_{\sch^v_j}$ is the 
subtree of $T$ containing vertex $v$ and subtree rooted at its  $i$th child. 
Recall that every vertex $v\in V(T)$ can be contained in at most one $\cF_i$, $i\in[t]$. For each vertex $v\in V(T)$, we assign a value $F(v)\in [0,t]$ to $v$ as follows: $F(v)=i$, if $v\in V(\cF_i)$, $0$ otherwise (when no path in any $\cF_i, i\in [t]$ contain $v$). 
Recall that   a path $P$ {\em covers} a vertex $v$ if there exists a vertex $u\in V(P)$ such that $d_{G}(v,u)\leq \ell$. For a vertex $v\in V(T)$, the dynamic programming algorithm considers subproblems for each child of $v$. We process its children from left to right (in index ordering) and at $i$th child, we consider  the subtree $T_{v,i}$. Before we define an entry in the table, we give the definition of the variables used for an entry. 

Consider $v\in V(T)$ and $i\in [deg(v)]$. To compute a partial solution at subtree $T_{v,i}$, we must distinguish whether $v$ is contained in a path in the solution. If it is contained in a path in the solution, then we also guess the endpoints of the path. For this purpose, we define a variable $\stype$ which can take a value from the set $\{1, 2, \dots, 11\}$, where each value represents a different case defined as follows: 
\begin{itemize}
	\item\textbf{tp=1:} Paths containing $v$ are not in the solution. 
	\item\textbf{tp=2:} The vertex $v$ itself is a path that is contained in the solution. 
	
	
	\item\textbf{tp} {\boldmath$\in \{3, 4, 5, 6\}$}: For all these cases, we have the parent of $v$ in the solution, in addition to satisfying their respective properties.   
	\begin{itemize}
		\item\textbf{tp=3:} One of its endpoints is $v$ itself. 
		\item\textbf{tp=4:} One of its endpoints is in $T_{v,i-1}$ ($\neq v$). 
		\item\textbf{tp=5:} One of its endpoints is in $T_{\sch^v_i}$ ($\neq v$). 
		\item\textbf{tp=6:} One of its endpoints is in $T_{\sch^v_j}$ for some $j>i$ ($\neq v$). 
	\end{itemize}
	\item\textbf{tp=7:} Both of its endpoints are in $T_{v,i-1}$ (at least one endpoint is not equal to $v$).
	\item\textbf{tp=8:} One of its endpoints is in $T_{v,i-1}$, and the other 
	in $T_{\sch^v_i}$ ($\neq v$).  
	\item\textbf{tp=9:} One of its endpoints is in $T_{v,i-1}$, and the other 
	in $T_{\sch^v_j}$ for some $j>i$ ($\neq v$). 
	\item\textbf{tp=10:} One of its endpoints is in $T_{\sch^v_i}$, and the 
	other 
	in $T_{\sch^v_j}$ for some $j>i$ ($\neq v$). 
	\item\textbf{tp=11:}   One of its endpoints is in $T_{\sch^v_j}$ for some $j>i$, and 
	the other 
	in $T_{\sch^v_{j'}}$ for some $j'>i$ ($\neq v$). 
	

\end{itemize}

\begin{figure}[ht!]
	\centering
	\includegraphics[width=.8\textwidth]{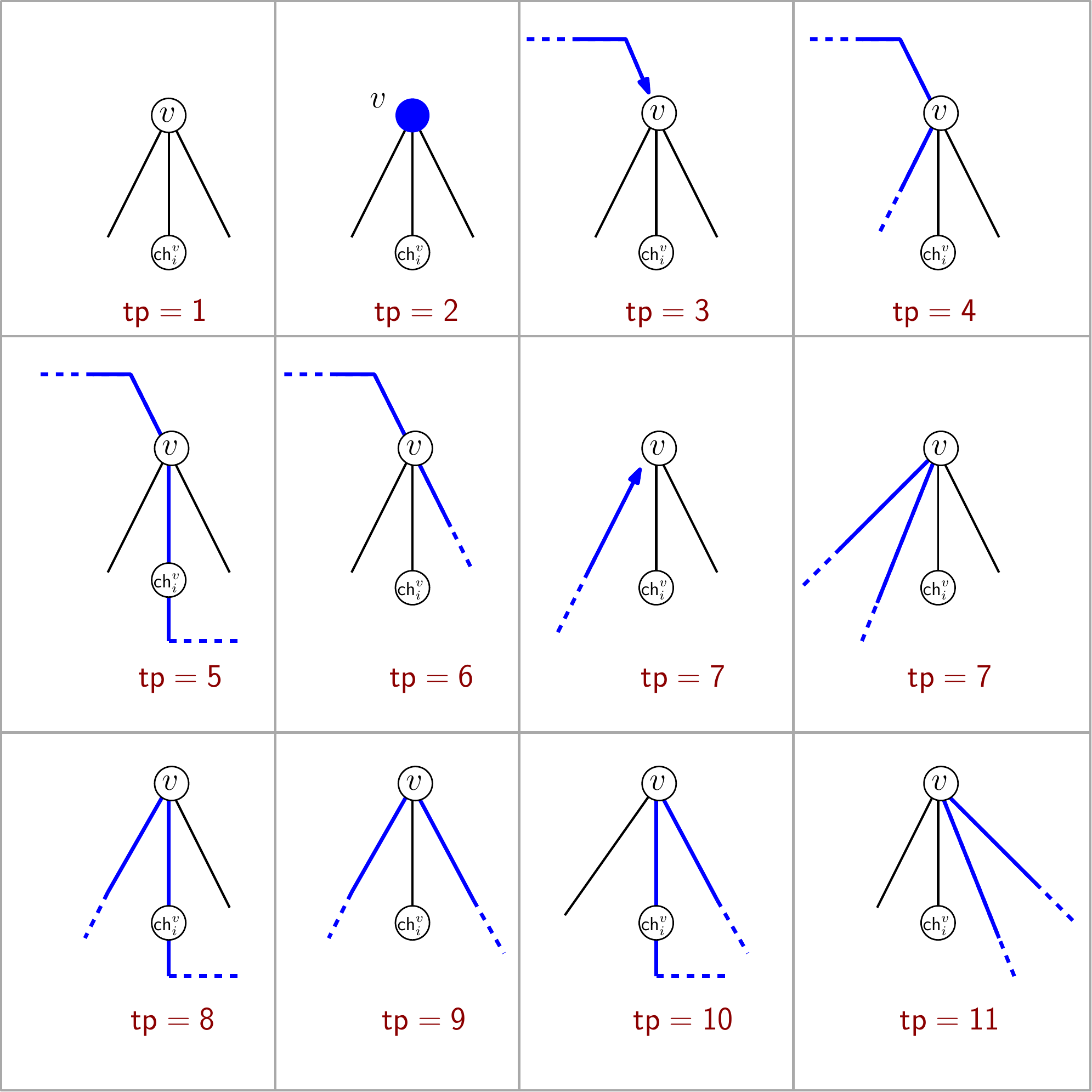}
	\caption{Illustration of the $\stype$ cases based on its value. The ``blue'' color indicates the path in $\cF$ that is a part of the solution path.}
	\label{fig:dp}
\end{figure}

\begin{sloppypar}
	We have a variable $(\ssign)\srem$. Here $\ssign$ represents sign of $\srem$, which can be either $+$ or $-$, and $(\ssign)\srem$ can take a value from the set $\{\sM\ell-1,\dots, \sM1, \sM0,\sP0,\sP1,\dots, \sP \ell\}$. The ``$-$'' sign represents that there are uncovered vertices in $V(T_{v,i})\cap B$ which needs to be covered in future, and ``$+$'' sign represents that all the vertices in $V(T_{v,i})\cap B$ are covered and the partial solution may cover more vertices outside $T_{v,i}$. The value $\sM\srem$ represents that the 
	maximum distance of a vertex $u\in B$ from $v$ in subtree $T_{v,i}$ such that $u$ is not covered by the partial solution is $\srem$. It means that vertices up to distance $\srem$ from $v$ in $V(T_{v,i})\cap B$ needs to be covered by some paths in future. We consider values only till $\ell-1$ distance, as it is a trivial observation that any vertex at distance at least $\ell$ from $v$ in $T_{v,i}$ cannot be covered by any path that is not in $T_{v,i}$. 
	The value $\sP\srem$ represents that the minimum distance of a vertex $u$ in subtree $T_{v,i}$ ($u$ need not belong to $B$) 
	such that distance from $u$ to $v$ is $\ell-\srem$ and $u$ is contained in a path in the solution. 
	It means that vertices in $B$ which are at a distance at most $\srem$ from $v$ (at most $\ell$ from $u$) in $T$ can be covered by a path containing $u$ in the solution.

	In the following definition, we state the entries in our dynamic programming routine.
	
	
	\begin{definition} For each $v\in V(T)$, $d\in [deg(v)]$, $Y\subseteq [t]$,$\stype\in \{1, 2,\dots, 11\}, (\ssign)\srem \in\{\sM\ell-1,\dots, \sM1, \sM0,\sP0,\sP1,\dots, \sP \ell\}$ we create an entry $c[v,d,Y,{\stype},(\ssign) {\srem}]$ in our table which stores $\strue$ if and only if the following holds:
		\begin{itemize}
			\item  There exists a set $\cP_{v,d}$ of $|Y|$ paths such that for each $j\in Y$, $|\cP_{v,d}\cap \cF_j|=1$, if $F(v)\in Y$, then path in $\cP_{v,d}\cap \cF_{F(v)}$ should satisfy definition of $\stype$, and all paths in $\cP_{v,d}$ except maybe path containing $v$ should be contained in the subtree $T_{v,d}$.
			\item If $\ssign$ is $-$, then there exists a vertex $u\in B$ at distance $\srem$  from $v$ in $T_{v,d}$ which is not covered by any paths in $\cP_{v,d}$ but can be covered in future (is not at distance greater than $\ell$ from $v$). For every vertex $u\in B$ at distance at least ${\srem} +1$ from $v$ in subtree $T_{v,d}$ there exists a path in $\cP_{v,d}$ which covers $u$.
			\item If $\ssign$ is $+$, 
			then there exists a vertex $u$,  in $\cP_{v,d}$, 
			at distance $\ell - \srem$  from $v$ in $T_{v,d}$. 
			And $u$ can cover vertices (in future) that are at distance at most $\srem$ from $v$. 
			All other vertices in $\cP_{v,d}$ are at distance at least $\ell-\srem$ from $v$. 
			Further all vertices in $T_{v,d}\cap B$ are covered by $\cP_{v,d}$. 
		\end{itemize}
		
		Otherwise $c[v,d,Y,{\stype}, {\srem}]$ stores $\sfalse$.
		
	\end{definition}

	Observe that $(T,B,\ell,\cF=\{\cF_1,\cF_2,\dots,  \cF_t\})$ is a yes instance of \amesp if and only if there exists an entry $c[r,deg(r),\{1, 2, \dots, t\}, \stype^*,\sP\srem^*]$, $\stype^*\in \{1,2,7,8\}$ and  $\srem^* \in [0,\ell]$, which is set to $\strue$.

	Next, we give (recursive) formulas for the computation of entries in our dynamic programming table. Consider $v\in V(T),d\in [deg(v)],Y\subseteq [t],\stype\in [11],  (\ssign)\srem \in \{\sM\ell-1,\dots, \sM1, \sM0,\sP0,\sP1,\dots, \sP \ell\}$, we compute the entry $c[v,d,Y,{\stype}, (\ssign){\srem}]$ based on the following cases:
	
	\noindent {\bf Leaf vertex:} $v$ is a leaf vertex. 
	Set $c[v,d,Y,{\stype}, (\ssign){\srem}]=\strue$, if one of the following holds: (a1) $\stype=1, (\ssign)\srem=\sM 0$, $v\in B$, 
	or 
	(a2) $\stype\in \{2,3\} , (\ssign)\srem=\sP\ell$. 
	Otherwise, set the entry to $\sfalse$. Correctness follows trivially from the definition of entry.
	
	\noindent {\bf Non-leaf vertex:} $v$ is not a leaf vertex. We consider following cases depending on $\stype$ variable:
	
	\noindent {\bf Case $\stype=1$:} We further consider following cases.
	
	\textbf{(1)} If $\ssign=-$, then set $c[v,d,Y,1,(-){\srem}]=\strue$ if there exists a partition $Y_1\uplus Y_2$ of $Y$ such that one of the following holds:
	(a1) $c[v,d-1,Y_1,1, (\ssign){\srem}_1]=\strue$, $(\ssign)\srem_1\in \{\sM\srem,\dots, \sP\srem-1\}$ and
	(a2) $c[\sch^v_d,deg(\sch^v_d),Y_2,1,\sM\srem-1]=\strue$,  
	(b1) $c[v,d-1,Y_1,1, \sM{\srem}]=\strue$, and
	(b2) $c[\sch^v_d,deg(\sch^v_d),Y_2,1,(\ssign)\srem_2]=\strue$, $(\ssign)\srem_2\in \{\sM\srem-1,\dots, \sP\srem\}$. 

	\textbf{(1.1)} If $\ssign=+$, then set $c[v,d,Y,1,(+)\ell -1]=\strue$ if there exists a partition $Y_1\uplus Y_2$ of $Y$ such that one of the following holds: (a1) $c[v,d-1,Y_1,1, (\ssign)\srem_1]=\strue$,   $(\ssign)\srem_1\in \{\sM\ell-1,\dots, \sP\ell-1\}$ and
	(a2) $c[\sch^v_d,deg(\sch^v_d),Y_2,\stype_2,\sP\ell]=\strue$, $\stype_2\in \{2,7,8\}$, or, 
	(b1) $c[v,d-1,Y_1,1, \sP\ell-1]=\strue$, 
	and
	(b2) $c[\sch^v_d,deg(\sch^v_d),Y_2,\stype_2,(\ssign)\srem_2]=\strue$, $\stype_2\in \{1, 2,7,8\}$, $(\ssign)\srem_2\in \{\sM\ell-2,\dots, \sP\ell\}$. 
	
	\textbf{(1.2)} If $\ssign=+$ and $\srem<\ell-1$, then set $c[v,d,Y,1,(+){\srem}]=\strue$ if there exists a partition $Y_1\uplus Y_2$ of $Y$ such that one of the following holds: (a1) $c[v,d-1,Y_1,1, (\ssign){\srem}_1]=\strue$,  $(\ssign)\srem_1\in \{\sM\srem,\dots, \sP\srem\}$ and
	(a2) $c[\sch^v_d,deg(\sch^v_d),Y_2,1,\sP\srem+1]=\strue$,  or
	(b1) $c[v,d-1,Y_1,1, \sP{\srem}]=\strue$,  and
	(b2) $c[\sch^v_d,deg(\sch^v_d),Y_2,1,(\ssign)\srem_2]=\strue$,  $(\ssign)\srem_2\in \{\sM\srem,\dots, \sP\srem+1\}$.


	\noindent {\bf Correctness of Case \stype=1 (1):} $\ssign=-$: Consider the case when (a1) and (a2) are true, then there exists a set of paths $\cP_1$ which set (a1) to true and there is a set of paths $\cP_2$ which set (a2) to true. We claim that $\cP=\cP_1\cup \cP_2$ is a solution to $c[v,d,Y,1,(-){\srem}]=\strue$. Observe that no path containing $v$ is in $\cP$, as $\stype=1$ in (a1). The condition (a2) implies that there exists an uncovered vertex $u\in T_{\sch^v_d}$ at distance $\srem$ from $v$. If $\sch^v_d\in \cP_2$, then $u$ will be  covered by $\sch^v_d$ and $u$ need not wait for future vertices to cover it. Hence $\stype=1$ in (a2). 
	Since $Y=Y_1\uplus Y_2$, we have that for each $i\in Y$, we have exactly one path in $\cP\cap \cF_i$. Next, observe that there exists a vertex $u\in B$ at distance $\srem$ from $v$ in $T_{\sch^v_d}$, which is not covered by paths in $\cP_1$ and $\cP_2$. Also $(\ssign)\srem_1\in \{\sM\srem,\dots, \sP\srem-1\}$, 
	$u$ is not covered by  $\cP_1$, 
	as any vertex in $\cP_1$ requires at least $\srem+1$ distance to reach $u$.  Observe that every vertex at distance at least $\srem+1$ from $v$ in $T_{v,d}$ is covered by paths in $\cP$. Analogous arguments follows for the case when (b1) and (b2) are true.

	\noindent\textbf{Correctness of Case $\stype=1$ (1.1):} $\ssign=+$: 
	Since $\stype=1$, $v\notin \cP$. 
	Hence, we do not consider the entry $c[v,d,Y,1,(+)\ell]$. 
	Consider the case when (a1) and (a2) are true, then there exists a set of paths $\cP_1$ which set (a1) to true and there is a set of paths $\cP_2$ which set (a2) to true. We claim that $\cP=\cP_1\cup \cP_2$ is a solution to $c[v,d,Y,1,(+)\ell-1]=\strue$. 
	As $\stype_2\in \{2,7,8\}$ and no path in $\cP_1$ contain $v$, 
	we have that no path containing $v$ is in $\cP$. 
	Since $Y=Y_1\uplus Y_2$, we have that for each $i\in Y$, we have exactly one path in $\cP\cap \cF_i$. Next, observe that every vertex in $T_{\sch^v_d}$ is covered by $\cP_2$ and since  $(\ssign)\srem_1\in \{\sM\ell-1,\dots, \sP\ell-1\}$, every vertex in $T_{v,d-1}$ is covered either by $\cP_1$ or $\cP_2$. By (a2) we also satisfy $\sP\ell$, that is vertices at distance at most $\ell$ is covered by $\cP_2$. Analogous arguments follows for the case when (b1) and (b2) are true.


	\noindent\textbf{Correctness of $\stype=1$ (1.2):} $\ssign=+$: 
	Since $\srem<\ell-1$, none of the children of $v$ can be in $\cP$. Hence $\stype=1$ in both (a1) and (a2). 
	Consider the case when (a1) and (a2) are true, then there exists a set of paths $\cP_1$ which set (a1) to true and there is a set of paths $\cP_2$ which set (a2) to true. We claim that $\cP=\cP_1\cup \cP_2$ is a solution to $c[v,d,Y,1,(+){\srem}]=\strue$. 
	Since $Y=Y_1\uplus Y_2$, we have that for each $i\in Y$, we have exactly one path in $\cP\cap \cF_i$. Next, observe that every vertex in $T_{\sch^v_d}$ is covered by $\cP_2$ and since  $(\ssign)\srem_1\in \{\sM\srem,\dots, \sP\srem\}$, 
	every vertex in $T_{v,d-1}$ is covered either by $\cP_1$ or $\cP_2$. By (a2) we also satisfy $\sP\srem$, that is vertices at distance at most $\srem$ is covered by $\cP_2$. Analogous arguments follows for the case when (b1) and (b2) are true.

	
	
	\noindent {\bf Case  $\stype=2$:} If $F(v)\notin Y$, or $\ssign=\sM$, or $\srem\neq \ell$, or $v$ is not a path in $\cF_{F(v)}$, then set $c[v,d,Y,2,(\ssign){\srem}]=\sfalse$. Correctness follows trivially. Otherwise we set $c[v,d,Y,2,(+)\ell]=\strue$ if there exists a partition $Y_1\uplus Y_2$ of $Y\setminus \{F(v)\}$ such that the following holds: (a1) $c[v,d-1,Y_1\cup \{F(v)\},2, \sP\ell]=\strue$ and (a2) $c[\sch^v_d,deg(\sch^v_d),Y_2,\stype_2,(\ssign)\srem_2]=\strue$,  $(\ssign)\srem_2\in \{\sM\ell-1,\dots, \sP\ell\}, \stype_2\in \{1,2,7,8\}$.
	

	\noindent {\bf Correctness of Case $\stype=2$:} Suppose that (a1) and (a2) are true, then there exists a set of paths $\cP_1$ which set (a1) to true and there is a set of paths $\cP_2$ which set (a2) to true. We claim that $\cP=\cP_1\cup \cP_2 \cup \{v\}$ is a solution to $c[v,d,Y,2,(+)\ell]=\strue$. 
	Since $Y=Y_1\uplus Y_2\uplus \{F(v)\}$, we have that for each $i\in Y$, we have exactly one path in $\cP\cap \cF_i$. Next, observe that every vertex in $T_{\sch^v_d} \cup  T_{v,d-1}$ is covered by $\cP$ since  $(\ssign)\srem_2\in \{\sM\ell-1,\dots, \sP\ell\}$ and path containing $v$ is in $\cP$,  hence every vertex in $T_{v,d}$ is covered by $\cP$.

	%
	\noindent {\bf Case $\stype=3$:}  If $F(v)\notin Y$, or $\ssign=\sM$, or $\srem\neq \ell$, or there is no path in $\cF_{F(v)}$ containing $v$, parent of $v$ and has one of its endpoints as $v$, then set $c[v,d,Y,3,(\ssign){\srem}]=\sfalse$. Correctness follows trivially. Otherwise we set $c[v,d,Y,3,(+)\ell]=\strue$ if there exists a partition $Y_1\uplus Y_2$ of $Y\setminus \{F(v)\}$ such that the following holds: (a1) $c[v,d-1,Y_1\cup \{F(v)\},3, (+)\ell]=\strue$ and (a2) $c[\sch^v_d,deg(\sch^v_d),Y_2,{\stype}_2,(\ssign)\srem_2]=\strue$,  $\stype_2\in \{1,2,7,8\}$, $(\ssign)\srem_2\in \{\sM\ell-1,\dots, \sP\ell\}$. The correctness can be argued similar to Case 2.
	


	\noindent {\bf Case $\stype=4$:}  If $F(v)\notin Y$, or $\ssign=\sM$, or $\srem\neq \ell$, or there is no path in $\cF_{F(v)}$ containing $v$, parent of $v$ and has one of its endpoints in $T_{v,d-1}$ (endpoint is not $v$), then set $c[v,d,Y,4,(\ssign){\srem}]=\sfalse$. Correctness follows trivially. Otherwise we set $c[v,d,Y,4,(+)\ell]=\strue$ if there exists a partition $Y_1\uplus Y_2$ of $Y\setminus \{F(v)\}$ such that the following holds: (a1) $c[v,d-1,Y_1\cup \{F(v)\},4, (+)\ell]=\strue$, 
	and (a2) $c[\sch^v_d,deg(\sch^v_d),Y_2,\stype_2 ,(\ssign)\srem_2]=\strue$, $\stype_2\in \{1,2,7,8\}$, 
	$(\ssign)\srem_2\in \{\sM\ell-1,\dots, \sP\ell\}$. 
	The correctness can be argued similar to Case 2.

	\noindent {\bf Case $\stype=5$:}  If $F(v)\notin Y$, or $\ssign=\sM$, or $\srem\neq \ell$, or there is no path in $\cF_{F(v)}$ containing $v$, parent of $v$ and has one of its endpoints in $T_{\sch^v_d}$ (endpoint is not $v$), then set $c[v,d,Y,5,(\ssign){\srem}]=\sfalse$. Correctness follows trivially. Otherwise we set $c[v,d,Y,5,(+)\ell]=\strue$ if there exists a partition $Y_1\uplus Y_2$ of $Y\setminus \{F(v)\}$ such that the following holds: (a1) $c[v,d-1,Y_1\cup \{F(v)\},\stype_1, (+)\ell]=\strue$, $\stype_1=6$ and (a2) $c[\sch^v_d,deg(\sch^v_d),Y_2 \cup \{F(v)\},{\stype}_2,(+)\ell]=\strue$,  $\stype_2\in \{3,4,5\}$. 
	

	\noindent\textbf{Correctness of the Case $\stype=5$:} Suppose that (a1) and (a2) are true, then there exists a set of paths $\cP_1$ which set (a1) to true and there is a set of paths $\cP_2$ which set (a2) to true. Let $P_1\in \cP_1 \subseteq \cF_F(v)$ and let $P_2\in \cP_1 \subseteq \cF_F(v)$. Observe that $\cP^*_1=(\cP_1\setminus \{P_1\})\cup \{P_2\}$ also sets (a1) to true. We claim that $\cP=\cP^*_1\cup \cP_2$ is also a solution to $c[v,d,Y,5,(+)\ell]=\strue$.  since $Y=Y_1\uplus Y_2\uplus \{F(v)\}$, and $\stype_1=6$ and $\stype_2\in \{3,4,5\}$, and exactly one path containing $v$ is in $\cP$,  we have that for each $i\in Y$, we have exactly one path in $\cP\cap \cF_i$. Next, observe that every vertex in $T_{\sch^v_d} \cup T_{v,d-1}$ is covered by $\cP$, hence every vertex in $T_{v,d}$ is covered by $\cP$.

	\noindent {\bf Case $\stype=6$:}  If $F(v)\notin Y$, or $d=deg(v)$, or $\ssign=\sM$, or $\srem\neq \ell$, or there is no path in $\cF_{F(v)}$ containing $v$, parent of $v$ and has one of its endpoints in $T_{\sch^v_{d'}}, d'>d$ (endpoint is not $v$), then set $c[v,d,Y,6,(\ssign){\srem}]=\sfalse$. Correctness follows trivially. Otherwise we set $c[v,d,Y,6,(+){\srem}]=\strue$ if there exists a partition $Y_1\uplus Y_2$ of $Y\setminus \{F(v)\}$ such that the following holds: (a1) $c[v,d-1,Y_1\cup \{F(v)\},\stype_1, (+)\ell]=\strue$, and (a2) $c[\sch^v_i,deg(\sch^v_i),Y_2,{\stype}_2,(+)\ell]=\strue$, where $\stype_1,\stype_2=6$. The correctness can be argued similar to Case 5.

	\noindent {\bf Case $\stype=7$:}  If $F(v)\notin Y$, or $\ssign=\sM$, or $\srem\neq \ell$, or there is no path in $\cF_{F(v)}$ containing $v$, and both of its endpoints in $T_{d,i-1}$ (at least one endpoint is not $v$), then set $c[v,d,Y,7,(\ssign){\srem}]=\sfalse$. Correctness follows trivially. Otherwise we set $c[v,d,Y,7,(+){\ell}]=\strue$ if there exists a partition $Y_1\uplus Y_2$ of $Y\setminus \{F(v)\}$ such that the following holds: (a1) $c[v,d-1,Y_1\cup \{F(v)\},\stype_1, (+)\ell]=\strue$, $\stype_1\in \{7,8\}$ and (a2) $c[\sch^v_d,deg(\sch^v_d),Y_2,{\stype}_2,(\ssign)\srem_2]=\strue$,  $\stype_2 \in \{1,2,7,8\}$, $(\ssign)\srem_2\in \{\sM\ell-1,\dots, \sP\ell\}$. The correctness can be argued similar to Case 5.


	\noindent {\bf Case $\stype=8$:}  If $F(v)\notin Y$, or $\ssign=\sM$, or $\srem\neq \ell$, or there is no path in $\cF_{F(v)}$ containing $v$, and one of its endpoints in $T_{d,i-1}$ and other endpoint is in $T_{\sch^v_d}$ (endpoint is not $v$), then set $c[v,d,Y,8,(\ssign){\srem}]=\sfalse$. Correctness follows trivially. Otherwise we set $c[v,d,Y,8,(+){\ell}]=\strue$ if there exists a partition $Y_1\uplus Y_2$ of $Y\setminus \{F(v)\}$ such that the following holds: (a1) $c[v,d-1,Y_1\cup \{F(v)\},\stype_1, (+)\ell]=\strue$, $\stype_1=9$ and (a2) $c[\sch^v_i,deg(\sch^v_i),Y_2\cup \{F(v)\},{\stype}_2,(+)\ell]=\strue$,  $\stype_2\in \{3,4,5\}$, $(\ssign)\srem_2\in \{\sM\ell-1,\dots, \sP\ell\}$. The correctness can be argued similar to Case 5.
	
	\noindent {\bf Case $\stype=9$:}  If $F(v)\notin Y$, or $d=deg(v)$ or $\ssign=\sM$, or $\srem\neq \ell$, or there is no path in $\cF_{F(v)}$ containing $v$, and one of its endpoints in $T_{d,i-1}$ and other endpoint is in $T_{\sch^v_{d'}}$ for some $d'>d$ (endpoint is not $v$), then set $c[v,d,Y,9,(\ssign){\srem}]=\sfalse$. Correctness follows trivially. Otherwise we set $c[v,d,Y,9,(+){\ell}]=\strue$ if there exists a partition $Y_1\uplus Y_2$ of $Y\setminus \{F(v)\}$ such that the following holds: (a1) $c[v,d-1,Y_1\cup \{F(v)\},\stype_1, (+)\ell]=\strue$, $\stype_1=9$ and (a2) $c[\sch^v_d,deg(\sch^v_d),Y_2,{\stype}_2,(+)\ell]=\strue$,  $\stype_2\in \{1,2,7,8\}$, $(\ssign)\srem_2\in \{\sM\ell-1,\dots, \sP\ell\}$. The correctness can be argued similar to Case 5.

	
	\noindent {\bf Case $\stype=10$:}  If $F(v)\notin Y$, or $d=deg(v)$, or $\ssign=\sM$, or $\srem\neq \ell$, or there is no path in $\cF_{F(v)}$ containing $v$, and one of its endpoints is in $T_{\sch^v_d}$, and other endpoint is in $T_{\sch^v_{d'}}$ for some $d'>d$ (endpoint is not $v$), then set $c[v,d,Y,10,(\ssign){\srem}]=\sfalse$. Correctness follows trivially. Otherwise we set $c[v,d,Y,10,(+){\ell}]=\strue$ if there exists a partition $Y_1\uplus Y_2$ of $Y\setminus \{F(v)\}$ such that the following holds: (a1) $c[v,d-1,Y_1\cup \{F(v)\},\stype_1, (+)\ell]=\strue$, $\stype_1=11$ and (a2) $c[\sch^v_d,deg(\sch^v_d),Y_2\cup \{F(v)\},{\stype}_2,(+)\ell]=\strue$,  $\stype_2=10$. The correctness can be argued similar to Case 5.
	
	\noindent {\bf Case $\stype=11$:}  If $F(v)\notin Y$, or $d=deg(v)$, or $\ssign=\sM$, or $\srem\neq \ell$, or there is no path in $\cF_{F(v)}$ containing $v$, one its endpoints is in $T_{\sch^v_{d'}}$ for some $d'>d$, and other endpoint is in $T_{\sch^v_{d''}}$ for some $d''>d$ (endpoint is not $v$), then set $c[v,d,Y,11,(\ssign){\srem}]=\sfalse$. Correctness follows trivially. Otherwise we set $c[v,d,Y,11,(+){\ell}]=\strue$ if there exists a partition $Y_1\uplus Y_2$ of $Y\setminus \{F(v)\}$ such that the following holds:(a1) $c[v,d-1,Y_1\cup \{F(v)\},\stype_1, (+)\ell]=\strue$, $\stype_1=11$ and (a2) $c[\sch^v_d,deg(\sch^v_d),Y_2,{\stype}_2,(+)\ell]=\strue$,  $\stype_2\in \{1,2,7,8\}$, $(\ssign)\srem_2\in \{\sM\ell-1,\dots, \sP\ell\}$. The correctness can be argued similar to Case 5.
	
\end{sloppypar}

This completes the description of the (recursive) formulas and their correctness for computing all entries of the dynamic programming table. The correctness of the algorithm follows from the correctness of the (recursive) formulas, and the fact $(T,B,\ell,\cF=\{\cF_1,\cF_2,\dots, \cF_t\})$ is a yes instance of \amesp if and only there exists an entry $c[r,deg(r),\stype^*,\sP\srem^*]$, $\stype^*\in \{1,2,7,8\}$ and  $\srem^* \in [0,\ell]$, which is set to $\strue$. Next, we analyse the running time of our algorithm. Recall that $|\cF|\leq t$ and for each $i\in [t]$ we have that  $|\cF_i|\leq n^2$. 
Observe that each of our table entries can be computed in time $\OO(\ell^2 \cdot 2^{2t}n^{\OO(1)})$ time. The number of entries is bounded by $\OO(\ell\cdot  2^t n^{\OO(1)})$. 

Since we guessed the subset of families of feasible paths that come from each tree in $\cF$ in $\OO(k^{t})$ time, 
the running time of our algorithm is bounded by $\OO(\ell^2 \cdot 2^{\OO(t\log k)} n^{\OO(1)})$. Since 
$t\leq k$, we get the desired running time. 


\subsection{Proof of Theorem~\ref{thm:fvs}}
\mespfvs and \disjointmesp are FPT-equivalent from Lemma \ref{lem:mespfvscorr}. Observation~\ref{lem:skelnum} upper bounds the number of skeletons by $2^{\OO(k(\log k + \log \ell))}n^2$. 
Then, 
we show that \dmesp and \probenskel are FPT-equivalent and 
for each skeleton we have at most $2^{\OO(k\log k)}$ enriched skeletons. 
Finally, given an instance of \probenskel, we construct an instance of \amesp in polynomial time. The
\amesp~problem can be solved in $\OO(\ell^2 \cdot 2^{\OO(k\log k)} n^{\OO(1)})$ time, and this completes the proof of Theorem~\ref{thm:fvs}. 

\section{(1+$\epsilon$)-factor parameterized by feedback vertex set}\label{sec-ptas}



\begin{theorem}\label{thm:fvsapprox}
	For any $\epsilon>0 $, there is an $ (1+\epsilon)$-factor approximation algorithm for \mespfappr~running in time $\OO(2^{\OO(k\log k)}n^{\OO(1)})$.
	
\end{theorem}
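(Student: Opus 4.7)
The plan is to modify the exact FPT algorithm of Theorem~\ref{thm:fvs} by discretizing the distance information it tracks, trading the $\ell$ factor in the state space for a factor $1/\epsilon$. The $\ell^k$ term in Theorem~\ref{thm:fvs} arises from enumerating distance profiles $f:X\to[\ell]$ in the skeleton (Observation~\ref{lem:skelnum}), and a further $\ell$ factor enters the \amesp dynamic program through the variable $(\ssign)\srem$. If we only demand a $(1+\epsilon)$-factor approximation, it suffices to track distances at granularity $\delta:=\lceil \epsilon\ell/c\rceil$ for a suitably chosen constant $c$, leaving only $O(1/\epsilon)$ buckets in place of $\ell$.

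Concretely, for each guess $\ell\in\{1,2,\dots,n\}$ iterated from smallest to largest, I would rerun the pipeline from Section~\ref{sec:outline2} through Section~\ref{sec-path} with the following modifications. In Definition~\ref{def:skeleton}, restrict $f$ to the discrete set $\{0,\delta,2\delta,\dots\}\cap[0,\ell]$, so the number of choices per $x\in X$ drops from $\ell$ to $O(1/\epsilon)$. In the \amesp DP of Section~\ref{sec:annotated}, restrict $\srem$ to multiples of $\delta$, and when transitioning across a tree edge round the updated distance up to the nearest such multiple before storing it. A skeleton is accepted if its discretized profile satisfies $\tilde f(x)\leq\ell$, and the algorithm returns the smallest $\ell$ for which the modified pipeline succeeds.

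For correctness, any optimal path $P^*$ of eccentricity $\ell^*$ induces a discretized skeleton $\tilde\skel^*$ by rounding each $f^*(x)$ up to the nearest multiple of $\delta$; since $f^*(x)\leq\ell^*$ we have $\tilde f^*(x)\leq\ell^*+\delta$, so $\tilde\skel^*$ is accepted once the guess reaches $\ell=\ell^*$. Conversely, any path produced by the discretized algorithm has true eccentricity at most $\ell+O(\delta)\leq (1+\epsilon)\ell$, because each $f$-value and each DP rounding contributes additive error at most $\delta$, and the coverage certificate for a single non-$S$ vertex invokes only $O(1)$ such roundings (namely the one applied to the closest $u\in X$ together with the in-tree distance, which is already tracked faithfully). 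For the running time, the discretized skeleton count from Observation~\ref{lem:skelnum} becomes $2^{\OO(k\log k)}\cdot(1/\epsilon)^{\OO(k)}\cdot n^2$, and the bound in Lemma~\ref{lem:annotated} improves to $\OO((1/\epsilon)^2\cdot 2^{\OO(k\log k)}\cdot n^{\OO(1)})$; both simplify to $\OO(2^{\OO(k\log k)}\cdot n^{\OO(1)})$ for fixed $\epsilon$.

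The main obstacle is verifying that rounding up in the \amesp transitions neither forbids valid discretized skeletons nor validates skeletons whose realization violates the $(1+\epsilon)\ell$ eccentricity bound. This requires auditing the eleven $\stype$-cases of Section~\ref{sec:annotated} to confirm that each case's correctness proof goes through when $\ell$ is reinterpreted as an approximate cutoff, and also checking that the marking procedure and Reduction Rules~\ref{rr:dmesp1}--\ref{rr:dmesp3} of Section~\ref{sec:disjoint} still preserve the $O(k)$ bound on retained components of $G-S$ after replacing the exact $f$ by the rounded $\tilde f$; both should follow by showing that weakening the coverage condition by an additive $\delta$ can only mark additional components, which does not hurt soundness.
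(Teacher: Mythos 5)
Your proposal follows essentially the same route as the paper: both eliminate the $\ell^k$ term by coarsening the distance profile $f:X\rightarrow[\ell]$ in the skeleton and paying a $(1+\epsilon)$ factor through the triangle inequality $d_G(v,P)\leq d_G(v,x)+d_G(x,P)$ for vertices whose coverage is certified via some $x\in X$. The paper is even more aggressive than you are: it replaces $[\ell]$ by just the two buckets $\{\epsilon\ell,\ell\}$, with asymmetric certification thresholds ($d_G(v,x)\leq\ell$ when $f(x)=\epsilon\ell$, and $d_G(v,x)\leq\epsilon\ell$ when $f(x)=\ell$), whereas your $\OO(1/\epsilon)$ uniform buckets of width $\delta\approx\epsilon\ell$ give a more transparent completeness/soundness argument (rounding $f$ up by at most $\delta$ relaxes each certification test just enough to retain everything the optimum certifies, while inflating the realized distance by at most $\delta\leq\epsilon\ell$). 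The one place you overcomplicate matters is in also discretizing $(\ssign)\srem$ in the \amesp dynamic program. This is unnecessary: $\ell\leq n$, so the $\ell^2$ factor in Lemma~\ref{lem:annotated} is already polynomial and absorbed into $n^{\OO(1)}$; only the skeleton count of Observation~\ref{lem:skelnum} carries an exponential $\ell^k$ dependence. Dropping that step removes your self-identified ``main obstacle'' of re-auditing the eleven $\stype$-cases under rounded distances --- the paper indeed leaves the \amesp DP entirely untouched and confines the approximation to the skeleton's $f$.
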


We make use of our algorithm in Theorem \ref{thm:fvs} 
that runs in  $\OO(2^{\OO(k\log k)}\ell^kn^{\OO(1)})$ time. 
Notice that, 
$ \ell ^k $  comes because of the number of skeletons 
(Observation \ref{lem:skelnum}). 
Specifically, 
for the 
function $f:X\rightarrow [\ell]$ that maintains a distance profile of the set of vertices of $S$ that do not appear on $P$. 
To design a $ (1+\epsilon)$-factor FPT approximation algorithm, 
we replace the image set $ [\ell] $ with a set of fixed size using $ \epsilon $ such that we  approximate the shortest distance of the set of vertices of $S$ that do not appear on $P$, with the factor $ (1+\epsilon) $. The rest is similar to 
Theorem \ref{thm:fvs}. Below we describe the  result, the procedure and its correctness in detail. 

\noindent Let the function $f:X\rightarrow \{ \epsilon\ell, \ell\}$ 
denote the approximate shortest distance of each vertex $x\in X$ from a hypothetical solution $P$ of \mespfappr. Formally, 

\medskip

\begin{equation*}
 f(v) = 
\begin{cases}
\epsilon\ell  & \text{if}~ d_G(v, P) < \epsilon\ell,\\
\ell  & \text{if}~   \epsilon\ell \leq d_G(v, P)  \leq \ell.
\end{cases}
\end{equation*}




\medskip

\noindent \textbf{Correctness.}
Suppose that $P^*$ is a shortest path, with eccentricity $\ell$ and the function $f$ as defined in the proof of  Theorem \ref{thm:fvs}, 
returned by the algorithm in  Theorem \ref{thm:fvs}.   
We prove that for each vertex $v\in V(G)$, $d_G(v,P)\leq (1+\epsilon)\ell$. 
Observe that for a vertex $x\in X$, 
if $1 \leq d_G(x,P^*)<\epsilon \ell$, then for a correct guess of $f$,  $f(x)=\epsilon \ell$ and $ d_G(x,P^*) < \epsilon\ell$. Also if $\epsilon\ell \leq d_G(x,P^*)\leq \ell$, then for a correct guess of $f$, $f(x)=\ell$ and $ d_G(x,P^*)\leq \ell$. 
Recall that,
in the algorithm when we construct instances for a good function $\gamma$ (reducing to instance of 
\amesp), 
we remove such vertices to construct an instance of \amesp. 
The assumption (or guess) we made was that the eccentricity requirement for $v$ is satisfied using $x$. 
More explicitly, we use the following conditions: if $f(x)= \epsilon \ell ~(\text{resp,}~f(x)=\ell)$, then the eccentricity requirement for the vertex $v$ is satisfied using $x$ if $d_G(v,x) \leq \ell ~(\text{resp,}~ d_G(v,x) \leq \epsilon\ell)$. Now consider a vertex $v\in V(G)\setminus S$. Suppose that there exists a  shortest path from $v$ to $P^*$ containing no vertex from $S$, then by the description and correctness of algorithm of  Theorem \ref{thm:fvs}, we obtain that $d_G(v,P)\leq \ell$. Next, suppose that the 
shortest path from $v$ to $P^*$ contains a vertex $x\in X$, then  $d_G(v,x)+d_G(x,P^*)\leq \ell$. Therefore, for such vertices, while $d_G(x, v)\leq \ell$ and $d_G(x,P)<\epsilon\ell$, we obtain that  $d_G(v,P)\leq d_G(x,v)+d_G(x,P)\leq \ell+\epsilon \ell= (1+\epsilon) \ell$ and similarly, if $d_G(x,v)\leq \epsilon \ell$ and $d_G(x,P) \leq \ell$, then $d_G(v,P)\leq d_G(x,v)+d_G(x,P)\leq \epsilon\ell+ \ell = (1+\epsilon) \ell$. This completes the correctness of the proof of Theorem \ref{thm:fvsapprox}.

\section{Disjoint Paths Deletion Set}\label{sec:paths}
\vspace{-2mm}



In this section, we design an FPT algorithm for the \mespshort problem parameterized by the disjoint paths deletion set ($dpd$).

\begin{theorem}\label{thm:disjoint}
There is an algorithm for \mespdpd running in time $\OO(2^{\OO(k\log k)}n^{\OO(1)})$. 

\end{theorem}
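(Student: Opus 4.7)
The plan is to adapt the algorithm developed in Section \ref{sec:fvs} for \mespfvs to the disjoint paths deletion set setting, exploiting the more restrictive structure of $G-S$ to eliminate the $\ell^k$ factor. I reuse the same reduction chain \mespdpd $\to$ \dmesp $\to$ \probenskel $\to$ \amesp. The source of the $\ell^k$ factor in Observation \ref{lem:skelnum} is the distance-profile function $f \colon X \to [\ell]$ inside a skeleton (Definition \ref{def:skeleton}). The strategy is to replace the image $[\ell]$ by a set $Q$ of candidate distance values whose size is bounded by a polynomial in $k=|S|$, so that the total number of skeletons becomes $2^{\OO(k\log k)} n^{\OO(1)}$.

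The construction of $Q$ uses the structural rigidity of $G-S$: any shortest $m_0$-to-$m_{|M|+1}$ path $P$ consists of at most $k+1$ segments, each either an edge of $G$ or a contiguous subpath of some disjoint path $Q_j$ in $G-S$. For any $x \in X$, the minimum distance $d_G(x,P)$ is realised either (i) at some $m_i \in M \cap V(P)$, contributing a value from the precomputable set $\{d_G(x,u) : u \in S\}$ of size $\OO(k)$, or (ii) at an interior vertex of a segment through some $Q_j$ with specific endpoints $(\alpha,\beta)$, giving $\min_{w \in Q_j[\alpha,\beta]} d_G(x,w)$. Since the endpoints $(\alpha,\beta)$ are determined by the choice of $Q_j$ and a consecutive pair $(m_i,m_{i+1})$, and the ``canonical'' attachment points are neighbours of $m_i,m_{i+1}$ on $Q_j$ chosen to produce a shortest $m_i$-$m_{i+1}$ path, there are only $\OO(k^3)$ such combinations. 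Hence, for each $x \in X$ the set of candidate values of $d_G(x,P)$ has size $\OO(k^3)$, and $Q$ is defined as the union of these values over all $x \in X$, giving $|Q| \le \mathrm{poly}(k)$ independent of $\ell$.

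With $f \colon X \to Q$ in the skeleton, the bound from Observation \ref{lem:skelnum} becomes $n^2 \cdot 2^k \cdot k! \cdot |Q|^k \cdot (2k+2)^k = 2^{\OO(k\log k)} n^{\OO(1)}$, independent of $\ell$. The reductions to \probenskel (with the enriched profile $\gamma$ contributing another $2^{\OO(k\log k)}$ factor) and then to \amesp proceed verbatim as in Sections \ref{sec:outline2}--\ref{sec:pathcover}. The resulting \amesp instance is on a forest whose components are subpaths of the disjoint paths of $G-S$; a dynamic program along each path, analogous to but simpler than the one in Section \ref{sec:annotated}, solves \amesp in $2^{\OO(k\log k)} n^{\OO(1)}$ time (since the branching variable \stype collapses in the path case and no $\ell^2$ term is needed). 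Composing the reductions yields the claimed running time.

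The main obstacle will be establishing completeness of the restricted range: for every yes-instance of \mespdpd one must exhibit a solution $P$ whose induced distance profile from $X$ actually lies in $Q^X$. This requires an exchange argument along the lines of the safeness proof of Reduction Rule \ref{rr:dmesp1}(5): among all shortest-path solutions realising a given skeleton and $\gamma$, one can always reroute each segment through a disjoint path $Q_j$ so that its endpoints are the canonical ``first/last neighbour of $m_i,m_{i+1}$ on $Q_j$'' (subject to the shortest-path constraint), after which each value $d_G(x,P)$ coincides with a precomputed entry in $Q$. Verifying that this exchange preserves both the shortest-path property and the constraints imposed by $g$ is the most delicate part of the argument.
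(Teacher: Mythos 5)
Your high-level plan coincides with the paper's: rerun the \mespfvs{} machinery after replacing the image $[\ell]$ of the distance-profile function $f$ by a set $Q$ of size $\mathrm{poly}(k)$, so that \cref{lem:skelnum} gives $2^{\OO(k\log k)}n^{\OO(1)}$ skeletons. However, the step you yourself flag as ``the most delicate part'' --- showing that some solution's distance profile actually lands in your $Q$ --- is the entire content of the theorem, and the route you sketch for it does not close. Your $Q$ is built from \emph{canonical} attachment points of each segment on a path component $Q_j$, so you need an exchange argument rerouting every segment to its canonical endpoints while preserving the shortest-path property, the coverage of far-away vertices, \emph{and} the $g$-constraints simultaneously; none of this is established, and without canonicalization the number of values $\min_{w\in Q_j[\alpha',\beta']} d_G(x,w)$ over all feasible endpoint pairs $(\alpha',\beta')$ is not bounded by $\mathrm{poly}(k)$ (it can be as large as $\Omega(\ell)$, and enumerating endpoint pairs per segment costs $n^{\OO(k)}$, which is only XP).

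The paper avoids the exchange argument entirely by proving (\cref{lem:disjointpath}) that for \emph{every} solution path $P$ and every $w\in S$ one has $d_G(w,P)\in Q$ for the solution-independent set $Q=\{d_G(x,y),\,d_G(x,y)\pm 1 : x,y\in S\}$ of size $\OO(k^2)$. The observation you are missing is local: let $P_w$ be a shortest path from $w$ to $P$ with foot $w^*\in V(P)$. If $w^*\notin S$ and $w^*$ is not adjacent to a vertex of $M$ on $P$, then $w^*$ is an interior vertex of a segment lying inside a path component of $G-S$, so \emph{both} of its neighbours in $G-S$ already lie on $P$; hence the predecessor of $w^*$ on $P_w$ cannot lie in $G-S$ (it would be a closer vertex of $P$) and must be a vertex $z\in S$, giving $d_G(w,P)=d_G(w,z)+1$. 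The remaining cases ($w^*\in S$, or $w^*$ adjacent to some $m_j\in M$) give $d_G(w,P)=d_G(w,w^*)\in\{d_G(w,m_j),\,d_G(w,m_j)-1\}$. This is where the disjoint-paths structure is genuinely used --- degree at most $2$ in $G-S$ forces the approach path to land next to an $S$-vertex --- and it yields the bound with no rerouting and no reference to canonical segments. I would recommend replacing your canonical-attachment construction with this direct case analysis; the rest of your write-up (reusing the reduction chain through \dmesp, \probenskel{} and \amesp) then goes through as stated.
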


We make use of the algorithm 
for \mespfvs (Theorem \ref{thm:fvs} ) 
that runs in  $\OO(2^{\OO(k\log k)}\ell^kn^{\OO(1)})$ time. Notice that, the eccentricity parameter $ \ell $ occurs in the running time due to the size of skeletons that contain $ \ell ^k $ (\cref{lem:skelnum}) term. Now this $ \ell^k $ term comes because of the function $f:X\rightarrow [\ell]$ that is defined to maintain a distance profile of the set of vertices of $S$ that do not appear on $P$. For the case of $ dpd $, we can show that there is a set $Q \subseteq [\ell]$ with  $|Q| \leq 2k^2$ (\cref{lem:disjointpath}~which we prove below). We can define a function $f:X\rightarrow Q$ that will maintain the distance profile of the set of vertices of $S$ that do not appear on $P$. The rest of the algorithm is exactly as similar as for \cref{thm:fvs}. Hence we obtain the following result.

\begin{lemma}\label{lem:disjointpath}
	Let $ (G, S, k) $ be a yes instance of \mespdpd, and $ P $  be a hypothetical solution. Then there is a set $ Q \subseteq [\ell]$ of size $\leq  2k^2$  such that for each $w\in S$, $d_G(w, P) \in Q$. Moreover, one can construct such a $Q$ in $O(k^2 n^2) $ time.
	
\end{lemma}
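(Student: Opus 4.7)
The plan is to prove that for $w\in S$, the possible values of $d_G(w,P)$ lie in a small, structurally determined set $Q$ built from pairwise distances among vertices of $S$ together with their $\pm 1$ shifts. Let $M := V(P)\cap S$ (so $|M|\le k$), and recall that since $G-S$ is a disjoint union of paths and $P$ is a shortest path in $G$, the vertex set decomposes as $V(P) = M \,\cup\, \bigsqcup_i [a_i,b_i]$, where each $[a_i,b_i] = V(P)\cap V(P_i)$ is a contiguous subpath of some path $P_i$ of $G-S$ that $P$ visits.

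Fix $w\in S$ and let $z^*\in V(P)$ achieve $d_G(w,P)=d_G(w,z^*)$. A shortest $w$-to-$z^*$ walk in $G$ cannot revisit any vertex of $V(P)$ other than $z^*$, for otherwise a prefix ending at a previously-visited $V(P)$-vertex would contradict the minimality of $z^*$. I would then do a case analysis on $z^*$ and the penultimate vertex of this walk:
\begin{itemize}
\item[(a)] If $z^*\in M$, then $d_G(w,P)=d_G(w,m)$ for $m=z^*\in M\subseteq S$.
\item[(b)] If $z^*\notin M$ and the walk's last edge is $s^* z^*$ with $s^*\in S$, then $d_G(w,P)=d_G(w,s^*)+1$; minimality of $z^*$ rules out $s^*\in M$ (else $s^*\in V(P)$ would be strictly closer), so $s^*\in S\setminus M$.
\item[(c)] If $z^*\notin M$ and the walk's last edge lies inside $V(P_i)$, then, since both $P_i$-neighbors of any interior vertex of $[a_i,b_i]$ belong to $V(P)$, $z^*$ must be an endpoint of the subpath, say $z^*=a_i$. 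Using the $P$-edge $m_i' a_i$ with $m_i'\in M$ and the inequality $d_G(w,a_i)\le d_G(w,m_i')$ (as $m_i'\in V(P)$), the triangle inequality forces $d_G(w,a_i)\in\{d_G(w,m_i'),\,d_G(w,m_i')-1\}$.
\end{itemize}

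Combining the three cases, every value $d_G(w,P)$ for $w\in S$ lies in
\[
Q := \{d_G(w,m) : w\in S,\,m\in M\} \cup \{d_G(w,s)+1 : w\in S,\,s\in S\setminus M\} \cup \{d_G(w,m)-1 : w\in S,\,m\in M\}.
\]
The three contributions have sizes at most $k|M|$, $k(k-|M|)$, and $k|M|$, hence $|Q|\le k(k+|M|)\le 2k^2$. To construct $Q$ one only needs the all-pairs distances $d_G(w,s)$ for $w,s\in S$: BFS from each vertex of $S$ computes these in $\OO(k(n+m))$ time, and assembling $Q$ from $M$ (which, in the algorithmic use of \cref{thm:disjoint}, is obtained from the guessed skeleton) adds only $\OO(k^2)$, comfortably within the claimed $\OO(k^2 n^2)$ budget.

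The main obstacle is Case (c): one must argue rigorously that a closest vertex $z^*\in V(P)$ cannot be strictly interior to a subpath $[a_i,b_i]$ when the shortest walk arrives via a $P_i$-edge, because the walk's penultimate vertex would then be a $P_i$-neighbor of $z^*$ lying in $V(P)$, violating the no-revisit property established above. This pins $z^*$ to the endpoint $a_i$ (or $b_i$) and makes the triangle-inequality bound on $d_G(w,a_i)$ applicable. A few mild edge cases, for instance when $a_i$ is an endpoint of $P$ that does not belong to $S$, can be absorbed by additionally including the distances $d_G(w,m_0)$ and $d_G(w,m_{|M|+1})$ to the two (fixed) endpoints of $P$, adding only $\OO(k)$ values and preserving the $\OO(k^2)$ bound.
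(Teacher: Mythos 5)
Your proposal is correct and follows essentially the same route as the paper: both identify the closest vertex $w^*$ of $P$ to $w$, observe that a shortest $w$-to-$w^*$ path cannot meet $V(P)$ earlier, and split into the same three cases (landing in $S\cap V(P)$, arriving via an edge from an $S$-vertex, or landing at an endpoint of a segment of $P$ inside $G-S$ adjacent to a vertex of $M$), yielding the set of pairwise $S$-distances shifted by $0$ and $\pm 1$. If anything, your treatment of the interior-vertex exclusion in case (c) and of the path-endpoint edge case is slightly more careful than the paper's.
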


\begin{proof}
	Let $ Q $ be the set defined as $ Q= \{ d_G(x, y),  d_G(x, y)+1, d_G(x, y)-1 \colon x,y \in S \} $. Clearly, $ |Q| \leq 2k^2 $ and  the set $ Q $ can be computed in $ O(k^2n^2) $ time. Now it remains to show that for any    solution path $P $ for a yes instance $(G,S,k)$ of \mespdpd problem, we have that for every $ w\in S $, $d_G(w, P) \in Q$. Firstly, observe that if $ w \in S \cap P $, then $ d_G(w, P) =0 $ and clearly the value $0 \in Q  $, as $ d(x,x)=0 $ for any $ x \in S $. Now  for each $ w \in S \setminus P $, let $ P_w $ be a shortest path from 	$ w $ to $ P $ and let $w^* = P\cap P_w$ and $z $ be the nearest vertex in $ S $ to $ w^* $ on $ P_w $. Clearly, $ z \in S \cap P_w $ and $d_G(w, P)=  d_G(w, z)+ d_G(z, w^*)$. Let $ M= P \cap S $. Now consider the following three cases. 
(the cases are illustrated in Figure \ref{fig:disjoint_path}) 	

\noindent\textbf{Case 1.} If $ w^* \in S  $, then $ z=w^* $. And  the value $ d_G(w, P)$ is essentially $ d_G(w, w^*)  $. As $ w, w^* \in S $ so $d_G(w, w^*) \in  \{ d_G(x, y) \colon x,y \in S \} \subseteq Q$. 

\noindent\textbf{Case 2.} If $ w^* \notin S  $ and $w^*\notin N(M)$, then there must be  a subpath $ P_{j} $ of some pair $(m_j,m_{j+1})$ such that $ w^* \in P_{j} $. Since $ w^* $ is not an  endpoint of the subpath $ P_j $, we have that $ d_G(z, w^*)=1 $ and $ d_G(w,w^*)= d_G(w, z) + 1 $. As $ w, z \in S $, so  $d_G(w, w^*) \in  \{ 1+ d_G(x, y) \colon x,y \in S \} $ which is a subset of $ Q $.  

\noindent\textbf{Case 3.} If $ w^* \notin S \cap P  $ and $w^*\in N(M)$, then there must be  a subpath $P_{j} $ of some pair $(m_j,m_{j+1})$ such that $ w^* \in P_{j} $. Since $ w^* $ is an  endpoint of the subpath $ P_j $, we have that the value $ d_G(w, w^*) $ must be either $ d_G(w, m_j)-1 $ or  $ d_G(w, m_{j+1})-1 $. As  $w,  m_j , m_{j+1} \in S $, so $d_G(w, w^*) \in  \{  d_G(x, y)-1  \colon x,y \in S \}$ which is a subset of $ Q $.	
\qed
\end{proof}

%

\begin{figure}[ht!]
	\centering
	\includegraphics[width=0.8\textwidth]{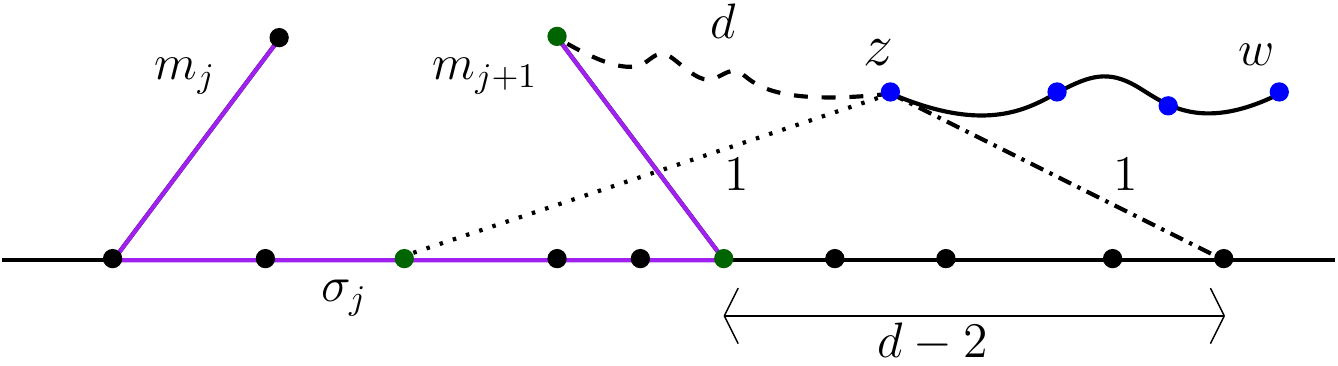}
	\caption{Illustration of the proof of \cref{lem:disjointpath}. The dashed line, the dotted line and the dashed dotted line describes the Cases  1, 2 and 3 respectively. The green colored vertices indicate the vertex $w'$ in the respective cases. The blue colored vertices represent the vertex set $Z$. The subpath between $m_j$ and $m_{j+1}$  is represented $P_j$. }
	\label{fig:disjoint_path}
\end{figure}

		\section{Split Vertex Deletion Set}\label{sec:split}

In this section, we design an FPT algorithm for the \mespshort problem parameterized by the split vertex deletion set ($svd$). Let $(G,S, k,\ell)$ be a yes instance of \mespsvd, and $P$ be a hypothetical solution where $S$ is a split vertex deletion set. Our main objective is to get $P$. Towards that, our main idea is to partition the vertices from $ S $ that does not appear in $ P $ into a constant number of parts such that we have an assurance that any solution (if it exists) respecting the partition gives us a solution. In particular,  we partition the vertices from $ S $ that does not appears in $ P $, say $ X $ into five disjoint sets $ X_{=1},  X_{=2},$ $ X_{=3},  X_{=4}$ and $ X_{\geq 5} $ such that $ X_{=i}, i\in [4]$ is the set of vertices of $X$ that are at a distance exactly $ i $ from the hypothetical solution path, and  $  X_{\geq 5} $ is the set of vertices that are at a distance at least five from the hypothetical solution path. Rest of the process we describe below explicitly. The following is the main result of this section. 

\begin{theorem}\label{thm:split}
	There is an algorithm for \mespsvd running in time $ O(2^{\OO(k\log k)}\cdot n^{O(1)}) $.

\end{theorem}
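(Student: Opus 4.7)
The plan is to adapt the skeleton framework of Theorem~\ref{thm:fvs}, but to replace the $\ell^k$ factor (which in Observation~\ref{lem:skelnum} arises from enumerating the distance function $f:X\rightarrow [\ell]$) by a $2^{O(k)}$ factor, in the spirit of how Lemma~\ref{lem:disjointpath} is used to prove Theorem~\ref{thm:disjoint}. The structural fact we exploit is that every connected component of the split graph $G-S$ has diameter at most $3$.

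First I would prove the key observation hinted at in the theorem's motivation: for any hypothetical solution $P$ and any $x\in X := S\setminus V(P)$, if $x$ has a neighbor in a component $C$ of $G-S$ that meets $V(P)$, then $d_G(x,P)\leq 1+\mathrm{diam}(C)\leq 4$. Contrapositively, any $x$ with $d_G(x,P)\geq 5$ must reach $P$ only via other vertices of $S$, and hence $d_G(x,P)=\min_{m\in M} d_G(x,m)$ where $M:=V(P)\cap S$. This quantity is completely determined once $M$ and its ordering on $P$ are fixed, and is readable from the precomputed pairwise distance matrix of $G$.

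Using this, I redo the skeleton enumeration from Section~\ref{sec:fvs}: guess $M\subseteq S$ with an ordering $\pi$, the two endpoints of $P$, a classification of each $x\in X$ into one of the five classes $X_{=1}, X_{=2}, X_{=3}, X_{=4}, X_{\geq 5}$, and the $g$-function indicating for each $x$ which segment of $P$ contains its nearest vertex on $P$. The number of such enriched skeletons is at most $n^2\cdot 2^k\cdot k!\cdot 5^k\cdot (2k+2)^k = 2^{O(k\log k)}\cdot n^{O(1)}$, replacing the $\ell^k$ factor by $5^k$. For $x\in X_{\geq 5}$ the distance is checkable directly against $\min_{m\in M} d_G(x,m)$ and against $\ell$; for $x\in X_{=i}$ with $i\leq 4$ the distance constraint becomes a local requirement on the subpath of $P$ through the component of $G-S$ housing the nearest $G-S$-neighbor of $x$.

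I then follow the rest of Section~\ref{sec:fvs}: enrich with a function $\gamma$ recording for each consecutive pair $(m_i,m_{i+1})$ either a direct edge or the component of $G-S$ through which the subpath runs, reduce to an \amesp-type instance, and solve it. The split-graph diameter bound ensures that every feasible $m_i$-to-$m_{i+1}$ subpath has length at most $5$, so each family $\mathcal F_i$ can be enumerated in polynomial time. The main obstacle I anticipate is verifying the joint constraints imposed on a single subpath by multiple $X_{=j}$ vertices assigned to that segment by $g$: I would handle this by absorbing those constraints into the construction of $\mathcal F_i$, keeping only subpaths whose internal vertices witness the exact distance $j$ to every $X_{=j}$ vertex assigned to the segment, and then appealing to Lemma~\ref{lem:permutation} and the shortest-path sanity checks analogous to Reduction Rule~\ref{rr:dmesp1} to prune inconsistent skeletons. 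The overall running time is $2^{O(k\log k)}\cdot n^{O(1)}$.
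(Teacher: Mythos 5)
Your central idea---replacing the $[\ell]$-valued distance profile by the five classes $X_{=1},\dots,X_{=4},X_{\geq 5}$, justified by the fact that induced (hence shortest) paths inside the split graph $G-S$ have length at most $3$---is exactly the idea the paper uses, and your $2^{\OO(k\log k)}$ bound on the number of resulting skeletons is correct. But two steps of the execution would fail as written. The more serious one is the claim that for $x\in X_{\geq 5}$ we have $d_G(x,P)=\min_{m\in M}d_G(x,m)$. The diameter bound only gives that a shortest $x$-to-$P$ path of length at least $5$ must contain a vertex of $S\setminus\{x\}$; that vertex may lie in $X$ rather than in $M$, so $x$ can be covered by relaying through some $x'\in X_{=i}$ with $d_G(x,x')+i\le\ell$ (or through a chain of such vertices) even when every $m\in M$ is farther than $\ell$ from $x$. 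Checking only $\min_{m\in M}d_G(x,m)\le\ell$ would therefore reject valid skeletons. This relay mechanism is not a corner case: it is precisely how the paper's correctness argument covers \emph{every} vertex at distance at least $5$ from the path---the guessed partition pins down $d_G(u,P)=i$ for each $u\in X_{=i}$, and any far vertex $v$ is shown to reach the solution through such a $u$, giving $d_G(v,P)\le d_G(v,u)+i\le\ell$. Your coverage test has to be phrased this way, and for all of $V(G)$, not only for $X_{\geq 5}$.

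The second issue is that \amesp is a problem on forests and the dynamic program of Section~\ref{sec:annotated} runs on trees, whereas $G-S$ is a split graph; you cannot hand the residual covering problem to that subroutine. Fortunately you do not need to: after handling $\ell\le 4$ in polynomial time via \cref{lem-fixed}, every connected component of $G-S$ has diameter at most $3<\ell$, and once the at most two clique vertices $C^*=V(P)\cap C$ are guessed and inserted into the ordering (a guess your sketch omits), each $m_j$-to-$m_{j+1}$ segment has at most one internal vertex, necessarily from the independent side $I$. The paper therefore bypasses the whole \disjointmesp/\probenskel/\amesp pipeline: it guesses $M$, $C^*$, the endpoints, the unique ordering via \cref{lem:permutation}, the partition $\mathcal{X}$ and the segment-assignment $g$, and then picks one admissible $I$-vertex per segment arbitrarily. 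Your route is repairable along these lines, but as proposed these two points are genuine gaps.
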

\begin{proof}
Consider an instance $(G,S,k)$ of \mespsvd.	Let $C\uplus I$ be partition of $V(G-S)$  such that $C$  and $I$ induce a clique and independent set, respectively. First, we describe our algorithm. We assume that $G$ is a connected graph, else it is a trivial no instance. The algorithm first guesses a set $M\subseteq S$ which is an (exact) intersection of $S$ with the vertex set of the hypothetical solution. Observe that $|V(P)\cap C|\leq 2$, as every shortest path in the graph is also an induced path in the graph. The algorithm guesses a set $C^*\subseteq V(C)$ of size at most 2 which is an (exact) intersection of $C$ with the vertex set of the hypothetical solution. Next, the algorithm uses  \cref{lem:permutation}~to find the (unique) ordering $\pi$ of the vertices in $M\cup C^*$ on the hypothetical solution path. Let $|M\cup C^*|=t$ and $\{m_1,m_2, \dots, m_{t}\}$ be the guessed ordering $\pi$ of vertices of $M\cup C^*$ in the hypothetical solution path and let $X= S\setminus M$. Next, we guess endpoints of the solution path. Let $m_0, m_{t+1}$ be the guessed endpoints of the hypothetical solution path such that $m_0$ is the {\em first} vertex and $m_{t+1}$ is the {\em last} vertex on the hypothetical solution path (we might also obtain $m_1,m_{t+1}$ such that $m_0=m_1$ or $m_t=m_{t+1}$, or $m_0=m_{t+1}$). Observe that for a correct guess $M,C^*,m_0,m_{t+1}$ and ordering $m_0,m_1,\dots,m_{t+1}$, following must be satisfied, else we discard the combination of ordering $M,C^*,m_0,m_{t+1}$ and make another combination of guesses.  If $m_im_j$ is an edge in $G$, then $j=i+1$. 
We assume that minimum eccentricity is greater than four; otherwise, we can solve the problem in polynomial time (by \cref{lem-fixed}).
   
Next, we guess a partition ${\cal X}$ of the set $X$ in 5 disjoint sets $ X_{=1},  X_{=2},$ $ X_{=3},  X_{=4}$ and $ X_{\geq 5} $ such that $ X_{=i}, i\in [4]$ is the set of vertices of $S$ that are at a distance exactly $ i $ from the hypothetical solution path, and  $  X_{\geq 5} $ is the set of vertices that are at a distance at least five from the hypothetical solution path. For a vertex $ v \in X_{=i}, i\in[4]$ (resp. $X_{\geq 5}$), if $ d_G(v,m_j)< i  $ (resp. $d_G(x,m_j)< 5$), for some $j\in [0,t+1]$ then we discard such partition and guess another partition of $ X $. 
Let $ X' =  X_{=1} \cup X_{=2} \cup X_{=3} \cup X_{=4} \subseteq X $ be the set of all vertices in $X$ such that for  each $v\in X'$,  $ d_G(v,m_j)>i,$ where $ v \in X_{=i} $ where $ i\in [4]$ for all $j\in [0,t+1]$. In this case,  for each vertex in $X'$ there should exist a vertex in $ m_j$ to $ m_{j+1}$ subpath in the hypothetical solution that comes from $G[(V(G)\setminus S)\cup \{m_j,m_{j+1}\} ]$, for some $j\in [0,t+1]$. Recall  that for every $j\in [0,t]$, if $m_jm_{j+1}$ is an edge in $G$ then $m_jm_{j+1}$ is an edge in the hypothetical solution also.
  
    Next, we guess a function $g:X'\rightarrow [0,t]$. Where for a vertex  $ v\in X' $, $g(v)=j$ represents the guessed pair $(m_j, m_{j+1})$ such that there exists a vertex $u \in I$ in $m_j$ to $m_{j+1}$ subpath such that $d_G(u,v)=i$, if $v\in X_{=i}$. If a function $g$  satisfies the following, then we discard that guess of $g$. For some $j\in [0,t]$ and $v\in g^{-1}(j)$, there does not exist any vertex $u\in I$ satisfying (i) $u$ is adjacent to both $m_{j}$ and $m_{j+1}$ where  $(m_{j},m_{j+1}) \notin E$, (ii) $ d(u,v)= i $, where $ v \in X_{=i} $ for some $ i \in [4] $. For each $M\subseteq S$, $X=S\setminus M$, $C^*\subseteq C$, $m_0,m_{|M\cup C^*|+1}\in V(G)$, a partition ${\cal X}= X_{=1}\uplus X_{=2}\uplus X_{=3}\uplus X_{=4}\uplus  X_{\geq 5} $ of $X$, $X'$ as defined above, and a function $g:X'\rightarrow [0,t]$, which are as described above and not discarded by the above description we proceed as follows: For each $i$ such that $m_im_{i+1}$ is not an edge in $G$, we arbitrarily pick a vertex $u\in I$ such that $u$ is adjacent to both $m_{j}$ and $m_{j+1}$ and $ d_G(v,u)=i $ whenever $ v \in X_{=i} $ and $v \in g^{-1}(j)$ together holds. This completes the description of the algorithm. We output the path $P$ by concatenating $m_im_{i+1}$ edge or $m_jum_{j+1}$ subpaths for every $j\in [0,t]$. 

\medskip 

\noindent{\textbf{Running time:}} Next, we analyze the running time of our algorithm. The number of subsets of $S$ is $2^k$, which gives us $2^k$ choices for $M$. The number of choices for $C^*,m_0,m_{|M|+1}$ is at most $\OO(n^4)$. To find the permutation (ordering) of $M\cup C^*$ we have only one choice, which can be found in $\OO(n^{\OO(1)})$ time by  \cref{lem:permutation}. The choices for partition $\cal X$ are bounded by $\OO(5^k)$. The choices for the function $g$ are bounded by $\OO(k^k)$. Therefore, we obtain running time in  $\OO(2^{\OO(k\log k)} \dot n^{\OO(1)})$ of our algorithm.

\medskip
   
   \noindent  {\textbf{Correctness:}} Next, we prove the correctness of our algorithm. We show that a shortest path $P$ is a solution to $(G,S,k)$ of \mespsvd if and only if there exists a $M\subseteq S$, $X=S\setminus M$, $C^*\subseteq C$, $m_0,m_{|M\cup C^*|+1}\in V(G)$, a partition ${\cal X}= X_{=1}\uplus X_{=2}\uplus X_{=3}\uplus X_{=4}\uplus X_{\geq 5} $ of $X$, $X'$ as defined above, and a function $g$, such that $P$ satisfies the properties of  $M,X, C^*, m_0, m_{|M\cup C^*|+1},{\cal X},X',g$.
    
     In the forward direction suppose that $P$ is a solution to $(G,S,k)$ of \mespsvd. Let $M\subseteq V(P)\cap S$, $X=S\setminus M$, $C^*=V(P)\cap C$. Let $\pi$ be the ordering of vertices in $M\cup C^*$ on $P$. Let $m_0,m_{|M\cup C^*|+1}$ be the first and last endpoints of $P$, respectively. Let ${\cal X}= X_{=1}\uplus X_{=2}\uplus X_{=3}\uplus X_{=4}\uplus  X_{\geq 5} $ be partition of $X$ according to distances of vertices from $P$. Let $ X' \subseteq X $ be the set of all vertices in $X$ such that for  each $v\in X'$,  $ d_G(v,m_j)>i, i\in [4]$ for all $j\in [0,t+1]$. For defining function $g$, if for a vertex $v\in X'$ there exists $u_i$ in $m_j$ to $m_{j+1}$ path, then set $g(v)=j$. Observe that $P$  satisfies properties of $M,X, C^*, m_0, m_{|M\cup C^*|+1},{\cal X},X',g$.
     
     In the backward direction suppose that there exists a $M\subseteq S$, $X=S\setminus M$, $C^*\subseteq C$, $m_0,m_{|M\cup C^*|+1}\in V(G)$, a partition ${\cal X}= X_{=1}\uplus X_{=2}\uplus X_{=3}\uplus X_{=4}\uplus X_{\geq 5} $ of $X$, $X'$ as defined above, and a function $g$, such that $P$ is a shortest path that satisfies the properties of  $M,X, C^*, m_0, m_{|M\cup C^*|+1},{\cal X},X',g$. We claim that $P$ is a solution to $(G,S,k)$ of \mespsvd. Towards a contradiction, suppose that  $P$ is not a solution $(G,S,k)$ of \mespsvd and there exists a shortest path $P_1$ which is a solution to $(G,S,k)$ of \mespsvd such that $\ell$ is the minimum integer such that for all $u\in V(G)$, $d_G(u, P_1)\leq \ell$. Let $v\in V(G)$ such that $d_G(v, P_1)\leq \ell$ and $d_G(v, P)>\ell$. Observe that for a correct guess of ${\cal X}$, for every vertex $u\in X^*= \cup_{i=1}^4 X_{=i}$, distance of $u$ to $P$ is optimal, that is for any solution $P^*$ of $(G,S,k)$, $d_G(u,P^*)=d_G(u,P)$.  Therefore, $v\notin X^*$. Recall that $\ell\geq 5$, otherwise the algorithm would have solved the instance completely using  \cref{lem-fixed}. Let $ P_1^v $  be a shortest path from $ v $ to $ P_1 $. Observe, that since $G-S$ is a split graph, every induced path is of length at most $3$. As $P^v_1$ is a shortest path of length at least $5$ ($\ell\geq 6$), then by induced property of shortest path, we have that that there is a vertex $u\in S$ on $P^1_v$ such that $d_G(u,P)=j\leq 4$.  As ${\cal X}$ is a correct guess so for each $u\in X_{=i}, i\in [4]$, we have that $d_G(u, P)=i $. By the observation that $d_G(u, v)\leq \ell -i$ (along the path $ P_1^v $), we obtain that  $d_G(v, P)\leq \ell$, a contradiction. Hence $P$ is also a solution to $(G,S,k)$. This completes the proof of correctness of our algorithm and hence proof of  \cref{thm:split}. \qed 
     \end{proof}

\section{W[2]-hardness for \mespcvd}\label{sec:hardness}

\vspace{-2mm}

In this section, we show that \mespcvd is \wh. This construction is based on \cite{dragan2017minimum} where the authors prove that \mespecc is \wh.

\begin{theorem}\label{thm:hardness}
\mespcvd is $W[2]$-hard.
\end{theorem}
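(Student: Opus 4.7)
\smallskip
\noindent\textbf{Proof Plan.} The strategy is to give a polynomial-time many-one reduction from \dsfull, which is \wh~parameterized by the solution size, to \mespcvd, adapting the gadget used by Dragan and Leitert~\cite{dragan2017minimum} in their proof that \mespecc is \wh. Starting from an instance $(H,k)$ of \dsfull, the goal is to construct a graph $G$, an integer $\ell$, and a set $S \subseteq V(G)$ of size bounded by a computable function of $k$ such that $G-S$ is chordal and $(H,k)$ is a yes-instance of \dsfull iff $(G,S,\ell)$ is a yes-instance of \mespcvd.

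\smallskip
\noindent The construction uses a ``selection'' gadget as in \cite{dragan2017minimum}: introduce a spine $a_0, a_1, \ldots, a_k$, and for each $i \in [k]$ a layer $L_i = \{v^i : v \in V(H)\}$ where each $v^i$ is joined to both $a_{i-1}$ and $a_i$. Every $a_0$--$a_k$ shortest path then has length exactly $2k$ and picks one representative $v^i$ from each layer, encoding a choice of $k$ (not necessarily distinct) vertices of $V(H)$. For every $u \in V(H)$ attach a ``test'' vertex $p_u$ connected so that $p_u$ is within distance $\ell$ of the path iff some chosen representative corresponds to a vertex in $N_H[u]$. With $\ell$ chosen as a small constant, this selection dominates $V(H)$ iff the path has eccentricity at most $\ell$.

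\smallskip
\noindent The main obstacle, which is the one new ingredient compared to \cite{dragan2017minimum}, is to additionally ensure that $G$ has a chordal vertex deletion set of size $g(k)$. The naive gadget is far from chordal: any two copies $v^i, w^i \in L_i$ together with $a_{i-1}, a_i$ induce a $C_4$, and a test vertex $p_u$ with attachment points in different layers creates induced cycles of length at least $4$ through the spine. To control this, I would (i) turn each layer $L_i$ into a clique (shortest-path lengths are preserved because intra-layer edges never shorten an $a_0$--$a_k$ path), which together with the spine yields a chordal ``backbone'' whose only non-chordality comes from the test attachments, and (ii) redesign the test vertices so that all chordality violations are confined to a constant-per-layer number of auxiliary ``hub'' vertices, e.g., introduce $O(k)$ hub vertices $h_1,\ldots,h_{O(k)}$ that aggregate the attachments and through which every $p_u$ communicates with the layers. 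Setting $S := \{a_0,\ldots,a_k\} \cup \{h_1,\ldots,h_{O(k)}\}$, we obtain $|S|=O(k)$, and $G-S$ decomposes into the cliques $L_i$ plus a forest of pendants $p_u$ (with possibly a few chordal ``chunks''), which is chordal. Getting a working version of step (ii) that simultaneously preserves the dominating-set correspondence and achieves $|S|=g(k)$ is the delicate part; a candidate approach is to replace each pendant $p_u$ by a pendant attached to a single hub, and have the hubs carry the distance obligations so that they witness domination only when the correct selection is made in the layers.

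\smallskip
\noindent Given a correct construction, equivalence is routine. A dominating set $D=\{d_1,\dots,d_k\}$ of $H$ yields the path $a_0, d_1^1, a_1, d_2^2, a_2, \dots, a_k$, which is a shortest $a_0$--$a_k$ path of length $2k$, and since $D$ dominates $V(H)$ every $p_u$ is within distance $\ell$ of it. Conversely, any shortest $a_0$--$a_k$ path of eccentricity at most $\ell$ has length $2k$ and hits exactly one vertex per layer; because all other vertices of $G$ can trivially be shown to be within $\ell$ of the spine, the eccentricity constraint reduces to requiring the $k$ chosen representatives to dominate $\{p_u\}_{u\in V(H)}$, i.e., to form a dominating set of $H$.
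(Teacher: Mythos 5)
Your high-level plan (reduce from \dsfull, build a spine-and-layers selection gadget, then exhibit an $O(k)$-size set whose deletion leaves a chordal graph) matches the paper's, but the one genuinely new ingredient --- making the \emph{test} gadgets chordal while still encoding the neighborhood constraints --- is exactly the step you leave unresolved, and the candidate you sketch cannot work. If each test vertex $p_u$ is a pendant attached to a single hub and there are only $O(k)$ hubs, then $d(p_u,P)=1+d(h,P)$ where $h$ is its hub; this value depends only on the hub, not on $u$, so all test vertices sharing a hub are covered or uncovered simultaneously. With $O(k)$ hubs you can therefore encode only $O(k)$ distinct coverage conditions, not the $n$ independent conditions ``$N_H[u]\cap D\neq\emptyset$'' that \dsfull{} demands. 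Any working test gadget must connect $p_u$ specifically to the layer representatives of $N_H[u]$, and it is this web of connections (not the spine, which you already delete) that creates the long induced cycles you need to kill; confining them to $O(k)$ hubs destroys the encoding.

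The paper's resolution is concrete: the test vertices are the original vertices themselves, made into a clique $V^*$, and each representative $u_{ij}$ is joined to each $v_t\in N_G[v_j]$ (in the paper's notation) by a path of length $k+1$, so the eccentricity bound is $k+1$ rather than a small constant; crucially, the paths emanating from a fixed $u_{ij}$ are stitched together into a ``ladder'' by extra edges, which together with the clique on $V^*$ eliminates every induced cycle of length at least four outside $X=\{z_1,\dots,z_{k-1},a,b\}$, a set of size $k+1$. Two further points your sketch glosses over: you assume the solution path is an $a_0$--$a_k$ path, but this must be forced (the paper attaches pendant paths of length $k+1$ at $a$ and $b$ so that any shortest path of eccentricity $k+1$ must contain both endpoints of the spine); and with $\ell$ a small constant it is unclear how distance alone could separate ``dominated'' from ``not dominated'' test vertices. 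As it stands the proposal correctly identifies the obstacle but does not overcome it.
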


%
%

\begin{proof}

Towards proving that the \mespcvd is \wh, we give a polynomial time parameter preserving reduction from \dsfull (\ds in short) parameterized by solution size to \mespcvd. In \ds, we are given a graph $G$, and an integer $k$, the aim is to decide whether there exists a set $S\subseteq V(G)$ of size at most $k$ such that for every vertex $v\in V(G)$, $N[v]\cap S\neq \emptyset$. It is well known that \ds is \wh~parameterized by the solution size $k$ \cite{downey2012parameterized}.

 Consider an instance $(G,k)$ of \ds. First, we describe construction of an instance $(H,S,k')$ of \mespcvd. 
 Let $V(G)=\{v_1,\dots,v_n\}$. To construct the graph $H$ we apply the following procedure (see \cref{fig:hardness} for an illustration of the construction). 
\begin{enumerate}
    \item Initialize $V(H)=V(G)$, that means, we add a vertex $v_i$ in $V(H)$ corresponding to each vertex $v_i$ in $G$. For each pair $v_i,v_j\in V(H), i\neq j$, we add $v_iv_j$ edge in $H$. Let $V^*$ be the set of vertices added in this step to $V(H)$. 

\item Add $k$ sets $U_1, U_2, \ldots, U_k$, where $ U_i = \{ u_{i1}, u_{i2} , \ldots , u_{in} \} $ to $V(H)$. Each $U_i, i\in [k]$ is a set of $n$ vertices. The vertex $u_{ij}$ in each set $U_i$ where $i\in [k],j\in [n]$, represents $i$th representative of $j$th vertex in $G$ and we add $k$ representatives for each vertex. 

\item Add a set of $ (k-1) $ vertices $ \{z_1, z_2, \ldots, z_{k-1} \} $ to $V(H)$. For each $ i\in [k-1]$, make all vertices in $U_i \cup U_{i+1}$ adjacent to $ z_i $, that is, add the set $\{u_{ij}z_i,u_{(i+1)j}z_i \colon i\in [k],\  j\in[n]\}$ of edges to $E(H)$. 

\item Add four	vertices, $ s $, $ a $, $ b $, and $ t $ to $V(H)$, also add two sets $A=\{a_1,a_2,\dots, a_{k}\}$ and $B=\{b_1,b_2,\dots, b_{k}\}$ each of $k$ vertices to $V(H)$. Add edge sets $\tilde{A}=\{a_ia_{i+1}\colon i\in [k-1]\}\cup \{sa_1,a_{k}a\}$ and $\tilde{B}=\{b_ib_{i+1}\colon i\in [k-1]\}\cup \{tb_1,b_{k}b\}$ to $V(H)$. Observe that we have added a path from $s$ to $a$ of length $k+1$ using set $A$ of vertices and edges $\tilde{A}$. Similarly we have added a path from $b$ to $t$ of length $k+1$ using set $B$ of vertices and edges $\tilde{B}$.

\item Make the vertex $  a $ adjacent to all vertices in $ U_1 $ and make the vertex $ b $ adjacent to all vertices in $ U_k $. That is add the set  $\{u_{1j}a,u_{kj}b\colon j\in [n]\}$ of edges to $E(H)$. 

\item For each vertex $u_{ij} \in U_i, i\in [k], j\in [n]$ and each vertex $v_{t}\in N_G [ v_j ] $ add a set $W^{ij}_{v_t}=\{w^{ij}_{t1},w^{ij}_{t2},\dots, w^{ij}_{t(k)}\}$ of $k$ vertices to $V(H)$. Add edge set $\ty^{ij}_{v_t}=\{w^{ij}_{tp}w^{ij}_{t(p+1)}\colon p\in [k-1]\}\cup \{u_{ij}w^{ij}_{t1},w^{ij}_{t(k)}v_t\}$ of edges to $E(H)$. Observe that we have added a $u_{ij}$ to $v_t$ path of length $k+1$ using vertex set $W^{ij}_{v_t}$ and edge set $ \ty^{ij}_{v_t}$ , for each $i\in [k], j\in [n], v_t\in N_G [ v_j ] $. Intuitively if $v_jv_t$ is an edge in the graph $G$ (or $t=j$, as we consider closed neighbourhood of $v_j$), then we add a path of length $k+1$ between each vertex corresponding to $v_j$ that we have added in sets $U_i, i\in [k]$ in step 2, and vertex corresponding to vertex $v_t\in V^*$ that we have added in step 1. 

\item (Make a ladder for $i$th representative of $v_j$ vertex  with the vertices corresponding to $N[v_j]$ in $V^*$.) Next, we take the $u_{ij}$ to $v_t$ paths added in the above step, $u_{ij} \in U_i, i\in [k], j\in [n]$ and $v_{t}\in N_G [ v_j ] $, we further add more edges to vertices in these paths and construct a \emph{ladder} structure. In particular we do as follows.  For each vertex $u_{ij} \in U_j, i\in [k], j\in [n]$ and each pair of vertices $v_{t}v_{t^*}\in N_G [ v_j ] $ we take the  $u_{ij}$ to $v_t$ path and $u_{ij}$ to $v_{t^*}$ path added in above step. We have $W^{ij}_{v_t}=\{w^{ij}_{t1},w^{ij}_{t2},\dots, w^{ij}_{tk}\}$ vertex set corresponding to  $u_{ij}$ to $v_t$ path and $W^{ij}_{v_{t^*}}=\{w^{ij}_{t^*1},w^{ij}_{t^*2},\dots, w^{ij}_{t^*k}\}$ vertex set corresponding to  $u_{ij}$ to $v_t^*$ path in the above step. We further add edges  $\hat{W}^{ij}_{v_tv_{t^*}}=\bigcup_{p\in [k]}w^{ij}_{tp}w^{ij}_{t^*(p)}\cup \{w^{ij}_{tq}w^{ij}_{t^*(q+1)}\colon q\in [k-1]\}\cup \{w^{ij}_{tk}v_{t^*}\}$ to $E(H)$. We say that the graph $H$ induced on vertex sets $W^{ij}_{v_t}\cup W^{ij}_{v_{t^*}}\cup \{u_{ij},v_t,v_{t^*}\}$ together with edge set $\ty^{ij}_{v_t}\cup\ty^{ij}_{v_{t^*}}\cup \{v_tv_{t^*}\}$ is a {\em ladder} between $u_{ij}$ and $v_t,v_{t^*}$.

\end{enumerate}

	This completes the description of the graph $H$.  Observe that the number of vertices in $ H $ is polynomial in $ n $ and $ k $, and the construction can be done in polynomial time. Let $ X= \{ z_1, z_2, \ldots z_{k-1}\} \cup \{a, b\} $. Clearly $ |X|=k+1 $. Let $k'=k+1$. This completes the description of the construction of $(H,k')$. 
	We claim that $G $ has a dominating set of size at most $ k $ if and only if $ H $ has a shortest path $ P $ with eccentricity at most $ (k+1)$. 
\end{proof}

\noindent{\bf Correctness. }
We show that  $(H,S,k')$ is a valid instance of \mespcvd.
First, we show that $X$ is a chordal deletion set of $H$.

\begin{claim}
	$H-X$ is a chordal graph.
\end{claim}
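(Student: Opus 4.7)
My plan is to analyze the structure of $H - X$ by first identifying its connected components, then verifying chordality of each. Note that $X = \{z_1,\ldots,z_{k-1}\}\cup\{a,b\}$ disconnects the construction in three places: removing $a,b$ cuts the trunks from $U_1,U_k$, and removing the $z_i$'s cuts the interconnections between the layers $U_i$ and $U_{i+1}$. Thus $H-X$ splits into (i) the path $s,a_1,\ldots,a_k$; (ii) the path $t,b_1,\ldots,b_k$; and (iii) a main component $H^\star$ containing $V^*$, all the $U_i$'s, and all the $W^{ij}_{v_t}$ sets, which remain connected through the ladders meeting at $V^*$. Components (i) and (ii) are trees and hence trivially chordal, so the content of the claim is that $H^\star$ is chordal.

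The plan for $H^\star$ is to take an arbitrary induced cycle $C$ of length at least $4$ and force a chord. I would partition the analysis by $|V(C)\cap V^*|$. If $|V(C)\cap V^*|\geq 2$, then because $V^*$ is a clique, any two non-consecutive vertices of $C$ inside $V^*$ already supply a chord; the only remaining subcase is two consecutive $V^*$-vertices, which I handle by tracing the returning path of $C$ through the $W$-vertices and using the fact that the $v_t$-to-$W^{ij}$ attachments (both the endpoint edge $w^{ij}_{tk}v_t$ and the cross-edge $w^{ij}_{tk}v_{t^*}$ when $v_t,v_{t^*}\in N_G[v_j]$) produce triangles back to $V^*$.

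If $|V(C)\cap V^*|\leq 1$, then since distinct ``bundles'' $\bigl(\{u_{ij}\}\cup\bigcup_{v_t\in N_G[v_j]} W^{ij}_{v_t}\bigr)$ share no vertices or edges with each other outside of $V^*$ (there are no edges between $U_i$ and $U_{i'}$ in $H-X$, nor between $W^{ij}_{v_t}$ and $W^{i'j'}_{v_{t'}}$ when $(i,j)\neq(i',j')$), the cycle $C$ must live inside a single bundle together with at most one $V^*$ attachment vertex. Inside such a bundle, I would verify chordality by the ladder structure: each $2\times 2$ face of the ladder between $W^{ij}_{v_t}$ and $W^{ij}_{v_{t^*}}$ contains both the vertical rung $w^{ij}_{tp}w^{ij}_{t^*p}$ and the diagonal $w^{ij}_{tq}w^{ij}_{t^*(q+1)}$, so every $4$-cycle on these vertices has a chord; and when three or more $W$-strands meet at $u_{ij}$, the pairwise ladder edges between their first vertices give a clique at level $1$, which triangulates any short cycle through $u_{ij}$.

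The main obstacle I anticipate is the case-by-case bookkeeping when a candidate induced cycle weaves between two strands of the same bundle and returns through $V^*$, since one must use both the ``extra'' edges $w^{ij}_{tk}v_{t^*}$ from the ladder step and the cliqueness of $V^*$ simultaneously to locate a chord. The strategy there is to observe that whenever $C$ uses two distinct endpoint-style edges incident to $V^*$, either the two $V^*$-endpoints coincide (making $C$ shorter than claimed) or they are adjacent in $V^*$, yielding the chord; and whenever $C$ uses only one $V^*$-endpoint, it is confined to a single ladder, which is chordal by (iii) above.
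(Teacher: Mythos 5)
Your proposal is correct and follows essentially the same route as the paper's proof: both take an arbitrary induced cycle of length at least four, use the fact that $V^*$ is a clique to restrict how the cycle can meet $V^*$, observe that the ladder bundles attached to distinct $u_{ij}$ intersect only in $V^*$ so the cycle is confined to a single bundle, and then extract a chord from the rung-and-diagonal edges of that ladder. Your explicit connected-component decomposition and the case split on $|V(C)\cap V^*|$ are just a cleaner organization of the same argument.
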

\begin{proof}
	Let $ C $ be a cycle of length at least four in $  H \setminus X$. As $ \{v_1, v_2, \ldots v_n\} $ forms a clique in $ G $ so $ C $ can have at most two vertices from  $ V^* $. Now in each component of $ H \setminus (X \cup V^*) $ has at most one vertex from $ U = \cup_{i=1}^k U_i $. As $ V^* $ induces a clique, so there is no induced cycle having more than one vertices from $ U $. Let $ u_i^j \in U_j$ be an arbitrary vertex in $ C $, by our construction neighbors of $u_{ij}$ are vertices of type $w^{ij}_{t1}$, which is first vertex in $u_{ij}$ to $v_t\in N_G[v_j]$ path (step 5) or of type  $w^{i'j}_{t1}$, which is first vertex in $u_{i'j}$ to $v_t\in N_G[v_j]$ path (step 7). Observe that by our construction of ladders in step 6 and 7, every pair of vertices in neighbourhood of $u_i^j$ in $H$ is adjacent (such edges are added in $E(H)$ by $\hat{Y}$ type sets). 	Now let,  $ w^{ij}_{tt'} $ be one of the nearest neighbour (say distance $ t' $) of $ u_i^j $ in $ C $. If we look at the neighbour of $ w^{ij}_{tt'} $ in $ C $, as $ C $ is a cycle of length at least 4 so the only choice is $ w^{ij}_{t^*t'}$ and $w^{ij}_{t(t'+1)}$.  Now $ w^{ij}_{tt'} $ is one neighbour of $ w^{ij}_{t^*t'} $ in $ C $, let $ w $ be the other neighbour of $ w^{ij}_{t^*t'} $, clearly $ d_{H}(u_i^j, w)= t'  $ or $ t'+1 $. In any of the cases, as our construction there is an edge between $ w^{ij}_{tt'} $ and $ w $. That leads to a contradiction. So, there is no induced cycle of length at least four in $  H \setminus X$.  Hence, we have that $H$ is a chordal graph.
\end{proof}

%
%


\begin{figure}[ht!]
	\centering
	\includegraphics[width=.8\textwidth]{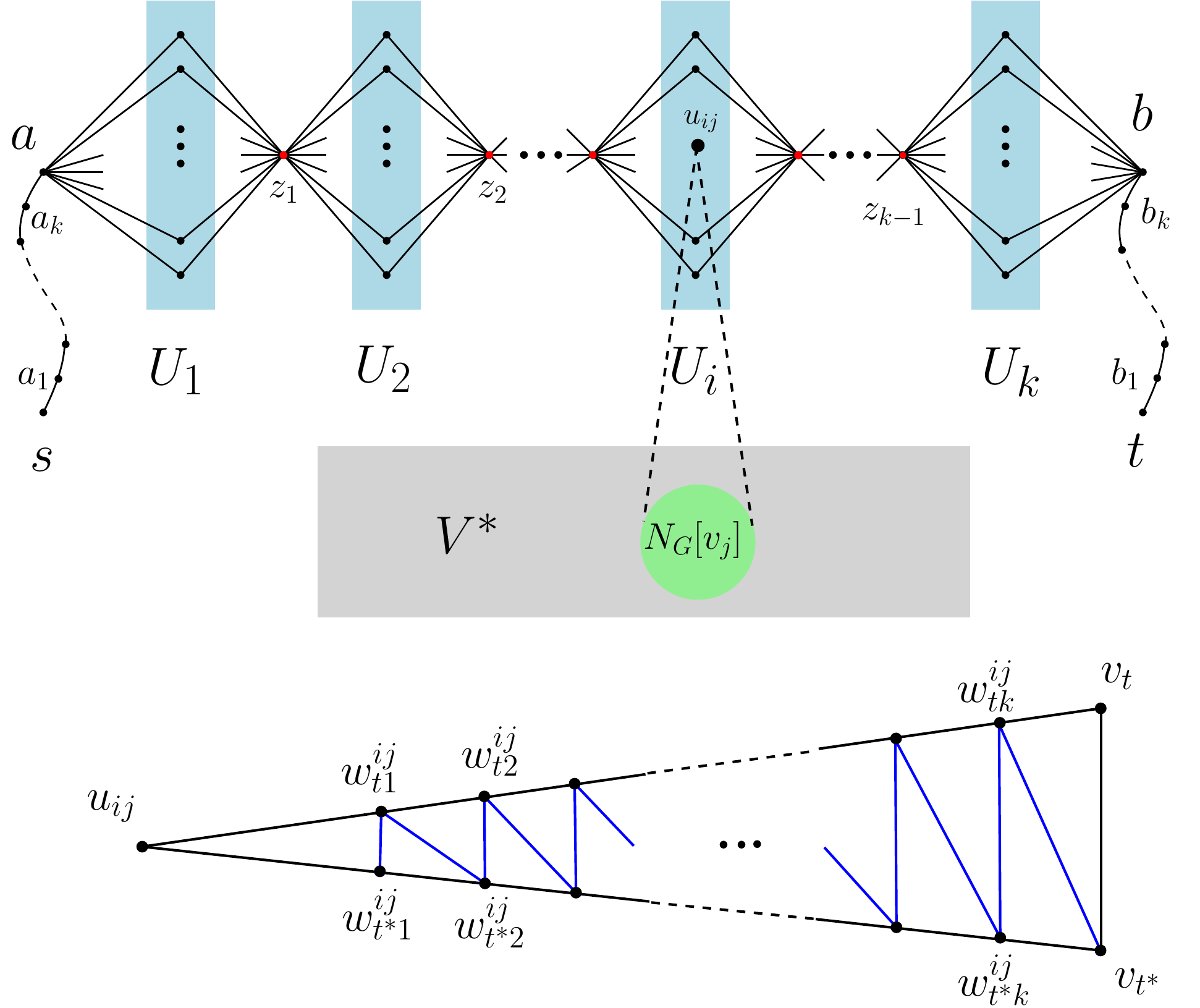}
	\caption{Reduction from \ds to \mespcvd. Illustration to the proof of  \cref{thm:hardness}.}
	\label{fig:hardness}
\end{figure}

%
%
%
%
\begin{claim}	
	$G $ has a dominating set of size at most $ k $ if and only if $ H $ has a shortest path $ P $ with eccentricity at most $ (k+1)$.
\end{claim}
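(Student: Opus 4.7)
The plan is to prove both directions by exploiting the rigid corridor structure of $H$: the only length-$(k+1)$ induced paths in $H$ are the $s$--$a$ and $b$--$t$ corridors from step~4, the $u_{ij}$--$v_t$ corridors from step~6, and paths internal to the ladders of step~7. This geometry pins down the shape of any low-eccentricity shortest path almost completely, and lets us read off a dominating set.

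\textbf{Forward direction.} Given a dominating set $D = \{v_{j_1}, \ldots, v_{j_k}\}$ of $G$ of size $k$, I would exhibit the $s$--$t$ path
\[
P = s,\, a_1,\, \ldots,\, a_k,\, a,\, u_{1,j_1},\, z_1,\, u_{2,j_2},\, z_2,\, \ldots,\, z_{k-1},\, u_{k,j_k},\, b,\, b_k,\, \ldots,\, b_1,\, t
\]
of length $4k+2$. First I would check that $P$ is a shortest $s$--$t$ path in $H$: any $s$--$t$ walk is forced through the $s$--$a$ corridor, across at least $2k$ edges between $U_1$ and $U_k$ via the $z_i$'s, and through the $b$--$t$ corridor. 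For the eccentricity bound $k+1$: vertices on $P$ are trivially covered; each unused $u_{i,j}$ with $j \neq j_i$ sits at distance $1$ from an adjacent $z$-vertex or from $a$ or $b$; any vertex of a corridor $W^{i'j'}_{v_t}$ is at distance $\leq k$ from $u_{i'j'}$, which lies within $1$ of $P$; and the critical case $v_j \in V^*$ is handled by choosing $v_{j_i} \in D$ with $v_j \in N_G[v_{j_i}]$, so the step-6 corridor from $u_{i,j_i} \in P$ to $v_j$ certifies $d_H(v_j, P) \leq k+1$.

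\textbf{Backward direction.} Conversely, suppose $P$ is a shortest path in $H$ with eccentricity at most $k+1$. I would proceed in three steps. (i) The endpoints of $P$ must be $s$ and $t$: the only vertices within distance $k+1$ of $s$ lie in the $s$--$a$ corridor, and symmetrically for $t$, so $P$ must meet both corridors; combined with $d_H(s,t) = 4k+2$ and the fact that $P$ is a shortest path, this forces $s$ and $t$ to be its endpoints. (ii) Since $P$ has length exactly $4k+2$, it traverses both corridors in full, and between $a$ and $b$ it must have the form $a,\, u_{1,\sigma(1)},\, z_1,\, u_{2,\sigma(2)},\, z_2,\, \ldots,\, z_{k-1},\, u_{k,\sigma(k)},\, b$ for some $\sigma(1),\ldots,\sigma(k) \in [n]$, as any other $a$--$b$ route has strictly greater length. (iii) Setting $D = \{v_{\sigma(1)},\ldots,v_{\sigma(k)}\}$, I would argue that $D$ dominates $G$: for each $v_j \in V(G)$, the only on-$P$ vertices within distance $k+1$ of $v_j \in V^*$ are those $u_{i,\sigma(i)}$ for which $v_j \in N_G[v_{\sigma(i)}]$, via the step-6 corridor from $u_{i,\sigma(i)}$ to $v_j$; hence some $v_{\sigma(i)} \in D$ dominates $v_j$.

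\textbf{Main obstacle.} The delicate part will be step (iii): ruling out that $d_H(v_j, P) \leq k+1$ could be certified by a route not of this canonical form. This requires a careful distance analysis showing that (a) the corridor lengths of exactly $k+1$ leave no slack for shortcuts that bypass the $N_G[v_j]$ constraint, (b) the ladder edges added in step~7 only provide shortcuts within a single $u_{ij}$'s own neighbourhood gadget, so they only certify domination by true $G$-neighbours of $v_j$, and (c) no vertex of the $s$--$a$ or $b$--$t$ corridor, and no other $V^*$-vertex reached via the clique on $V^*$, lies within distance $k+1$ of $v_j$ except through a step-6 corridor whose source $u_{i,\sigma(i)}$ corresponds to a dominator of $v_j$. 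Pushing the rigidity of the construction through these three sub-cases is where the main effort will lie.
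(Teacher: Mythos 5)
Your overall strategy coincides with the paper's: the forward direction builds the path $a,u_{1,j_1},z_1,\dots,z_{k-1},u_{k,j_k},b$ from the dominating set, and the backward direction pins the $a$--$b$ segment of any low-eccentricity shortest path to the canonical form $a,u_{1,\sigma(1)},z_1,\dots,z_{k-1},u_{k,\sigma(k)},b$ (using $d_H(a,b)=2k$ and the fact that any $a$--$b$ route through $V^*$ is longer) and reads off $D=\{v_{\sigma(1)},\dots,v_{\sigma(k)}\}$. Your identification of the remaining work in step (iii) --- that the only $P$-vertices within distance $k+1$ of a vertex $v_j\in V^*$ are those $u_{i,\sigma(i)}$ with $v_j\in N_G[v_{\sigma(i)}]$, and that the ladder edges of step~7 do not create shortcuts below $k+1$ --- is exactly the point the paper's own (rather terse) argument also rests on.

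There is, however, one concrete false step in your backward direction: claim (i), that the endpoints of $P$ must be $s$ and $t$. Covering $s$ only forces $P$ to contain some vertex of $\{s\}\cup A\cup\{a\}$, and since the $A$-corridor is a pendant path whose only exit is $a$, what actually follows is that $P$ contains $a$ (and symmetrically $b$); it need not reach $s$ or $t$ at all. Indeed the paper's own forward-direction witness is the length-$2k$ path with endpoints $a$ and $b$, which has eccentricity $k+1$ (it covers $s$ and $t$ at distance exactly $k+1$) --- a direct counterexample to your claim (i). The error is not fatal, because steps (ii) and (iii) only use the structure of the $a$--$b$ segment, which is forced regardless of whether $P$ extends into the corridors; but the argument should be restated as ``$P$ contains $a$ and $b$'' rather than ``$P$ has endpoints $s$ and $t$,'' and step (ii) should not assume $P$ has length $4k+2$.
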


\begin{proof}
	In the forward direction, suppose that the graph $ G $ has a dominating set $ D$ of size $ k $. Let $D=\{v_{t_1}, v_{t_2},\dots, v_{t_k}\}$ and $D^*=\{a,u_{1t_1},z_1,u_{2t_2},z_2,\dots, z_{k-1},u_{kt_k}, b\}$, where $u_{it_i}$ is the $i$th representative of vertex $t_i$ in set $U_i$ added in step 2 of our construction of $H$. Observe that $P=H[D^*]$ is an induced path by our construction. Since $ D $ is a dominating set in $ G $, for each vertex $v_{t_i}\in D$, we have that each vertex $ v_{t} \in N_G[v_i] $ is at distance $ (k+1) $ from $ u_{it_i}$ in $ H $. By construction, for the remaining vertices in $ H $, there is a shortest path from $P$ of length at most $ (k+1) $.

	For the backward direction, note that each shortest path that does not contain the vertices $ a $ or $ b $ have eccentricity more than $ k+1 $, as distance from $ s $ to $ a $ (resp, $ t $ to $ b $) is exactly $ k+1 $. Moreover, as $ d_H(a,b)= 2k $ and any path between $ a $ and $ b $ passing through $ V^*$ exceeds the value $ 2k $ so each shortest path must intersects all set $ U_i$ and pass through all vertices $ z_{i'} $ where $ i\in [k] $ and $ i'\in [k-1] $. Suppose that there exists a shortest path $ P $ with eccentricity $ k+1 $ in $H$. That means for every vertex $ v \in H $ there exists a vertex $ u \in P$ such that $ d_H(u,v) \leq (k+1) $. Let $ v $ be an arbitrary vertex in $ V^* $, and there exists a vertex $ u_{ij}\in P $ such that $ d (v, u_{ij})=k+1 $. This implies, $ v \in N_G[v_j] $. As we choose $ v $ arbitrarily, so the set $ D=\{ v_j \colon u_{ij}\in V(P)\cap U_i, i\in [k]\} $ is a dominating set for $ G $ with size at most $ k $. This completes the proof.
\end{proof}
This completes the proof of \cref{thm:hardness}. \qed

	\bibliography{bibfile.bib}

\begin{thebibliography}{10}

\bibitem{birmele2016minimum}
Etienne Birmel{\'e}, Fabien~de Montgolfier, and L{\'e}o Planche.
\newblock {Minimum eccentricity shortest path problem: An approximation
  algorithm and relation with the $ k $-laminarity problem}.
\newblock In {\em International Conference on Combinatorial Optimization and
  Applications}, pages 216--229. Springer, 2016.

\bibitem{cao2016chordal}
Yixin Cao and D{\'a}niel Marx.
\newblock Chordal editing is fixed-parameter tractable.
\newblock {\em Algorithmica}, 75(1):118--137, 2016.

\bibitem{cygan2015parameterized}
Marek Cygan, Fedor~V Fomin, {\L}ukasz Kowalik, Daniel Lokshtanov, D{\'a}niel
  Marx, Marcin Pilipczuk, Micha{\l} Pilipczuk, and Saket Saurabh.
\newblock {\em Parameterized algorithms}, volume~5.
\newblock Springer, 2015.

\bibitem{cygan2013split}
Marek Cygan and Marcin Pilipczuk.
\newblock Split vertex deletion meets vertex cover: new fixed-parameter and
  exact exponential-time algorithms.
\newblock {\em Information Processing Letters}, 113(5-6):179--182, 2013.

\bibitem{diestel2005graph}
Reinhard Diestel.
\newblock Graph theory.
\newblock {\em Graduate texts in mathematics}, 173, 2017.

\bibitem{downey2012parameterized}
Rodney~G Downey and Michael~R Fellows.
\newblock {\em Fundamentals of parameterized complexity}, volume~4.
\newblock Springer, 2013.

\bibitem{dragan2015minimum}
Feodor~F Dragan and Arne Leitert.
\newblock Minimum eccentricity shortest paths in some structured graph classes.
\newblock In {\em International Workshop on Graph-Theoretic Concepts in
  Computer Science}, pages 189--202. Springer, 2015.

\bibitem{dragan2017minimum}
Feodor~F Dragan and Arne Leitert.
\newblock On the minimum eccentricity shortest path problem.
\newblock {\em Theoretical Computer Science}, 694:66--78, 2017.

\bibitem{faudree2017minimum}
Ralph~J Faudree, Ronald~J Gould, Michael~S Jacobson, and Douglas~B West.
\newblock Minimum degree and dominating paths.
\newblock {\em Journal of Graph Theory}, 84(2):202--213, 2017.

\bibitem{kuvcera2021minimum}
Martin Ku{\v{c}}era and Ond{\v{r}}ej Such{\`y}.
\newblock {Minimum Eccentricity Shortest Path Problem with Respect to
  Structural Parameters}.
\newblock In {\em International Workshop on Combinatorial Algorithms}, pages
  442--455. Springer, 2021.

\bibitem{volkel2016read}
Finn V{\"o}lkel, Eric Bapteste, Michel Habib, Philippe Lopez, and Chloe
  Vigliotti.
\newblock Read networks and k-laminar graphs.
\newblock {\em arXiv preprint arXiv:1603.01179}, 2016.

\end{thebibliography}
	\appendix

\end{document}